\newtheorem{proposition}{Proposition}
\newtheorem{theorem}{Theorem}[section]
\numberwithin{equation}{section}
\let\a=\alpha \let\b=\beta    \let\g=\gamma     \let\d=\delta     \let\e=\varepsilon
\let\z=\zeta       \let\th=\theta \let\k=\kappa     \let\l=\lambda
    \let\n=\nu      \let\x=\xi        \let\p=\pi        \let\r=\rho
\let\s=\sigma \let\t=\tau     \let\f=\varphi       \let\c=\chi
   \let\o=\omega     
\let\G=\Gamma \let\D=\Delta       \let\L=\Lambda    
\let\O=\Omega \let\Y=\Upsilon
\def\ss{{\underline\s}}
\def\HHH{{\cal H}}\def\VV{{\cal V}}\def\RR{{\cal R}}\def\LL{{\cal L}}\def\NN{{\cal N}}\def\BB{{\cal B}}\def\DD{{\cal D}}
\def\CC{{\cal C}}\def\GG{{\cal G}}\def\PP{{\cal P}}\def\EE{{\cal E}}\def\TT{{\cal T}}\def\SS{{\cal S}}
\def\RRR{{\mathbb R}}\def\ZZZ{{\mathbb Z}}\def\CCC{{\mathbb C}}
\let\dpr=\partial
\let\io=\infty
\def\media#1{{\left\langle#1\right\rangle}}
\def\wt#1{\widetilde{#1}}
\def\diam{{\rm diam}}
\def\lis{\overline}
\def\V#1{{\bf #1}}
\def\xx{{\bf x}}
\def\yy{{\bf y}}\def\kk{{\bf k}}\def\zz{{\bf z}}\def\nn{{\bf n}}\def\uu{{\bf u}}\def\dd{{\bf d}}\def\pp{{\bf p}}
\def\bsA{{\boldsymbol A}}
\def\be{\begin{equation}}
\def\ee{\end{equation}}
\def\bea{\begin{eqnarray}}\def\eea{\end{eqnarray}}
\def\pref#1{(\ref{#1})}
\DeclareMathOperator{\Pf}{Pf}
\newcommand{\piecewise}[1]{\left\{\begin{array}{ll} #1 \end{array}\right.}
\DeclareMathOperator{\black}{bl}
\DeclareMathOperator{\supp}{supp}
\def\bml{\begin{multline}}
\def\eml{\end{multline}}
\tikzset{vertex/.style={circle,fill=black,inner sep=2pt},
bigvertex/.style={circle,fill=black,inner sep=4pt},
specialEP/.style={rectangle,fill=white,draw,inner sep=3pt},
nuEP/.style={circle,fill=white,draw, inner sep=2pt},
linelabel/.style={sloped,above,very near start, inner sep=1pt,execute at begin node=$\scriptstyle,execute at end node=$},
baseline=(current  bounding  box.center),doubled/.style={double distance= 1pt,line width=1.5pt}
}
\begin{document}
\title{The scaling limit of the energy correlations in non integrable Ising models}

\author{Alessandro Giuliani\\
\small  Universit\`a di Roma Tre,\\
\small L.go S. L. Murialdo 1, 00146 Roma - Italy
\and Rafael L. Greenblatt\\
\small  Universit\`a di Roma Tre,\\
\small L.go S. L. Murialdo 1, 00146 Roma - Italy
\and
 Vieri Mastropietro\\
 \small Universit\`a di Roma Tor Vergata,\\
 \small V.le della Ricerca
Scientifica, 00133 Roma - Italy }

\maketitle
\begin{center}{\it Dedicated to Elliott Lieb on the occasion of his 80th birthday}\end{center}

\begin{abstract}
We obtain an explicit expression for the multipoint energy correlations of a
non solvable two-dimensional Ising models with nearest
neighbor ferromagnetic interactions plus a weak finite range
interaction of strength $\l$, in a scaling limit in which we send the lattice
spacing to zero and the temperature to the critical one. Our
analysis is based on an exact mapping of the model into an interacting lattice
fermionic theory, which generalizes the one originally used by Schultz, Mattis and Lieb for the
nearest neighbor Ising model. The interacting model is then analyzed by a multiscale method first 
proposed by Pinson and Spencer.
If the lattice spacing is finite,
then the correlations cannot be computed in closed form: rather, they 
are expressed in terms of infinite, convergent, power series in $\l$.
In the scaling limit, these infinite expansions radically simplify
and reduce to the limiting energy correlations of the integrable Ising model, up to a finite renormalization 
of the parameters. Explicit bounds on the speed of convergence to the scaling limit 
are derived.
\end{abstract}

\tableofcontents

\section{Introduction}\label{sec1}

\subsection{The model and the main results}
The two-dimensional (2D) Ising model is an oversimplified
description of a planar magnet in which the dipoles can point only
in two directions and only the interactions among neighboring spins are
considered. Remarkably, it can be solved explicitly, as first
shown by Onsager \cite{On44}: the free energy, the magnetization and several correlation 
functions in the absence of an external field
can be computed in closed form \cite{Ka49, KO49, Ya52,MW}. 

From a physical point of view, there is no special reason to consider only nearest neighbor (n.n.) interactions:
it is natural (and more realistic) to consider generalizations of the n.n. Ising model where the spins
interact via a finite range  interaction of the form:
\be H_M=-J\sum_{\xx\in
a\mathbb{Z}_{M}^{2}}\,\sum_{j=1,2}\sigma_{\xx}\s_{\xx+a\hat{\bf
e}_j}-\l\sum_{\{\xx,\yy\}}\s_\xx v(\xx-\yy)\s_\yy\equiv H^{(0)}_M +\l W_M\;,\label{2.1}\ee
where $J$ is a positive constant, $\mathbb{Z}_{M}^{2}\subset \ZZZ^2$
is a finite square box of side $M$ with periodic boundary
conditions, $a$ is the lattice spacing, $\s_\xx\in\{\pm1\}$, and
$\hat{\bf e}_j$ are the two unit coordinate vectors on $\ZZZ^2$.
The sum in the second term of Eq.\pref{2.1}  is over all unordered
pairs of sites in $\L^a_L:=a{\mathbb Z}^2_M$; the interaction
potential $v(\xx-\yy)$ is rotation invariant and has finite range,
namely: $v(\xx-\yy)=0$, $\forall |\xx|>R_0:=a M_0$, for a suitable
positive integer $M_0$. The standard n.n.\ case corresponds to $\l=0$.

Despite the fact that the models with Hamiltonian $H^{(0)}_M$ or
$H_M$ look ``physically equivalent", Onsager's exact solution is crucially based 
on the assumption that $\l=0$: therefore, it is natural to ask how much of it survives the presence of the
perturbation $\l W_M$.

The thermodynamical properties of the system are obtained by 
averaging with respect to the Gibbs measure at inverse temperature $\b$: given an 
observable $F(\ss)$ (here $\ss=\{\s_\xx\}_{\xx\in\L^a_L}$), its statistical average 
is defined as
\be \media{F}_{\b,L} = \frac{\sum_{\ss\in\O_M}e^{-\b H_M(\ss)} F(\ss)}{\sum_{\ss\in\O_M} e^{-\b H_M(\ss)}} \ee
where $\O_M=\{\pm1\}^{\L^a_L}$ is the spin configuration space. In particular, the averages $\media{\s_{\xx_1}\cdots\s_{\xx_n}}_{\b,L}$ are the 
multipoint spin correlation functions. Taking $L\to\io$ at $\{\xx_1,\ldots,
\xx_n\}$ fixed produces the infinite volume correlation functions:
\be \media{\s_{\xx_1}\cdots\s_{\xx_n}}_{\b}=\lim_{L\to\infty}\media{\s_{\xx_1}\cdots\s_{\xx_n}}_{\b,L}\ee
the collection of which characterizes the infinite volume Gibbs state. Two key thermodynamic quantities are the
free energy and the specific heat, defined as follows:
the free energy per site in the thermodynamic limit is 
\be f_\b=-\lim_{M\to\io}\frac1{\b M^2}\log Z\;,\ee
where $Z=\sum_{\ss\in\O_M}\exp\{-\b H_M(\ss)\}$
is the partition function. The specific heat is
 $C_v=-\b^2\frac{\partial^2(\b f_{\b})}{ \partial \b^2}$,
and can be obtained by summing over the energy density correlation functions: if we define the 
energy density operator as $\e_{\xx,j}:=a^{-1}\sigma_{\xx}\s_{\xx+a\hat{\bf e}_j}$,
then 
\be C_v=(\b J)^2\lim_{L\to\infty}\frac1{L^2}\sum_{j,j'}\int\limits_{\L^a_L\times\L^a_L}\!\!d\xx\, d\yy 
\media{\e_{\xx,j};\e_{\yy,j'}}^T_{\b,L}\;,\ee
where $\int_{\L^a_L} d\xx$ is a shorthand for $a^2\sum_{\xx\in\L^a_L}$ and $\media{\cdot;\cdot}^T_{\b,L}$
indicates the truncated expectation: $\media{\e_{\xx,j};\e_{\yy,j'}}^T_{\b,L}=
\media{\e_{\xx,j}\e_{\yy,j'}}_{\b,L}-
\media{\e_{\xx,j_1}}_{\b,L}\media{\e_{\yy,j_2}}_{\b,L}$. Note that the thermodynamic functions
like the free energy and the specific heat are {\it independent} of 
the lattice spacing $a$ while the correlation functions explicitly (if straightforwardly) depend on it.

If $\l=0$, Onsager's exact solution shows that the free energy is analytic for $\b\neq\b_c$,
where $\b_c=J^{-1}\tanh^{-1}(\sqrt2-1)$ is the inverse critical temperature.
For $\b<\b_c$ there is a unique infinite volume Gibbs state (the ``high temperature state") with zero 
spontaneous magnetization (i.e. $\media{\s_\xx}_\b=0$) and characterized by exponential decay
of the multipoint spin correlations as the separation between spins is sent to infinity. For $\b>\b_c$ 
there are exactly two pure Gibbs states $\media{\cdot}_\b^\pm$, which generate all possible infinite volume 
Gibbs states by convex combinations \cite{Ai80,Hi} (the ``low temperature states"): 
they have a non zero spontaneous magnetization (i.e. 
$\media{\s_\xx}_\b^\pm=\pm m^*(\b)\neq 0$)  and are characterized by exponential decay
of the multipoint {\it truncated} spin correlations as the separation between spins is sent to infinity. 
At the critical point, the rate of the exponential decay goes to zero proportionally to $|\b-\b_c|$ and the 
correlation functions decay polynomially to zero, with specific {\it critical exponents}; e.g.,
 the spin-spin as $\media{\s_\xx\s_\yy}\sim
({\rm const.})|\xx-\yy|^{-1/4}$, while
the energy-energy correlation decays as $\media{\e_{\xx,j}\e_{\yy,j'}}\sim
({\rm const.})|\xx-\yy|^{-2}$, asymptotically as $|\xx-\yy|\to\infty$. Correspondingly, the specific heat diverges 
logarithmically $C_v\sim({\rm const.})\log|\b-\b_c|$ at the critical point. We remark that, even after 
having understood the integrability structure of the model, underlying the Onsager's computation of the free energy,
the determination of the spin-spin critical exponent is very involved: $\media{\s_\xx\s_\V0}_\b$ 
is expressed as a determinant of a large matrix, of  size increasing with the distance between
spins, whose asymptotic behavior along special directions can be obtained via the use of Szego's lemma
\cite{MW} or via the use of the analyticity structure of a set of exact
quadratic difference equations that $\media{\s_\xx\s_\V0}_\b$
satisfies exactly at the lattice level \cite{MWP}.

On the other hand, the computation of the multipoint energy
correlations at $\l=0$ is a relatively simple matter. Defining the truncated energy correlations as
\be \media{\e_{\xx_1,j_1};\cdots;\e_{\xx_m,j_m}}^T_{\b,L}=
a^{-2n}\frac{\dpr^n}{\dpr A_{\xx_1,j_1}
\cdots\dpr A_{\xx_n,j_n}}\log Z({\boldsymbol A})\Big|_{{\bsA}={\bf 0}}\;,\label{2.2uk}\ee
with
\be Z({\boldsymbol A})=\sum_{\ss\in\O_M}\exp\{-\b H_M(\ss)+\sum_{j=1}^2\int_{\L^a_L}\e_{\xx,j} A_{\xx,j}\},
\label{1.7gh}\ee
it turns out that at $\l=0$ the correlations $ \media{\e_{\xx_1,j_1};\cdots;\e_{\xx_m,j_m}}^T_{\b,L}$ 
can be written in closed form for all finite $a,L$. Their expression further simplifies in the thermodynamic limit 
$L\to\infty$ 
and in a scaling limit in
which $a$ and $\b-\b_c$ are simultaneously sent to zero: defining $t_c=\tanh\b_c J=\sqrt2-1$, $t=\tanh\b(a) J$ and 
fixing $\b(a)$ in such a way that
$$\frac{t-t_c}{t_c}=\frac{a\s(a)}2\;,$$
with $\s(a)\neq 0$
for $a\neq 0$ and $\lim_{a\to 0}\s(a)=m^*\in\mathbb R$, then, for
all the $m$-tuples of non-coinciding points
$\xx_1,\ldots,\xx_m\in\RRR^2$, $m\ge 2$, $\xx_i\neq\xx_j$ $\forall
i\neq j$,
\bea &&
 \lim_{a\to 0}\lim_{L\to\io} \media{\e_{\xx_1,j_1};\cdots;\e_{\xx_m,j_m}}^T_{\b(a),L}\big|_{\l=0}
 =\label{1.maina}\\
 &&=-\frac{1}{2m}\Big(\frac{i}\p\Big)^m\sum_{\p\in\Pi_{\{1,\ldots,m\}}}\sum_{\o_1,\ldots,\o_m}\Big[\prod_{k=1}^{m}\o_k
{\mathfrak
g}^0_{\o_k,-\o_{k+1}}(\xx_{\p(k)}-\xx_{\p(k+1)})\Big]\nonumber\eea
where: (1) $j_k\in\{1,2\}$ and the limit is independent of the
choice of these labels; (2) $\Pi_{\{1,\ldots,n\}}$ is the set of
permutations over $\{1,\ldots,n\}$, with $\p(n+1)$ interpreted as
equal to $\p(1)$, and $\o_k\in\{\pm\}$, with $\o_{n+1}$
interpreted as equal to $\o_1$; (3) $\mathfrak g^0 (\xx)$ is given
by
\begin{equation}\label{mai}
\mathfrak g^0 (\xx) =\int \frac{d\kk}{2\p}\,\frac{e^{-i\kk\,\xx}}{
\kk^2+(m^*)^2}
\begin{pmatrix} ik_1+k_2 & im^*\\-i m^*&ik_1-k_2\end{pmatrix}\;,\ee
which is a well known object in Quantum Field Theory (QFT): it verifies 
the {\it Dirac equation} in $1+1$ dimensions, which describes a quantum
relativistic fermion
\be \left(\begin{array}{cc} \partial_1+i\partial_2 & im^*\\ -im^*
&
\partial_1-i\partial_2
\end{array}\right)\mathfrak g^0(\xx)=0\ee
The derivation of Eq.(\ref{1.maina}) will be reviewed below. It is based on a correspondence 
between the energy operator of the Ising model at the site $\xx$ 
and the mass operator $(i/\p)\psi_{\xx,+}\psi_{\xx,-}$ of a QFT of {\it non-interacting Majorana fermions},
with propagator $\mathbb E(\psi_{\xx,\o}\psi_{\yy,\o'})=\mathfrak g^0_{\o,\o'}(\xx-\yy)$.
The r.h.s.\ of Eq.(\ref{1.maina}) can be recognized as the multipoint truncated 
fermionic correlation $(i/\p)^m
\mathbb E^T(\psi_{\xx_1,+}\psi_{\xx_1,-};\cdots; \psi_{\xx_m,+}\psi_{\xx_m,-})$, \
where $\mathbb E^T$ is computed via the fermionic Wick rule and is given by the sum over 
the pairings (``contractions") of the $\psi$ fields; each pairing is equal to the product of the propagators 
associated with all the contracted pairs, times the sign of the permutation needed to bring the contracted 
fermionic fields next to each other. The outcome can be naturally represented in terms of Feynman diagrams
and truncation means that only connected diagrams are considered; in graphical terms, the r.h.s.\ of 
Eq.(\ref{1.maina}) can be thought of as a sum over simple {\it loop graphs}, see Fig.\ref{fig_circle0}.

\tikzset{ZspecialEP/.style={rectangle,fill=black,draw,inner
sep=3pt}}
\begin{figure}
\centering
$\media{\e_{\xx_1,j_1};\cdots;\e_{\xx_m,j_m}}^T_{\b(a),L}\big|_{\l=0} \to $
\begin{tikzpicture}[decoration=snake,x=0.5cm,y=0.5cm]
\node[specialEP] (x1) at (0:2) {}; \node[specialEP] (x2) at (60:2)
{}; \node[specialEP] (x3) at (120:2) {}; \node[specialEP] (x4) at
(180:2) {}; \node[specialEP] (x5) at (240:2) {}; \node[specialEP]
(x6) at (300:2) {}; \draw (x1) .. controls +(90:0.75)
and+(330:0.75) .. (x2) ; \draw (x1) .. controls +(270:0.75)
and+(30:0.75).. (x6) ; \draw (x5) .. controls +(330:0.75)  and
+(210:0.75) .. (x6); \draw (x2) .. controls +(150:0.75) and
+(30:0.75) .. (x3) ; \draw (x3) .. controls +(210:0.75) and
+(90:0.75) .. (x4) ; \draw (x4) .. controls +(270:0.75) and
+(150:0.75) .. (x5); \draw[decorate] (x1) -- + (0:1.5);
\draw[decorate] (x2) -- +(60:1.5); \draw[decorate] (x3) --
+(120:1.5); \draw[decorate] (x4) -- + (180:1.5); \draw[decorate]
(x5) -- + (240:1.5); \draw[decorate] (x6) -- + (300:1.5);
\end{tikzpicture}
$+$ permutations
\caption{Graphical representation of Eq.(\ref{1.mpoint}) in the
case $m=6$.
The wavy lines represent the external fields $A_{\xx,j}$, which can be thought of as being associated to the 
vertices. These are labeled by the space coordinate $\xx$ of the pair of fields $\psi_{\xx,+}\psi_{\xx,-}$ ``exiting
from the vertex". The fields are contracted two by two: the result of the contraction is represented by a solid line
(labeled by the $\o,\o'$ indices of the two contracted $\psi$ fields) and is associated with a propagator 
$\mathfrak g^0$.} \label{fig_circle0}
\end{figure}
Note that the thermodynamic limit $L\to\io$ is performed at
$\b=\b(a)\not=\b_c$; after the thermodynamic limit $\b(a)$ is sent to $\b_c$ and, depending on the 
speed at which $\b(a)\to\b_c$, the resulting expression has a different {\it effective mass} $m^*$, which 
is a finite real number. In particular, in the special case $m^*=0$, the propagator reduces to 
\be \mathfrak
g^0_{\o\o'}(\xx)\big|_{m^*=0}=\frac{\d_{\o,\o'}}{x_1+i\o
x_2}\;.\label{1.11l}\ee
In the literature, the explicit expression Eq.(\ref{1.maina}) with $m^*=0$ is usually
referred to as the {\it critical} multipoint energy correlation {\it in the plane}, see e.g.\ \cite{DSZ}; 
the result is independent of the specific boundary conditions imposed on 
$\L^a_L$. On the contrary, if the lattice spacing $a$ is sent to $0$ directly at $\b=\b_c$,
with $L$ kept fixed, then one gets a different
expression, which is sensitive to the specific boundary conditions imposed on $\L^a_L$: 
e.g. if periodic 
boundary conditions are considered, the multipoint energy correlations tend to the so-called  
{\it critical correlations on the torus} \cite{DSZ}. \\

All these results rely on the exact solution of the n.n.\ Ising
model. As mentioned above, an exact expression
for the energy correlations is known also at finite $a$ and has a
form similar to Eq.(\ref{1.maina}), still expressed in terms of loop graphs
but with a different (and more involved) lattice propagator; in the scaling limit,
it simplifies and reduces to the nice expression Eq.(\ref{1.maina}), formally
coinciding with the correlations of non interacting Majorana
fermions. When $\l\not=0$, the situation is radically different;
there is {\it no exact solution} in that case, and in particular
no explicit expression for the $m$-point energy correlations is
known. If the temperature is well inside the high or low
temperature phase, one can pursue a perturbative approach 
based on cluster expansion methods, which allows one to
prove quantitatively that the behavior of the perturbed system is
close to the exactly solvable one, see e.g.\
\cite{MS67,Ru69,GMM73}. The difficult and subtle case is when the
system is close to or at the critical point. In this case, the
cluster expansion argument breaks down: the perturbation theory is
affected by infrared divergences related to the slow decay of
correlations, which may a priori change the critical exponents and
the nature of the critical point. Spencer \cite{S00} and Pinson
and Spencer \cite{PS} introduced a new point of view using a
generalization of the mapping between 2D Ising models and
fermionic systems, first discovered by Schultz, Mattis and Lieb
\cite{SML}. While the n.n.\ Ising model is equivalent to a system
of non-interacting fermions, when $\l\not=0$ the equivalence is
with a system of {\it interacting fermions}. By such mapping
(which is reproduced and extended to more general interactions than those in \cite{PS} in Section 
\ref{sec:Grassman} below) one immediately gets a perturbative expansion 
for the energy correlations in terms of Feynman graphs of arbitrary order, see the first line of Fig.\ref{fig_circle}; 
contrary to the
$\l=0$ case, now graphs with any number of loops appear.\\
\tikzset{ZspecialEP/.style={rectangle,fill=black,draw,inner
sep=3pt}}
\begin{figure}
\centering
\begin{tikzpicture}[decoration=snake,x=0.5cm,y=0.5cm]
\node[specialEP] (x1) at (0:2) {}; \node[specialEP] (x2) at (60:2)
{}; \node[specialEP] (x3) at (120:2) {}; \node[specialEP] (x4) at
(180:2) {}; \node[specialEP] (x5) at (240:2) {}; \node[specialEP]
(x6) at (300:2) {}; \draw (x1) .. controls +(90:0.75)
and+(330:0.75) .. (x2) ; \draw (x1) .. controls +(270:0.75)
and+(30:0.75).. (x6) ; \draw (x5) .. controls +(330:0.75)  and
+(210:0.75) .. (x6); \draw (x2) .. controls +(150:0.75) and
+(30:0.75) .. (x3) ; \draw (x3) .. controls +(210:0.75) and
+(90:0.75) .. (x4) ; \draw (x4) .. controls +(270:0.75) and
+(150:0.75) .. (x5); \draw[decorate] (x1) -- + (0:1.5);
\draw[decorate] (x2) -- +(60:1.5); \draw[decorate] (x3) --
+(120:1.5); \draw[decorate] (x4) -- + (180:1.5); \draw[decorate]
(x5) -- + (240:1.5); \draw[decorate] (x6) -- + (300:1.5);
\end{tikzpicture}
$+$
\begin{tikzpicture}[decoration=snake,x=0.5cm,y=0.5cm]
\node[specialEP] (x1) at (0:2) {}; \node[specialEP] (x2) at (60:2)
{}; \node[specialEP] (x3) at (120:2) {}; \node[specialEP] (x4) at
(180:2) {}; \node[specialEP] (x5) at (240:2) {}; \node[specialEP]
(x6) at (300:2) {}; \draw (x1) .. controls +(90:0.75)
and+(330:0.75) .. (x2) ; \draw (x5) .. controls +(330:0.75)  and
+(210:0.75) .. (x6); \draw (x2) .. controls +(150:0.75) and
+(30:0.75) .. (x3) ; \draw (x4) .. controls +(270:0.75) and
+(150:0.75) .. (x5); \node[vertex] (v1) {}; \draw (x4) .. controls
+(90:1) and +(195:1) .. (v1); \draw (x3) .. controls +(210:1) and
+(105:1) .. (v1); \draw (x1) .. controls +(270:1) and + (15:1) ..
(v1); \draw (x6) .. controls +(30:1) and +(285:1) .. (v1);
\draw[decorate] (x1) -- + (0:1.5); \draw[decorate] (x2) --
+(60:1.5); \draw[decorate] (x3) -- +(120:1.5); \draw[decorate]
(x4) -- + (180:1.5); \draw[decorate] (x5) -- + (240:1.5);
\draw[decorate] (x6) -- + (300:1.5);
\end{tikzpicture}
$+$
\begin{tikzpicture}[decoration=snake,x=0.5cm,y=0.5cm]
\node[specialEP] (x1) at (0:2) {}; \node[specialEP] (x2) at (60:2)
{}; \node[specialEP] (x3) at (120:2) {}; \node[specialEP] (x4) at
(180:2) {}; \node[specialEP] (x5) at (240:2) {}; \node[specialEP]
(x6) at (300:2) {}; \draw[decorate] (x1) -- + (0:1.5);
\draw[decorate] (x2) -- +(60:1.5); \draw[decorate] (x3) --
+(120:1.5); \draw[decorate] (x4) -- + (180:1.5); \draw[decorate]
(x5) -- + (240:1.5); \draw[decorate] (x6) -- + (300:1.5); \draw
(x1) .. controls +(90:0.75) and+(330:0.75) .. (x2) ; \draw (x5) ..
controls +(330:0.75)  and +(210:0.75) .. (x6); \draw (x2) ..
controls +(150:0.75) and +(30:0.75) .. (x3) ; \draw (x4) ..
controls +(270:0.75) and +(150:0.75) .. (x5); \node[vertex] (v1)
at (0,0.5) {}; \node[vertex] (v2) at (0,-0.5) {}; \draw (x4) ..
controls +(90:1) and +(195:1) .. (v2); \draw (x3) .. controls
+(210:1) and +(105:1) .. (v1); \draw (x1) .. controls +(270:1) and
+ (15:1) .. (v1); \draw (x6) .. controls +(30:1) and +(285:1) ..
(v2); \draw (v1) .. controls +(-0.5,-0.5) and +(-0.5,0.5) .. (v2);
\draw (v1) .. controls +(0.5,-0.5) and +(0.5,0.5) .. (v2);
\end{tikzpicture}
$+ \cdots$
\\
$\to$
\begin{tikzpicture}[decoration=snake,x=0.5cm,y=0.5cm]
\node[ZspecialEP] (x1) at (0:2) {}; \draw (x1) .. controls
+(90:0.75) and+(330:0.75) .. (x2) ; \draw (x1) .. controls
+(270:0.75) and+(30:0.75).. (x6) ; \draw (x5) .. controls
+(330:0.75)  and +(210:0.75) .. (x6); \node[ZspecialEP] (x2) at
(60:2) {}; \draw (x2) .. controls +(150:0.75) and +(30:0.75) ..
(x3) ; \node[ZspecialEP] (x3) at (120:2) {}; \draw (x3) ..
controls +(210:0.75) and +(90:0.75) .. (x4) ; \node[ZspecialEP]
(x4) at (180:2) {}; \draw (x4) .. controls +(270:0.75) and
+(150:0.75) .. (x5); \node[ZspecialEP] (x5) at (240:2) {};
\node[ZspecialEP] (x6) at (300:2) {}; \draw[decorate] (x1) -- +
(0:1.5); \draw[decorate] (x2) -- +(60:1.5); \draw[decorate] (x3)
-- +(120:1.5); \draw[decorate] (x4) -- + (180:1.5);
\draw[decorate] (x5) -- + (240:1.5); \draw[decorate] (x6) -- +
(300:1.5);
\end{tikzpicture}
\caption{Schematic summary of the renormalization of the
correlation functions in the scaling limit.  The symbol  \protect
\tikz[baseline=-0.4em]{ \protect \node [ZspecialEP] {};} denotes a dressed 
vertex, proportional to a renormalization
factor $\bar Z (\l)$.\label{fig_circle}}
\end{figure}
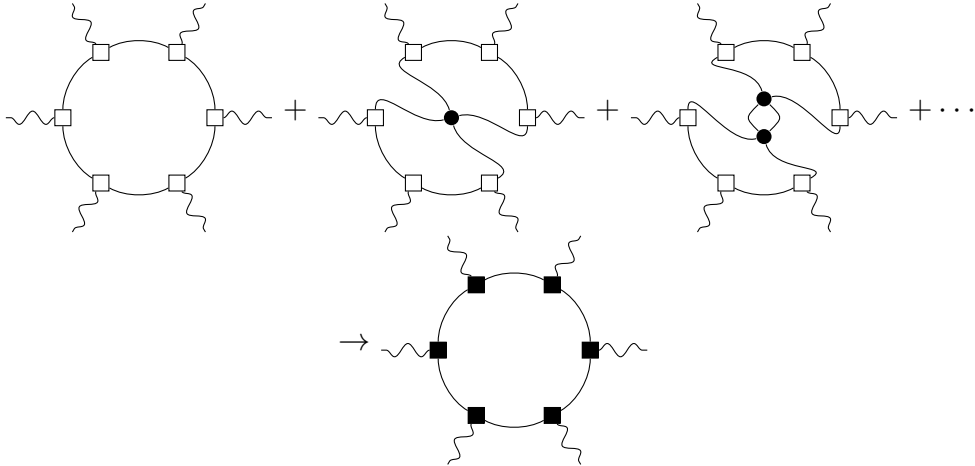

The aim of this paper is to control and compute the scaling limit 
of these energy correlations: this means to control and compute the ground state correlations 
of the underlying interacting fermionic theory, uniformly in the effective mass $m^*$. We show that the 
apparent infrared divergences at $m^*=0$ can be resummed 
by rigorous Renormalization Group methods: the outcome is a {\it
renormalized expansion} for the $m$-point energy correlations that is convergent for $\l$ small enough, 
as shown in \cite{PS} for the 2-point function. On top of it, and in addition to the estimates derived in 
\cite{PS}, we compute the {\it explicit exact expression} of the interacting multipoint energy correlations in the 
scaling limit and show that they coincide with an expression completely analogous to the one obtained  
for the integrable Ising model, up to a finite 
renormalization of the parameters. Graphically, the sum over infinite Feynman diagrams with an arbitrary number of
loops in the first line of Fig.\ref{fig_circle} can be shown to be {\it equal} to a sum over simple {\it dressed} 
loop diagrams, like the one in the second line of Fig.\ref{fig_circle}, 
up to subdominant corrections that vanish in the scaling limit. 
We stress that the use of fermionic methods is crucial to establish these results, which remained open problems 
for several years and appeared out of reach in terms of the original spin variables.
Our main result is summarized in the following theorem.

\begin{theorem}\label{mainprop}
There exists $\l_0>0$ such that if $|\l|\le \l_0$ the following is
true. There exists an analytic function of $\l$, called
$\b_c(\l)$, which is the critical temperature of the Ising model
Eq.(\ref{2.1}). Defining $t_c(\l)=\tanh\b_c(\l)J$, $t=\tanh\b(a)J$  and fixing $\b(a)$ in such a way that 
$$\frac{t-t_c(\l)}{t_c(\l)}=\frac{t_c(\l)}{t_c(0)}\frac{a\s(a)}{2}\;,$$
with $\s(a)\neq 0$ for $a\neq 0$ and
$\lim_{a\to 0}\s(a)=m^*\in\mathbb R$, then, for all the $m$-tuples
of non-coinciding points $\xx_1,\ldots,\xx_m\in\RRR^2$, $m\ge 2$,
$\xx_i\neq\xx_j$ $\forall i\neq j$, 
\bea &&
 \lim_{a\to 0}\lim_{L\to\io} \media{\e_{\xx_1,j_1};\cdots;\e_{\xx_m,j_m}}^T_{\b(a),L}
 =\label{1.main}\\
 &&=-\frac{1}{2m}\Big(\frac{i}\p\Big)^m\sum_{\p\in\Pi_{\{1,\ldots,m\}}}\sum_{\o_1,\ldots,\o_m}\Big[\prod_{k=1}^{m}\o_k
{\mathfrak
g}_{\o_k,-\o_{k+1}}(\xx_{\p(k)}-\xx_{\p(k+1)})\Big]\nonumber\eea
where: (1) $j_i\in\{1,2\}$ and the limit is independent of the
choice of these labels; (2) $\Pi_{\{1,\ldots,n\}}$ is the set of
permutations over $\{1,\ldots,n\}$, with $\p(n+1)$ interpreted as
equal to $\p(1)$, and $\o_k\in\{\pm\}$, with $\o_{n+1}$
interpreted as equal to $\o_1$; (3) the dressed propagator in the
scaling limit is
\begin{equation}
\mathfrak g (\xx) =\bar Z(\l) \int
\frac{d\kk}{2\p}\,\frac{e^{-i\kk\,\xx}}{ \kk^2+[Z^*(\l)m^*]^2}
\begin{pmatrix} ik_1+k_2 & iZ^*(\l)m^*\\-iZ^*(\l)m^*&ik_1-k_2\end{pmatrix}\;,\ee
where the renormalizations $\bar Z(\l), Z^*(\l)\in\mathbb R$ are
analytic functions of $\l$, analytically close to 1.
\end{theorem}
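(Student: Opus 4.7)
The plan is to leverage the exact mapping of the model to an interacting lattice Majorana fermion theory announced in the paper, and then control the resulting Grassmann integral by a rigorous Wilsonian renormalization group (RG) analysis, in the spirit of \cite{PS}. After the mapping, the generating functional $\log Z(\bsA)$ becomes a Grassmann integral in which the external source $A_{\xx,j}$ is coupled to a local bilinear $O_{\xx,j}$ in the fermionic fields, the leading part of which is proportional to the mass operator $\psi_{\xx,+}\psi_{\xx,-}$, while the quartic part of the effective action is of order $\l$.

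The first main step is a multiscale integration, from the ultraviolet scale $h=0$ (the lattice scale $a$) down to the infrared mass scale $h^*$ defined by $2^{h^*}\sim |m^*|$, and then all the way to $h=-\io$ if $m^*=0$. At each RG step the quartic interaction generates new effective couplings; the relevant and marginal ones must be tracked carefully. The marginal directions are the wave function renormalization $Z_h$, the mass renormalization $Z^*_h$, an effective Fermi velocity, and the running quartic coupling $\l_h$. Two structural facts make the expansion convergent uniformly in $m^*$, as in \cite{PS}: (i) the emergent chiral symmetry at low energies produces Ward identities forcing the $\beta$-function of $\l_h$ to vanish up to irrelevant corrections, so $\l_h$ stays small along the flow and the anomalous dimensions $\eta(\l),\eta^*(\l)$ are analytic in $\l$ and close to $0$; (ii) the source $A_{\xx,j}$ is renormalized multiplicatively by a factor $\bar Z_h(\l)$. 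I would adapt these arguments to the $m$-point case by keeping the $\bsA$-dependence through the effective potentials at each scale, and by choosing $\b_c(\l)$ so that the relevant mass-like coupling flows to the prescribed value $\s(a)$ as $h\to h^*$.

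The second, more delicate step is to isolate the dominant contribution to $\media{\e_{\xx_1,j_1};\cdots;\e_{\xx_m,j_m}}^T_{\b(a),L}$ as the tree-level contraction in the infrared effective theory, where the source bilinears have been dressed by $\bar Z(\l)=\lim_{h\to -\io}\bar Z_h/Z_h$ and the propagator has been renormalized to the continuum Dirac form $\mathfrak g(\xx)$ of Eq.\pref{mai}. Everything else, that is, all multi-loop graphs of the first line of Fig.\ref{fig_circle} that involve at least one internal quartic vertex, as well as lattice corrections to the tree-level term, must be shown to contribute only to an irrelevant remainder. Using the dimensional estimates from the RG analysis, the non-coincidence condition $\xx_i\neq\xx_j$, and the short-memory property of the tree expansion, the remainder is bounded by $O(a^\theta)$ for some $\theta>0$, uniformly in $m^*$, and therefore vanishes as $a\to 0$, leaving the Gaussian Wick sum on the r.h.s.\ of Eq.\pref{1.main}.

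The hard part will be step two: extracting an \emph{exact closed-form} scaling limit rather than a merely convergent expansion. The key ingredient is a set of Ward identities ensuring that the emergent infrared theory is a Gaussian Majorana theory with propagator $\mathfrak g$, so that Wick's theorem applied to the dressed source bilinears $\bar Z(\l)(i/\p)\psi_{\xx,+}\psi_{\xx,-}$ reproduces precisely the loop-graph structure of Eq.\pref{1.main}. In particular, Ward identities must be used to cancel putative anomalous corrections to the composite operator $O_{\xx,j}$ that would spoil the multiplicative renormalization form, and to match the constants $\bar Z(\l),Z^*(\l)$ as analytic functions of $\l$ close to $1$. Once this cancellation is in place, the limits $L\to\io$ and $a\to 0$ can be taken in the scale-by-scale bounds to obtain the stated identity.
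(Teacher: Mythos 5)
Your overall architecture (fermionic mapping, multiscale integration down to the mass scale $2^{h_\s}\sim|\s(a)|$, isolation of a dressed tree-level loop sum plus an $O(a^\th)$ remainder controlled by short memory) matches the paper's strategy. However, there is a genuine conceptual error at the heart of your step one: you posit a running marginal quartic coupling $\l_h$ whose flow must be tamed by Ward identities coming from an emergent chiral symmetry. For this model no such mechanism is needed, and none is used. The field $\psi$ is a two-component Majorana field, so the \emph{local} part of any quartic monomial $\psi_{\xx,\o_1}\psi_{\xx,\o_2}\psi_{\xx,\o_3}\psi_{\xx,\o_4}$ necessarily repeats an $\o$ index and vanishes identically by the Grassmann rule $\psi_{\xx,\o}^2=0$ (the ``Pauli principle'' cancellation of Remark 2 in Sec.~\ref{sec3.loc}). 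Consequently the only running coupling constants are $Z_h$, $\s_h$, $\n_h$, $Z^{(1)}_h$; their beta functions are sums over trees with at least one endpoint on scale $N+2$ and hence carry a factor $2^{\th(h-N)}$, so the flows are bounded by elementary summation and a Banach fixed-point argument for $\n$ (which is what defines $\b_c(\l)$). Invoking Ward identities here is not merely a detour: the chiral Ward-identity machinery of the Luttinger-liquid literature relies on a complex (Dirac) fermion structure and is not available for a single Majorana species, and if a genuinely marginal quartic coupling did survive, the energy exponent would be $\l$-dependent (as in the eight-vertex and Ashkin--Teller models), contradicting the very statement you are trying to prove.

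Your step two has the same flaw in a second place: you claim Ward identities are ``the key ingredient'' ensuring the infrared theory is Gaussian with propagator $\mathfrak g$ and that the composite operator renormalizes multiplicatively. In the paper this follows instead from the explicit tree expansion for the generating functional: the dominant contribution to the $m$-point function comes from trees with \emph{no} normal endpoints, which evaluates exactly to the loop sum with propagators $\mathfrak g^{(h)}$ and vertex factors $Z^{(1)}_{\min\{h_j,h_{j+1}\}}$; replacing these by their $h\to-\io$ limits $Z^{(1)}_{-\io}$, $Z_{-\io}$, $Z^*$ produces the closed form with $\bar Z=Z^{(1)}_{-\io}/Z_{-\io}$, and every correction (trees with $n\ge1$ normal endpoints, or with an endpoint on scale $N+2$, or the differences $Z^{(1)}_h-Z^{(1)}_{-\io}$ and $g^{(h)}-\mathfrak g^{(h)}$) is bounded by $(a/\d)^\th$ via the short-memory property and a careful resummation over scale labels. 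You would also need the preliminary cluster expansion (Proposition~\ref{prop3}) to even write the generating functional as a Grassmann integral with exponentially decaying kernels for a general finite-range $v$, and the exact integration of the uniformly massive $\c$ modes before the scale decomposition; your proposal elides both.
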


Therefore, even in presence of an interaction, the continuum limit
of the energy operator can still be identified with the mass
operator of a {\it free} fermion, up to a finite renormalization
of the mass and of the wave function. Note that the case $m^*=0$ is included in the theorem.
The critical temperature is renormalized by the interaction, too; on the contrary, the energy critical
exponents are protected against renormalization. The theorem is proved by
combining the methods of Pinson and
Spencer \cite{PS,S00} with the analysis of the massive Thirring model
developed by Benfatto, Falco and Mastropietro \cite{BFM02}. The proof of the existence of the scaling limit 
behind Theorem \ref{mainprop} also allows us to fully control how fast the
scaling limit of the energy correlations is reached, as stated explicitly in
the next theorem.

\begin{theorem}\label{thm2}
Given $\l, \b(a)$ and $\xx_1,\ldots,\xx_m$ as in the statement of
Theorem \ref{mainprop}, defining
\be \mathfrak g^{(a)} (\xx) =\bar Z(\l)\hskip-.3truecm
\int\limits_{[-\frac{\p}{a},\frac{\p}{a}]^2}\hskip-.2truecm\frac{d\kk}{2\p}
\frac{e^{-i\kk\,\xx}}{ |D_a(\kk)|^2+[Z^*(\l)\s(a)]^2}
\begin{pmatrix} D_a^+(\kk)& iZ^*(\l)\s(a)\\-iZ^*(\l)\s(a)& D_a^-(\kk)\end{pmatrix}\label{1.42q}\ee
we have:
\bea&&   \lim_{L\to\infty}\media{\e_{\xx_1,j_1};\cdots;\e_{\xx_m,j_m}}_{\b(a),L}^{T}=\label{1.43w}\\
&&\qquad
=-\frac{1}{2m}\Big(\frac{i}\p\Big)^m\sum_{\p\in\Pi_{\{1,\ldots,m\}}}\sum_{\o_1,\ldots,\o_m}\Big[\prod_{k=1}^{m}\o_k
{\mathfrak g}^{(a)}_{\o_k,-\o_{k+1}}(\xx_{\p(k)}-\xx_{\p(k+1)})\Big]+\nonumber\\
&&\qquad \quad+R^{(a)}(\xx_1,j_1;\cdots;\xx_m,j_m)\;,\nonumber\eea
where, denoting by $D_{\underline{\xx}}$ the diameter of
$\underline{\xx}=\{\xx_1,\ldots,\xx_m\}$ and by
$\d_{\underline{\xx}}$ the minimal distance among the points in
$\underline{\xx}$, for all $\th\in(0,1)$ and $\e\in(0,1/2)$ and a
suitable $C_{\th,\e}>0$, the correction term $R^{(a)}$ can be
bounded as
\be |R^{(a)}(\xx_1,j_1;\cdots;\xx_m,j_m)|\le  C_{\th,\e}^m m!
\frac1{\d^m_{\underline{\xx}}}\left(\frac{a}{\d_{\underline{\xx}}}\right)^\th
\left(\frac{\d_{\underline{\xx}}}{D_{\underline{\xx}}}\right)^{2-2\e}\;.\label{10b}\ee
\end{theorem}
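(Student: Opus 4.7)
The plan is to upgrade Theorem \ref{mainprop} to a quantitative rate of convergence by making the renormalization group (RG) analysis underlying it fully quantitative. I would start from the Grassmann integral representation of the generating function $Z(\bsA)$ developed in Section \ref{sec:Grassman}; differentiating $\log Z(\bsA)$ at $\bsA=0$ produces a Feynman diagram expansion of the truncated correlation whose vertices come from $\l W_M$ and from the $m$ external energy insertions at $\xx_1,\ldots,\xx_m$. The Pinson-Spencer/Benfatto-Falco-Mastropietro multiscale integration then decomposes the bare fermionic propagator as a sum $\sum_{h\le 0} g^{(h)}$, with $g^{(h)}$ supported on momenta $|\kk|\sim 2^h$, and integrates out scales from $h=0$ (i.e.\ $|\kk|\sim 1/a$) down to an infrared scale $2^{h^*}$ determined by the maximum among $1/D_{\underline{\xx}}$ and $Z^*(\l)|\s(a)|$; at each step, the marginal/relevant parts of the effective kernels are absorbed into a running renormalization of the parameters, whose limits are the functions $\bar Z(\l)$, $Z^*(\l)$, $\b_c(\l)$ appearing in Theorem \ref{mainprop}.

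Next, I would split the resulting absolutely convergent expansion into a leading contribution plus the remainder $R^{(a)}$. The leading piece is the sum over Feynman graphs whose only topology is a single cycle through the $m$ external vertices, with each propagator along the cycle dressed into the full interacting two-point function; Ward identities associated with the emergent chiral $U(1)$ symmetry, as in \cite{BFM02}, would allow me to identify this dressed propagator with the explicit lattice object $\mathfrak g^{(a)}$ of Eq.\eqref{1.42q}, so that the leading term coincides exactly with the first summand in the right-hand side of Eq.\eqref{1.43w}. The remainder $R^{(a)}$ collects all diagrams containing either an extra loop or at least one irrelevant vertex not already absorbed into the dressing.

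The bound \eqref{10b} would then follow from power counting on the Gallavotti-Nicol\`o tree associated to each remainder diagram. Every such tree has a strictly positive scaling dimension, because all non-leading operators are irrelevant in the RG sense, so it carries a gain $\prod_v 2^{-\a_v(h_v - h_{v'})}$ with $\a_v > 0$. Summing over tree shapes and scales produces a factor $(2^{h^*}/2^{h_{\underline{\xx}}})^{2-2\e}$ from the infrared tail (an irrelevant vertex costing at least $2$ units of dimension, with $\e$ a small loss from the combinatorics of multi-loop insertions) combined with $(a\cdot 2^{h_{\underline{\xx}}})^{\th}$ from the ultraviolet tail, once $2^{h_{\underline{\xx}}}\sim 1/\d_{\underline{\xx}}$ is identified as the scale at which the external vertices begin to merge; multiplying by the dimensional estimate $1/\d_{\underline{\xx}}^m$ and absorbing the Battle-Brydges-Federbush-Kennedy combinatorial prefactor for $m$ external insertions into $C_{\th,\e}^m\, m!$ reproduces exactly the claimed bound.

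The main obstacle will be to make the split into leading and remainder \emph{robust} uniformly as $\s(a)\to 0$ or $D_{\underline{\xx}}\to \infty$: the infrared RG flow must be controlled both in the critical regime, where the mass is essentially absent and the external distances dominate, and in the massive regime, where the propagator is exponentially localized on scale $1/|m^*|$, so the scale decomposition has to interpolate smoothly between the two. A further delicate point will be to show that the closed-form expression involving $\mathfrak g^{(a)}$ captures not only the leading scaling behavior but also all $O(1)$ lattice corrections that are not already encoded in $\bar Z(\l)$ and $Z^*(\l)$; this requires using Ward identities at \emph{every} scale, not just in the infrared limit, and tracking the scale-dependent anomaly corrections in the style of \cite{BFM02} with enough precision to verify that what remains after the Ward-identity subtraction is genuinely irrelevant and therefore satisfies the two gain factors described above.
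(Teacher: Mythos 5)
Your overall architecture (Grassmann representation, multiscale decomposition, Gallavotti--Nicol\`o trees, splitting into a single dressed loop plus a remainder controlled by power counting) is the same as the paper's. But there is one step where your proposed mechanism would not work. You identify the dressed propagator $\mathfrak g^{(a)}$ with ``the full interacting two-point function'' via Ward identities for an emergent chiral $U(1)$ symmetry in the style of \cite{BFM02}. The theory here is a single Majorana field: there is no $U(1)$ phase symmetry and hence no such Ward identities, and none are needed. The reason the model is tractable is different and simpler: the local part of the quartic effective interaction vanishes identically by the Grassmann rule $\psi_{\xx,\o}^2=0$ (the ``Pauli principle'' cancellation), so the interaction is genuinely irrelevant and the only running quantities are $Z_h$, $Z_h^{(1)}$, $\s_h$ and the counterterm $\n_h$. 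Their beta functions carry a short-memory factor $2^{\th(h-N)}$ (every contributing tree must have an endpoint on the ultraviolet scale), so they converge exponentially fast to $a$-independent limits $Z_{-\io}$, $Z^{(1)}_{-\io}$, $Z^*$, and $\mathfrak g^{(a)}$ is then read off \emph{explicitly} from the single-scale propagators with these limiting constants substituted; the differences $Z_h^{(1)}-Z_{-\io}^{(1)}$ and $g^{(h)}-\mathfrak g^{(h)}$ are dumped into $R^{(a)}$ and are exactly what produces the $(a/\d_{\underline\xx})^{\th}$ factor. Note also that $\mathfrak g^{(a)}$ is \emph{not} the interacting two-point function of $\psi$, only a free-looking propagator with finitely renormalized parameters; trying to prove equality with the full two-point function is both harder and not what the statement asserts.

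A second, more technical gap: your claim that the remainder bound follows because ``every such tree has a strictly positive scaling dimension'' is not enough to produce the factor $(\d_{\underline\xx}/D_{\underline\xx})^{2-2\e}$. When one sums over the scales of the subtree $\t^*$ spanning the $m$ external points, the exponent attached to its leftmost branching vertex $v_0^*$ is $2s^*_{v_0^*}-m$, which is negative in general, so the naive scale-by-scale summation diverges there. The paper fixes this by redistributing scaling dimensions along $\t^*$ (the $\tilde d_v$ bookkeeping and the contraction to $\t^{**}$), at the cost of an $\e$ in each exponent, ending with all exponents $\a_v\ge 1/2$ and $\a_{v_0^*}\ge 2-2\e$; the $\e$ loss has nothing to do with ``combinatorics of multi-loop insertions.'' Without some such rearrangement your power counting does not close, so this step needs to be supplied explicitly.
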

Note that the contributions in the second line of Eq.(\ref{1.43w})
tend to Eq.(\ref{1.main}) in the limit as $a\to 0$, in a way
controlled by the difference $||\mathfrak g^{(a)}(\xx)-\mathfrak
g(\xx)||$, which is bounded by $|\xx|^{-1}e^{-({\rm
const.})m^*|\xx|}\big(a|\xx|^{-1}+|\s(a)-m^*|\cdot|\xx|\big)$ as
$a\to 0$ and $\s(a)\to m^*$. Note also that the combinatorial factor $m!$ in the r.h.s. of Eq.(\ref{10b})
is the optimal one and corresponds to the same combinatorial growth as $m\to\infty$ as the dominant term in 
the second line of Eq.(\ref{1.43w}).

Moreover, let us remark that the proof of
Theorem \ref{thm2} tells more than what is stated above. In
particular, it allows us to distinguish, among the contributions
to $R^{(a)}$, the terms of order 1 in $\l$ from those of order at
least $\l$; the former are very explicit and can be put in a form
similar to (even thought a bit more cumbersome than) the second
line of Eq.(\ref{1.43w}); if we isolated them from $R^{(a)}$, then
the remaining contributions to $R^{(a)}$ would be bounded by the
r.h.s.\ of Eq.(\ref{10b}) times $|\l|$. Finally, in the massive
case, $m^*\neq 0$, Eq.(\ref{10b}) can obviously be improved, by
taking into account the exponential decay of the propagators on
distances larger than $(m^*)^{-1}$. In this respect, the bound
Eq.(\ref{10b}) is stated with the massless case in mind, this
being the most difficult to treat.

\subsection{Ising models, fermionic systems and a comparison with existing literature}

In order to put our results in the right context and clarify
the relation between spin and fermionic systems, we
recall here some basic facts about the Ising model and its
generalizations. Of course the literature on the Ising model is
immense and we will give here only a partial view on the aspects more
related to our work, pointing out the connections with the
more recent literature and the main open problems.\\

Let us start by discussing the relation between the {\it integrable} Ising model and fermionic systems.
Schultz, Mattis and Lieb \cite{SML} derived the exact solution of
the n.n.\ Ising model by a method alternative to the Onsager's
one: they showed that the transfer matrix can be written as the
exponential of a quantum Hamiltonian describing {\it non
interacting} fermions on a one-dimensional lattice, and they
computed its trace by using second quantization methods. The
Schultz-Mattis-Lieb solution points out a connection between two
apparently unrelated systems, namely 2D classical spin systems and
quantum many body 1D non relativistic fermions. 
The relation between the {\it critical} Ising theory and a {\it relativistic} fermionic QFT
can then be realized in several ways. One is to start from the
Schultz-Mattis-Lieb representation, as in \cite{ID,BT}, and then
use the fact, discovered by Tomonaga \cite{To}, that a many body
system of non relativistic fermions in $d=1$ can be effectively
described in terms of relativistic fermions in $1+1$ dimensions.
Another possibility is to start from the Grassmann integral
representation of the Ising model partition function, as proposed
by Hurst and Green \cite{HG} and Samuel \cite{S80a}, see also
Itzykson-Drouffe \cite{ID}. This approach starts from the
observation, see \cite{MW}, that the generating function of the
n.n.\ Ising model with periodic boundary conditions can be
written in terms of four Pfaffians, each of which can be
represented as a suitable Grassmann integral. Recalling the definition of $Z(\bsA)$, Eq.(\ref{1.7gh}), 
the outcome is (see e.g.\ \cite{GM05,S80a})
\be Z^0({\bsA}):=Z({\bsA})\Big|_{\l=0}
=\frac12\sum_{\boldsymbol\a\in\{\pm\}^2}\t_{\boldsymbol\a}
Z^0_{\boldsymbol\a}(\bsA)\;,\label{2.4tt}\ee
with $\t_{+,-}=\t_{-,+}=\t_{-,-}=-\t_{+,+}=1$ and 
\bea  Z^0_{\boldsymbol\a}(\bsA)&=&(- 2a^{-2})^{M^2}  \Big[ \prod_{\xx \in \L^a_L}
\prod_{j=1}^2 \cosh ( \beta J+aA_{\xx,j})\Big]\int \DD\Phi\,
e^{S(\Phi,\bsA)}\;,\nonumber\\
S(\Phi,\bsA)&=&a\sum_{\xx\in \L^a_L}\Big[\sum_{j=1}^2
 \tanh( \beta J+aA_{\xx,j}) E_{\xx,j} +\label{2.5tt}\\
& &+\lis H_\xx H_\xx
+\lis V_\xx V_\xx+\lis V_\xx \lis H_\xx+ V_\xx
\lis H_\xx+ H_\xx \lis V_\xx+ V_\xx H_\xx\Big]\;.\nonumber
\eea
Here $\lis H_\xx, H_\xx,\lis V_\xx,  V_\xx$ are independent {\it Grassmann
variables} (see e.g.\ \cite{GM01} for the basic properties of Grassmann variables and integration),
four for each lattice site, and $E_{\xx,1}=\lis H_\xx H_{\xx+a\hat{\bf e}_1}$,
while $E_{\xx,2}=\lis V_\xx V_{\xx+a\hat{\bf e}_2}$. Moreover, $\Phi=\{\lis H_\xx,H_\xx,\lis V_\xx,$
$V_\xx\}_{\xx\in \L^a_M}$ denotes the collection of all of these
Grassmann symbols and $ \DD \Phi$ is a shorthand for
$\prod_{\xx}d\lis H_\xx dH_\xx  d\lis V_\xx dV_\xx$. The label $\boldsymbol\a=(\a_1,\a_2)$, with
$\a_1,\a_2\in\{\pm\}$, refers to the boundary conditions, which are periodic or antiperiodic in the
horizontal (resp. vertical) direction, depending on whether $\a_1$ (resp. $\a_2$) is equal to $+$ or $-$:
\bea &&
\lis H_{\xx + L \hat e_i} = \a_i \lis H_{\xx}, \qquad H_{\xx + L \hat e_i} = \a_i H_{\xx}\;,\\
&&
\lis V_{\xx + L \hat e_i} = \a_i \lis V_{\xx}, \qquad V_{\xx + L \hat e_i} = \a_i V_{\xx}
\;.\nonumber\eea
These boundary conditions may be important or not in the thermodynamic limit, depending on whether
we are at or outside the critical point. If $\b\neq \b_c$, then $|Z^0_{\boldsymbol\a}(\V0)/Z^0_{-,-}(\V0)|$ converges
exponentially to 1 as $L\to\infty$, see \cite[Appendix G]{M04};
moreover, if $\b>\b_c$, then $Z_{\boldsymbol\a}^0(\V0)$ is positive as $L\to\infty$ for all ${\boldsymbol\a}$,
while if $\b<\b_c$, then $Z^0_{+,+}(\V0)$ is definitely positive and the other partition functions
with different boundary conditions are
negative. At  $\b=\b_c$, $Z^0_{+,+}(\V0)$ 
is exactly zero while the other partition functions can be expressed in terms of  special functions \cite{DSZ,MW}.

The Grassmann representation is particularly convenient for computing energy-energy correlation functions: for
instance
\be\media{\e_{\xx,j};\e_{\yy,j'}}^T_{\b,L}\big|_{\l=0}=(1-t^2)^2
\sum_{\boldsymbol\a}\frac{\t_{\boldsymbol\a}Z^0_{\boldsymbol\a}(\V0)}{2Z^0(\V0)}\media{E_{\xx,j};E_{\yy,j'}}^T_{{\boldsymbol\a},L}\;,\ee
where $\media{\cdot}_{{\boldsymbol\a},L}$ is the average with respect to the Grassmann ``measure"
$\DD\Phi e^{S(\Phi,\V0)}$ with ${\boldsymbol\a}$ boundary conditions; we shall also indicate
by $\media{\cdot}_{{\boldsymbol\a}}$ the $L\to\infty$ limit of $\media{\cdot}_{{\boldsymbol\a},L}$.
As mentioned above, if $\b\not=\b_c$ then $\langle E_{\xx,j};E_{\yy_j'}\rangle^T_{{\boldsymbol\a},L}$
is {\it exponentially insensitive} to
boundary conditions as $L\to\infty$, so that, for all $\b\not=\b_c$,
\be\lim_{L\to\infty} \media{\e_{\xx,j};\e_{\yy,j'}}^T_{\b,L}\big|_{\l=0}=(1-t^2)^2
\langle E_{\xx,j};E_{\yy,j'}\rangle^T_{-,-}\;.\ee
A similar formula is true for the $m$-point functions, $m\ge 2$:
\be\lim_{L\to\infty} \media{\e_{\xx_1,j_1};\cdots;\e_{\xx_m,j_m}}^T_{\b,L}\big|_{\l=0}=(1-t^2)^{m}
\media{E_{\xx_1,j_1};\cdots;E_{\xx_m;j_m}}^T_{-,-}\;,\label{1.mpoint}\ee
valid for all $\b\neq \b_c$.
As mentioned above, if $\l=0$ the Grassmann ``measure" used to compute the r.h.s.\ of this equation is 
gaussian (i.e. $S(\Phi,\V0)$ is quadratic): therefore, the r.h.s.\ of Eq.(\ref{1.mpoint}) can be computed via
the fermionic Wick rule, leading to a sum over all the Feynman
graphs obtained by pairing (``contracting") the Grassmann fields
involved; truncation means that only connected Feynman diagrams
should be considered. From a graphical point of view,
Eq.(\ref{1.mpoint}) can be graphically interpreted as a sum over
simple {\it loop graphs}, as in Fig.\ref{fig_circle0}.
The fermionic representation outlined above also allows one to compute the spin-spin correlations,
even though these are represented by much more involved expressions. In particular, if $x>0$,
\be \media{\s_{(x,0)}\s_{0,0}}_{\b,L}\big|_{\l=0}=t^{x}\langle\!\langle e^{
a(t^{-1}-t)\sum_{z=0}^{x-a} E_{(z,0),1}}\rangle\!\rangle_{L}\;,\label{1.16}\ee
where $\langle\!\langle\cdot\rangle\!\rangle_L=\sum_{\boldsymbol\a}\frac{\t_{\boldsymbol\a}
Z^0_{\boldsymbol\a}}{2Z^0}\media{\cdot}_{{\boldsymbol\a},L}$. Eq.(\ref{1.16}) can be equivalently rewritten as
\bea && \log \media{\s_{(x,0)}\s_{0,0}}_{\b,L}\big|_{\l=0}=x\log t+\label{hh1}\\
&&+\sum_{m\ge 1}\frac{(t^{-1}-t)^m}{m!}a^m\sum_{z_1=0}^{x-a}\cdots \sum_{z_m=0}^{x-a}
\langle\!\langle E_{(z_1,0),1};\cdots; E_{(z_m,0),1}\rangle\!\rangle^T_L\;.\nonumber\eea
In other words, the spin-spin correlation can be expressed as a
series of truncated energy correlations. Such representation is
not very manageable and the asymptotic behavior of
$\media{\s_{(x,0)}\s_{0,0}}_{\b}\big|_{\l=0}$ at or close to the critical
point is usually derived by the (equivalent) representation in
term of a determinant of an $x \times x$ matrix \cite{MW}, even
though several attempts have been pursued along the years to resum
such series (see below).\\

In any case, the above Grassmann representation looks quite different from
those describing relativistic QFT fermionic models. However, we can 
perform a suitable change of variables \cite{ID} from $\{\lis
H_\xx, H_\xx,\lis V_\xx, V_\xx\}$ to the {\it critical modes}
$\{\psi_{\xx,\pm},\c_{\xx,\pm}\}$, see next section for a precise definition; 
the propagator of the $\psi$ field is massless at the critical point, while the $\c$ field is
uniformly massive. After the integration of the massive modes,
which can be performed exactly, we find that if $\l=0$
\be \media{\psi_{\xx_1,\o_1}\cdots\psi_{\xx_n,\o_n}}_{{\boldsymbol\a},L}=\frac1{\NN_{\boldsymbol\a}}\int \Big[\prod_{\substack{\xx\in\L^a_L\\ \o=\pm}}
d\psi_{\xx,\o}\Big] e^{\lis S (\psi)} \psi_{\xx_1,\o_1}\cdots\psi_{\xx_n,\o_n}
\label{1.18}\ee
where $\NN_{\boldsymbol\a}$ is a normalization and, if $\psi_\xx=\begin{pmatrix}\psi_{\xx,+}\\ \psi_{\xx,-}\end{pmatrix}$,
\be \lis S(\psi)=- \frac{1}{4\p} \int d\xx \,\psi^T_{\xx} \begin{pmatrix}
\hat \dpr_1+i\hat \dpr_2 & i \s+\hat \D_a \\
-i\s - \hat \D_a^\dagger & \hat \dpr_1-i\hat \dpr_2
\end{pmatrix}\psi_{\xx}\label{1.20}
\ee
where $\hat\dpr_j$ is a lattice approximation of the derivative in the $j$-th coordinate direction,
$\hat \D_a$ is a second order lattice differential operator, formally vanishing as $a\to 0$,
and $\s=\frac2{a}\frac{t-t_c(0)}{t_c(0)}$ is the mass, where $t_c(0)=\sqrt2-1$.
Remarkably, the above functional
integral coincides with the lattice regularization of a QFT
describing Majorana fermions in $d=1+1$ dimensions, with a Wilson
term (the second order operator $\hat \D_a$) which allows one to avoid the fermion doubling problem \cite{ID}.

Once again, the  r.h.s.\ of Eq.(\ref{1.18}) can be computed exactly in terms of the fermionic Wick rule, 
with propagator given
by the covariance of the quadratic form in Eq.(\ref{1.20}). In the scaling limit,
this propagator tends to $\mathfrak g^0$ of Eq.(\ref{mai}); correspondingly, the r.h.s.\ of Eq.(\ref{1.mpoint}),
after the change of variables from $\{\lis
H_\xx, H_\xx,\lis V_\xx, V_\xx\}$ to 
$\{\psi_{\xx,\pm},\c_{\xx,\pm}\}$ and the integration of the $\c$ fields, tends to the r.h.s.\ of Eq.(\ref{1.maina}).
If $m^*{=}0$, the limiting theory is a scale-free relativistic Conformal Field Theory (CFT). Conformal 
invariance appears to be a robust property, stable under changes in the shape of the box and of the underlying 
lattice, as proved by Chelkak and Smirnov \cite{CS} by making use of suitable fermionic operators
on the lattice \cite{RC06} and the ideas of Schramm-Lowner Evolution (SLE) \cite{LSW}.

While the discussion of the continuum limit of the energy
correlations starting from the exact solution (that is, from the
Grassmann integral representation) is straightforward and fully
rigorous, the analogous discussion for the spin-spin correlation
is much harder. 
One tempting route to its computation is to start from Eq.\pref{hh1} and then replace 
the sums by integrals and the energy
truncated expectations by their scaling limit Eq.(\ref{1.maina}); this is the strategy followed by \cite{BI, DD,ZI}. 
However, these replacements introduce spurious ultraviolet divergences that 
require a suitable interpretation; some justification of this procedure has been
provided by Dotsenko and Dotsenko \cite{DD}, but it is fair to say that so far there is 
no mathematically sound way to resum the series in
Eq.\pref{hh1} to get the spin-spin correlation. 
A better alternative approach is based on bosonization, which
connects this (among many other quantites) to observables of a
discrete height model \cite{Dubedat}. This height model has long
been known to converge in a weak sense to the Gaussian free
field~\cite{K}, and a recent preprint by
Dub{\'e}dat~\cite{Dubedat2} extends this convergence to the
relevant class of observables.\\

Let us return now to the {\it non integrable} model Eq.\pref{2.1} with $\l\neq 0$. A
basic principle in statistical physics, which has a wide
experimental confirmation, is {\it universality} \cite{Ba,F,G},
which tells us in particular that the physical properties of a
system close to a second order phase transitions are largely
independent of the microscopic details of the interaction among
its elementary components: only a few general properties related
to dimensionality and the symmetries of the Hamiltonian matter as
far as the computation of its critical exponents is concerned.
It is widely believed that the Ising model with finite range interactions Eq.\pref{2.1} 
belongs to the same class of universality as the n.n.\
Ising model, that is, the critical exponents of the two models are the
same. On the other hand, other physical quantities, like the
critical temperature, are expected to be non-universal. However, a
mathematical proof of this conjecture is very difficult:
universality in two dimensions is not a trivial issue at all. A striking illustration of this fact is provided 
by the exact solution of the eight vertex model by Baxter \cite{Ba1},
which came after the solution of the six vertex model by Lieb
\cite{Li}, and showed that the critical exponents of these vertex
models are continuous non trivial functions of the coupling. Eight
vertex models can be represented equivalently as a pair of Ising
models coupled by a four spin interaction \cite{Ba}; therefore,
the Ising universality class should be stable under in-layer
perturbations, but not against perturbations coupling two
different layers. The classification of the possible critical theories close to the Ising model is 
by itself a very rich and interesting research field. It is based on the remark, due to Belavin, Polyakov and 
Zamolodchikov \cite{BPZ}, that if a 2D critical theory admits a scaling limit, this should be 
invariant under the  infinite dimensional group of conformal transformations of the complex plane.
Such a large symmetry group imposes infinitely many constraints on the correlation functions and in some cases 
these are sufficient to compute them in closed form. Even more strikingly, in many cases the critical exponents are 
all functions of a single parameter, the central charge $c$, which in turn can be computed from the
self-correlation of the energy-momentum tensor $T(z)$, $\media{T(z)T(0)}=\frac{c}{2z^4}$
\cite{BPZ,ID2}. In our case, 
$T(z)=-\frac12\psi\dpr\psi$ and
$\media{T(z)T(0)}=\frac1{4z^4}$, from which $c=\frac12$ and the
simplest conformal field theory with central charge $1/2$ has two
primary fields with critical exponents $2$ and $1/4$,
respectively, which are naturally identified with the energy and
the spin operators of the continuum limit of the critical Ising
model \cite{BPZ,ID2}. 
Universality and conformal invariance of the nearest neighbor Ising model on regular lattices and
domains of different shapes has been recently proved by Smirnov \cite{S}, Chelkak and Smirnov \cite{CS}
and by Chelkak, Hongler and Izyurov \cite{CHI}, by SLE methods.
However, it should be stressed that the results of Smirnov, Chelkak and collaborators heavily rely on the underlying
integrability properties of the nearest neighbor Ising model and are very fragile under perturbations of the form 
Eq.(\ref{2.1}). The problem of proving the existence of the scaling limit of the critical theory 
Eq.(\ref{2.1}) at $\l\neq 0$ and of its conformal invariance  still remains a big challenge.
\\

A different approach to universality which is suitable to include perturbations of the form Eq.(\ref{2.1})
 is provided by the Renormalization Group \cite{WK}. A rigorous application 
of this idea to the present context has been proposed by Pinson and Spencer \cite{PS,S00}. 
As mentioned above, their starting point is an exact expression of the 
partition function of certain
non-integrable Ising models in terms of a non-gaussian Grassmann integral,
which can be studied by constructive QFT methods, similar to those used by
Lesniewski \cite{Le} to analyze the Yukawa$_2$ QFT: in both cases the interaction is {\it irrelevant} in
the Renormalization Group sense. Using such methods, Pinson and Spencer 
proved that the critical exponent of the energy correlation is
independent of the next-to-nearest-neighbor interaction, provided
this is chosen of a suitable form. Their proof provides the first
example of universality of a critical exponent in a perturbed 2D
Ising model. This approach based on fermionic mapping and
constructive Renormalization Group analysis can be applied to
coupled Ising models, like the Eight Vertex or the Ashkin-Teller
model; in such cases, as shown by Mastropietro \cite{M04}, the
interaction becomes {\it marginally relevant} in the
Renormalization Group sense and the energy exponents are non
universal continuous function of the coupling. Extensions of such
methods allowed Benfatto, Falco and Mastropietro to prove the
Kadanoff relations \cite{Ka} between the specific heat, the energy
and the crossover critical exponents in eight vertex and
Ashkin-Teller models \cite{BFM02}; they also allowed Giuliani and
Mastropietro to compute a new critical exponent controlling the
crossover from universal to non-universal behavior in the
asymmetric Askhin-Teller model \cite{GM05}. It should be remarked
that these exponents cannot be computed by other means, since the
Ashkin-Teller model is not solvable (and we also recall that the eight
vertex model is solvable but only certain exponents can be deduced
from the solution).

In this paper we extend the analysis of Pinson and Spencer \cite{PS,S00} 
in several directions. First, we generalize the class of spin perturbations that can be considered:
while they required special next-to-nearest neighbor interactions, we just need the function $v$ in Eq.(\ref{2.1})
to be of finite range and symmetric under the natural lattice symmetries; the resulting exact fermionic action
is defined in terms of  an exponentially
decaying self-interaction of the involved Grassmann field, see Proposition
\ref{prop3} in Section \ref{sec2.1} below. It
should be noted that even though the interaction in terms of spins
is finite range, the corresponding fermionic interaction is of
infinite range and of arbitrary high degree in the fermionic
field, but exponentially decaying on the scale of the lattice. The
result is based on a first cluster expansion, which is
exponentially convergent provided that the strength $\l$ of the
perturbation is small enough. Second, we combine the analysis of Pinson and Spencer with one used by
Benfatto, Falco and Mastropietro \cite{BFM07} to prove and control the convergence of its energy 
correlations to the continuum limit. In this way, besides the critical exponent of the 2-point energy correlation, 
we can explicitly compute the scaling limit of all the multipoint correlation functions: in particular we have a 
constructive procedure to compute the amplitude of the energy correlations and the subdominant corrections 
that vanish in the scaling limit. As a technical point, let us also mention that, as compared to \cite{BFM07},
our bound on the subdominant corrections is optimal from a combinatorial point of view: it grows as $m!$ as 
$m\to \infty$, exactly as the dominant term, contrary to the bound in \cite{BFM07}, which grows as $(m!)^\a$,
$\a>1$.
Our results can be seen as a strong statement of universality with respect to perturbations of the form Eq.(\ref{2.1}):
the critical correlations in the scaling limit have the same analytical expression as those of the n.n. Ising model,
up to a renormalization of the wave function and of the
mass. \\

Of course, many important problems remain to be faced. One is to
repeat the analysis for the energy correlations on the torus,
taking the scaling limit directly at the critical point. However, the most
urgent problem is  to prove universality of the spin critical exponent. As we have seen before, the explicit
expressions for the energy correlations of the Ising model have
been used by Dotsenko and Dotsenko \cite{DD} to compute the spin-spin critical exponent
(modulo a number of audacious exchange of limits). 
The spin-spin correlation for
the model Eq.\pref{2.1} with $\l\not=0$ can be still expressed by a formula similar to 
Eq.\pref{hh1}, with $(t^{-1}-t)$ replaced by a renormalized coefficient $t_\l$ and 
the free truncated energy correlations replaced by the interacting ones at $\l\neq0$.
If we proceed as in \cite{DD} and, in the scaling limit, we 
replace the
sums in Eq.(\ref{hh1}) by integrals and the truncated energy correlations by their limiting value,
we get an expression essentially identical to the non
interacting one, with the important difference that the $m$-th order term in the series has a prefactor
proportional to $(t_\l\bar Z(\l))^m$. After a resummation (if formal) of the series, the factor $t_\l\bar Z(\l)$ appears to be related to the critical exponent of the spin-spin correlation. In this perspective, the proof
of universality for this exponent could be reduced to the proof that the combination $t_\l\bar Z(\l)$
is identically $1$ as a function of $\l$. However, one might worry that the exchange of the
scaling limit with the integrals could produce an extra finite 
renormalization that could be difficult to control; this question
requires further investigation. Another possible strategy for
determining the exponent of the spin-spin correlation is by
computing the self-correlation of the energy-momentum tensor: if we could
identify the right lattice counterpart of this operator (which is
usually defined directly in the scaling limit) then we could have
direct access to the central charge of the interacting theory
(which should be $1/2$ if the universality principle is correct).\\

The rest of the paper is devoted to the proofs of Theorems \ref{mainprop} and \ref{thm2}. In
Section~\ref{sec:Grassman} we prove the Grassmann representation of the generating function for the multi-point
energy correlations. In Section \ref{sec3} we describe the Renormalization Group procedure used to control
the Grassmann generating function uniformly in the mass $m^*$. Finally, in Section \ref{sec:correlations}
we explain how to adapt the general expansion and the bounds discussed in Section \ref{sec3} to the
multi-point energy correlation functions and conclude the proof of the two theorems stated above.

\section{The representation of the interacting Ising model in Grassmann variables}\label{sec:Grassman}

\subsection{The generating function for the energy correlations}\label{sec2.1}

Consider the model Eq.(\ref{2.1}) where $v(\xx)$ has the properties spelled after Eq.(\ref{2.1}). Without loss of generality we can assume that the additional nearest neighbor interaction is 
zero: $v(a\hat{\bf e}_j)=0$, $j=1,2$. The interaction is written in terms of the macroscopic coordinates for a unified notation, but in 
fact depends on the lattice distance; that is $v(\xx-\yy)= v_0(\frac{\xx-\yy}{a})$ for some $v_0$ 
independent of $a$.
We shall assume that $v$ is normalized in such a way that 
$\frac12\sum_{\xx} |v(\xx)|=1$. For notational simplicity, we shall also assume that 
$a=\ell_02^{-N}$, where $\ell_0$ is the macroscopic unit length and $N\in\mathbb N$. Moreover, 
we define $\BB^a_M$ to be the set of nearest neighbor bonds in 
$\L^a_L=a{\mathbb Z}^2_M$.

As reviewed in the previous section, in the simple Ising model case ($\l=0$), the generating function for the energy correlations can be 
represented in terms of a gaussian Grassmann integral, see Eqs.(\ref{2.4tt})-(\ref{2.5tt}). A similar 
representation in terms of a non-gaussian Grassmann integral is valid also in the $\l\neq 0$ case, as stated in 
the following proposition and proved below. 
\begin{proposition}\label{prop3}
There exists $\l_0>0$ such that, if $|\l|\le \l_0$, then for any $m$-tuple of distinct pairs $(\xx_1,j_1),\ldots,
(\xx_m,j_m)$, with $m\ge2$, $\xx_i\in \L^a_M$ and $j_i\in\{1,2\}$,
\be  \media{\e_{\xx_1,j_1};\cdots;\e_{\xx_m,j_m}}^T_{\b,L}=a^{-2m}\frac{\dpr^m}{\dpr A_{\xx_1,j_1}
\cdots\dpr A_{\xx_m,j_m}}\log\Xi(\bsA)\Big|_{\bsA={\bf 0}}\;,\ee
where $\Xi(\bsA)$ is the Grassmann generating functional
\be \Xi(\bsA)
 = \frac{C_M}2 \sum_{\boldsymbol\a\in\{\pm\}^2}\t_{\boldsymbol\a}\,
 \int  \DD \Phi \,
e^{S_t(\Phi)+ (1-t^2)({\boldsymbol{E}},\bsA)+\VV(\Phi,\bsA)}\label{2.51}
\ee
where $t:=\tanh(\b J)$, $({\boldsymbol{E}},\bsA):=\sum_{j=1}^2\int d\xx\, E_{\xx,j} A_{\xx,j}$ and:
\begin{itemize}
\item $C_M$ is a normalization constant, defined as
\be 
C_M=(- 2a^{-2}\cosh^2(\b J))^{M^2}e^{V_M(\l)}\prod_{\{\xx,\yy\}} \cosh^2\!\big(\frac{\b\l}2 v(\xx-\yy)\big) \label{2.52}
\ee
with $V_M(\l)$ an analytic function of $\l$, independent of $a$ and satisfying the bound 
$|V_M(\l)|\le C|\l|M^2$, for a suitable $C>0$;
\item $S_t(\Phi)$ is the unperturbed quadratic part of the action, defined as
\bea S_t(\Phi) &=&a\sum_{\xx\in \L^a_L}\!\!\big(
tE_{\xx,1}+tE_{\xx,2} +\\
&&+\lis H_\xx H_\xx
+\lis V_\xx V_\xx+\lis V_\xx \lis H_\xx+ V_\xx
\lis H_\xx+ H_\xx \lis V_\xx+ V_\xx H_\xx\big)\nonumber\eea
\item $\VV(\Phi,\bsA)$ is a polynomial in $\{E_{\xx,j}\}_{\xx\in \L^a_M}^{j=1,2}$ and 
$\{A_{\xx,j}\}_{\xx\in \L^a_M}^{j=1,2}$,
which can be expressed as
\bea
\VV(\Phi,\bsA) &=& \sum_{\substack{n,m\ge 0\\ n+m\ge 1}}\sum_{\substack{j_1,\ldots,j_n\\ j'_1,\ldots,j'_m}}
\int d\xx_1\cdots d\xx_n\,d\yy_1\cdots d\yy_m\cdot\label{V_as_W}\\
&&\cdot W_{\underline{j},{\underline{j}}'}(\xx_1,\ldots,\xx_n;\yy_1,\ldots,\yy_m)
 \prod_{i=1}^n E_{\xx_i,j_i}\prod_{i'=1}^m A_{\yy_{i'},j_{i'}'} \nonumber
\eea
where $\underline{j}=(j_1,\ldots,j_n)$, ${\underline{j}}'=(j_1',\ldots,j_m')$ and, if $\underline{\xx}=
(\xx_1,\ldots,\xx_n)$ and $\underline{\yy}=(\yy_1,\ldots,\yy_m)$,
\be
|W_{\underline{j},{\underline{j}}'}(\underline{\xx};\underline{\yy})|\le 
C^{n+m}(\b |\l|)^{\max\{1,c(n+m)\}} a^{-(2-n-m)} \frac{e^{-\k\, \d(\underline{\xx},\underline{\yy})/a}}
{a^{2(n+m-1)}}\label{W_decay}\ee
for suitable constants $C,c,\k>0$ depending only on $M_0$ (the range of the interaction); 
here $\d(\xx_1,\ldots,\yy_m)$ is the tree 
distance of the set $X=\{\xx_1,\ldots,\yy_m\}$, that is the length of the shortest tree graph composed of
bonds in $\BB_M^a$ which connects all the elements of $X$.
\end{itemize}
\end{proposition}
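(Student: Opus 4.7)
The natural starting point is the high-temperature identity $e^{K\sigma\sigma'}=\cosh K(1+\tanh K\,\sigma\sigma')$, applied to every bond after incorporating the source so that nearest-neighbor bonds carry the coupling $K=\beta J+aA_{\xx,j}$ while long-range bonds carry $K=\beta\lambda v(\xx-\yy)$. Expanding the long-range factors yields, modulo the overall normalization $\prod_b\cosh K_b$, a sum over subsets $S$ of long-range bonds with weight $\prod_{\{\xx,\yy\}\in S}\tanh(\beta\lambda v)\,\sigma_\xx\sigma_\yy$, multiplied by the n.n.\ high-temperature weight $\prod_b(1+t_b\sigma_b)$, with $t_b=\tanh(\beta J+aA_{\xx,j})$. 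I would group each $S$ into connected components (polymers) $P_1,\ldots,P_k$; using $\sigma_\xx^2=1$, the spin monomial of a polymer $P$ collapses to $\prod_{\xx\in V_{\mathrm{odd}}(P)}\sigma_\xx$ over its (necessarily even-sized) set of odd-degree vertices, and pairing these vertices along shortest lattice paths rewrites it as $\prod_{b\in\tilde\gamma(P)}\sigma_b$ over a localized set $\tilde\gamma(P)$ of nearest-neighbor bonds.

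Next I would absorb the insertion $\prod_{b\in\tilde\gamma(P)}\sigma_b$ into the n.n.\ weight via the elementary rule $\sigma_b(1+t_b\sigma_b)=t_b(1+t_b^{-1}\sigma_b)$, which locally inverts $t_b\to t_b^{-1}$ along $\tilde\gamma(P)$; the remaining spin sum is then a n.n.\ Ising partition function with localized coupling modifications. Via the Hurst-Green-Samuel/Kasteleyn-Pfaffian identity this is represented by the Grassmann integral of the form Eq.(\ref{2.5tt}) with shifted coefficients. Expanding the resulting Grassmann weight around the ``free'' action $S_t(\Phi)$ produces, for each polymer $P$, an explicit polynomial in the $E_{\xx,j}$ and $A_{\xx,j}$ supported near $P$; the first-order Taylor expansion of $t_b$ in $A$ simultaneously yields the universal linear term $(1-t^2)(\boldsymbol E,\bsA)$ displayed separately in Eq.(\ref{2.51}).

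The sum over $S$ is now a classical polymer system with polymer activities bounded by $C^{|E(P)|}(\beta|\lambda|)^{|E(P)|}e^{-\kappa|\tilde\gamma(P)|/a}$: the $\beta|\lambda|$ factors come from each $\tanh(\beta\lambda v)$, and exponential decay on the lattice scale from the finite range of $v$. A standard Brydges-Battle-Federbush (or Kotecky-Preiss) cluster expansion, convergent for $|\lambda|\le\lambda_0$ small enough, reorganizes $\log\Xi(\bsA)$ as a sum over connected clusters of polymers; re-exponentiating inside the Grassmann integral produces the effective interaction $\VV(\Phi,\bsA)$, while the $(\Phi,\bsA)$-independent, volume-extensive pieces of the cluster sum are absorbed into $\log C_M$ and give the $V_M(\lambda)=O(|\lambda|M^2)$ bound. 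The kernel bound \pref{W_decay} then follows from the single-cluster exponential decay combined with the polymer activity bound above. The hardest step I anticipate is the reduction to local Grassmann polynomials: the lattice paths $\tilde\gamma(P)$ must be chosen carefully so that each polymer's contribution is a genuinely local polynomial in $E_{\xx,j}$ without spurious long tails, and the combinatorial cost of the path choice must be absorbed cleanly into the polymer activity so as to preserve the sharp exponential decay rate $\kappa/a$ required in \pref{W_decay}.
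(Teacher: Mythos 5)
Your proposal is correct in outline and follows the same overall strategy as the paper: reduce each long-range pair $\s_\xx\s_\yy$ to a product of nearest-neighbor bond operators along lattice paths, replace each \emph{distinct} bond operator by $t_b+a(1-t_b^2)E_b$ via the Grassmann identity Eq.(\ref{2.6a}), and control the re-exponentiation by a hard-core polymer/cluster expansion; indeed your local flip $t_b\to t_b^{-1}$ followed by expansion around $S_t$ reproduces exactly the activity $\prod_b\big(t_b+a(1-t_b^2)E_b\big)$ of Eq.(\ref{2.8}). The differences are in the intermediate bookkeeping, and two of them matter. First, the paper converts each pair $\{\xx,\yy\}$ separately into the average of two string operators $U_{\xx,\yy},D_{\xx,\yy}$ along the symmetric pair of paths of Fig.~\ref{fig1} \emph{before} expanding, precisely so that every term of $\VV$ is manifestly covariant under the lattice symmetries; this covariance is exploited later (Sec.~\ref{sec2.sym}) to constrain the local part of the effective potential, and your per-component pairing of odd-degree vertices ``along shortest paths'' does not obviously provide it unless the pairing rule is itself chosen covariantly or symmetrized over. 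Second, and more substantively, your polymers are connected components of the long-range bond graph (vertex connectedness), so the nearest-neighbor supports $\tilde\g(P)$ of two \emph{distinct} polymers can share bonds; since $(t_b+a(1-t_b^2)E_b)^2=t_b^2+2at_b(1-t_b^2)E_b\neq t_b^2$ while $\s_b^2=1$, you cannot apply the Grassmann substitution to each polymer independently and then multiply. The paper avoids this by defining connectedness of strings through \emph{bond} overlap and tracking the parity of each bond (the black/grey colouring entering Eq.(\ref{2.7a})), so that the resulting polymer gas genuinely has hard cores and each bond operator appears at most once. Your sketch needs the analogous merging step before the cluster expansion applies; once that is in place, the activity bound, the convergence for small $|\l|$, and the kernel decay Eq.(\ref{W_decay}) follow exactly as you indicate.
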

{\bf Remark.} The factor $\frac{e^{-\k \d(\underline{\xx},\underline{\yy})/a}}{a^{2(n+m-1)}}$ in Eq.(\ref{W_decay}) is normalized in 
such a way that its $L_1$ norm is essentially independent of $a$, 
that is $$\frac1{L^2}\int d\xx_1\cdots d\yy_m \frac{e^{-\k \d(\underline{\xx},\underline{\yy})/a}}
{a^{2(n+m-1)}}\le C_\k$$ where $C_\k$ is a constant independent of $a$. On the other hand,
the factor $(\b|\l|)^{\max\{1,c(n+m)\}}$ measures the smallness in $\l$ of the kernel $W$, while 
the factor $a^{-(2-n-m)}$ (or, better, its a-dimensional version, $(a/\ell_0)^{-(2-n-m)}$) plays the role of its {\it scaling dimension}. Recall that $a/\ell_0= 2^{-N}$,
see the beginning of Section \ref{sec2.1}: therefore, the scaling dimension of the kernel can be 
rewritten as $2^{N(2-n-m)}$. We shall see in the following that 
the generating function can be integrated by an iterative multiscale procedure; at each step, it will 
be expressed in terms of an effective potential on scale $h$, with $h\le N$, 
analogous to $\VV(\Phi,\bsA)$,
whose kernels $W^{(h)}$ have decay properties analogous to $W$, with the 
important difference that the scale $2^N$ is replaced by $2^h$ and, therefore, 
the scaling dimension $2^{N(2-n-m)}$ is replaced by $2^{h(2-n-m)}$.
The result of the iteration, as $h\to-\infty$, gives the generating function of interest. The 
relevant scaling properties of the multi-point energy correlation function as $a\to 0$ and $\b\to\b_c$ will 
be controlled in terms of the kernels $W^{(h)}$ and of their limits as $N\to\infty$ and $h\to-\infty$.

\begin{proof}[Proof of Proposition \ref{prop3}]
 For notational convenience, given a bond $b\in\BB^a_M$, 
we denote by $\e_b$, $E_b$ and $A_b$ the corresponding bond operators: that is, 
if $b=(\xx,\xx+a\hat{\bf e}_1)$ (resp. $b=(\xx,\xx+a\hat{\bf e}_2)$), then $\e_b=\e_{\xx,1}$, 
$E_b=E_{\xx,1}$ and $A_b=
A_{\xx,1}$ (resp. $\e_b=\e_{\xx,2}$, $E_b=E_{\xx,2}$ and $A_b=
A_{\xx,2}$). The key to the first step in the proof is the remark is that if $b_1,\ldots,b_n$ are $n$ 
{\it distinct} 
bonds, then the Grassmann representation for the nearest neighbor Ising model 
induces the following:
\bea && \sum_{{\underline \s}\in\O_M}e^{\sum_{b}(\b J+a A_{b})a\e_b}
\e_{b_1}\cdots\e_{b_n}=\frac12\sum_{\boldsymbol\a}\t_{\boldsymbol\a}\,a^{-2n}
\frac{\dpr^n}{\dpr A_{b_1}\cdots\dpr A_{b_n}}Z^0_{\boldsymbol\a}(\bsA)=\nonumber\\
&&=\frac12\sum_{\boldsymbol\a}\t_{\boldsymbol\a}\,
(- 2a^{-2})^{M^2}  \Big[ \prod_{b \in \BB^a_M} \cosh ( \beta J+aA_b) \Big]\cdot\label{2.6a}\\
&&\cdot\int \DD\Phi\,\big(a^{-1}t_{b_1}+(1-t_{b_1}^2)E_{b_1}\big)\cdots
\big(a^{-1}t_{b_n}+(1-t_{b_n}^2)E_{b_n}\big)e^{S(\Phi,\bsA)}\;,\nonumber\eea
where $t_b=\tanh(\b J+aA_b)$. This correspondence is invalid for repeated bond variables: note that 
$[a^{-1}t_b+(1-t_b^2)E_b]^2=a^{-2}t_b^2+2a^{-1}(1-t_b^2)^2E_b$, while  $\e_b^2=a^{-2}$. 
This last observation can be used to remove repeated bond operators from any expression; therefore, in order
to derive a Grassmann representation for
$Z(\bsA)$, it is enough to express the interaction term (i.e. the $\l$-dependent term) in Eq.(\ref{1.7gh}) as sum of products of 
\emph{distinct} bond operators; then we can replace every bond operator $\e_b$ by $a^{-1}t_{b}+
(1-t_{b}^2)E_{b}$, in the sense explained above, and finally we can re-exponentiate the big sum
of products of Grassmann variables, so obtaining the desired Grassmann functional integral 
representation of the Ising model at hand. This can be implemented as follows. By definition,
\be Z(\bsA)=\sum_{{\underline \s}\in\O_M}e^{
\sum_{b}(\b J+a A_{b})a\e_b}
\prod_{\{\xx,\yy\}}e^{\b \l\s_\xx v(\xx-\yy)\s_\yy}\;.\label{2.58}\ee
Consider a pair of sites $\{\xx,\yy\}$ contributing to the product in the r.h.s.\ of this equation, i.e., a pair of 
sites such that $|\xx-\yy|\le R_0$. Note that $\s_\xx\s_\yy$ can 
be rewritten in terms of a product of energy density operators localized along a path connecting $\xx$ 
and $\yy$ on the lattice: $\s_\xx\s_\yy=\frac12U_{\xx,\yy}+\frac12D_{\xx,\yy}$, where 
$U_{\xx,\yy}=\prod_{b\in\CC_U(\xx,\yy)}a\e_b$ and $D_{\xx,\yy}=\prod_{b\in\CC_D(\xx,\yy)}a\e_b$.
Here $\CC_U(\xx,\yy)$ and $\CC_D(\xx,\yy)$ (where $U$ and $D$ stand for ``up" and ``down") are 
the two paths connecting $\xx$ and $\yy$ on the lattice described in Fig. \ref{fig1}.  
Note that we choose paths in this way in order to be sure that we have an expression which manifestly 
retains the rotation and reflection symmetries of the original interaction.
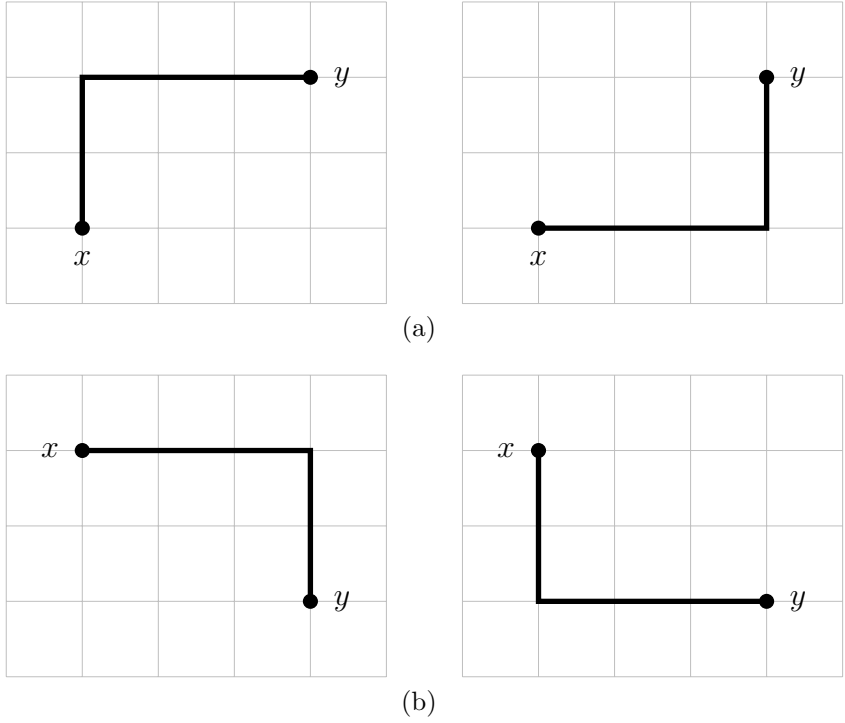
\begin{figure}[h]
\centering
\subfloat[]{
\begin{tikzpicture}[line width=2pt]
\draw[lightgray,line width=0] (-1,-1) grid ++(5,4);
\draw +(0,0) node [vertex,label=below:$x$] {} |- ++(3,2) node [vertex,label=right:$y$] {};
\draw[lightgray,line width=0] (5,-1) grid ++(5,4);
\draw (6,0) node [vertex,label=below:$x$] {} -| ++(3,2) node [vertex,label=right:$y$] {};
\end{tikzpicture}}\\
\subfloat[]{
\begin{tikzpicture}[line width=2pt]
\draw[lightgray,line width=0] (-1,-1) grid ++(5,4);
\draw +(0,2) node [vertex,label=left:$x$] {} -| ++(3,0) node [vertex,label=right:$y$] {};
\draw[lightgray,line width=0] (5,-1) grid ++(5,4);
\draw (6,2) node [vertex,label=left:$x$] {} |- ++(3,-2) node [vertex,label=right:$y$] {};
\end{tikzpicture}}
\caption{The two paths $\CC_U(\xx,\yy)$ (on the left) and $\CC_D(\xx,\yy)$ (on the right) in the two cases where: (a) $x_1<y_1$ and $x_2<y_2$; (b) $x_1<y_1$ and $x_2>y_2$. If $x_1=y_1$ or $x_2=y_2$, we choose $\CC_U(\xx,\yy)=\CC_D(\xx,\yy)$ to be the horizontal or vertical segment 
connecting $\xx$ with $\yy$; this means that horizontal or vertical strings are counted twice.}
\label{fig1}
\end{figure}
In terms of these ``string operators", we can use the identity 
$$
 e^{\pm x}=\cosh x(1 \pm \tanh x)
$$ and the fact that $U, D = \pm 1$ to rewrite the product in the r.h.s.\ of Eq.(\ref{2.58}) as:
\bea && \prod_{\{\xx,\yy\}} e^{\b\l\s_\xx v(\xx-\yy)\s_\yy}=
\prod_{\{\xx,\yy\}} e^{\frac{\b\l}2 v(\xx-\yy)\big(U_{\xx,\yy}+D_{\xx,\yy}\big)}=
\Big[\prod_{\{\xx,\yy\}} \cosh^2\!\big(\tfrac{\b\l}2 v(\xx-\yy)\big)\Big]\cdot\nonumber\\
&&\cdot\Big[\prod_{\{\xx,\yy\}}\Big(1+\tanh\!\big(\tfrac12 \b \l v(\xx-\yy)\big) U_{\xx,\yy} \Big)
\Big(1+\tanh\!\big(\tfrac12 \b \l v(\xx-\yy)\big)  D_{\xx,\yy} \Big)
\Big]
\;.\nonumber\eea
The second line is a product of binomials, each consisting of 1 plus a non trivial term; these non trivial 
terms can each be graphically associated with a ``string" $S$ (i.e., the union of the bonds in
$\CC_U(\xx,\yy)$ or $\CC_D(\xx,\yy)$), either of type U or D, depending on 
whether it is associated to the path $\CC_U(\xx,\yy)$ 
or $\CC_D(\xx,\yy)$; in both cases we shall write $v_S:=v(\xx-\yy)$.  Note that the assumption that the interaction has range $a M_0$ means that all of the strings we are considering consist of no more than $M_0$ bonds. 
If we now expand the product $$\Big[\prod_{\{\xx,\yy\}}\Big(1+\tanh\!\big(\tfrac12 \b \l v(\xx-\yy)\big) U_{\xx,\yy} \Big)
\Big(1+\tanh\!\big(\tfrac12 \b \l v(\xx-\yy)\big)  D_{\xx,\yy} \Big)
\Big]$$ we get a 
sum over all subsets of the collection of strings which appear in the interaction under consideration, in which each term is a product of string operators.  Moreover, every such subset can be thought of as a collection of its maximal connected components 
(here we say that a set of strings $\SS=\{S_1,\ldots,S_m\}$ is connected if, given $1\le i_0<j_0\le m$,
we can find a sequence $(S_{i_0},S_{i_1},\ldots,S_{i_p}\=S_{j_0})$ such that $S_{i_l}\cap 
S_{i_{l+1}}\neq \emptyset$). From a graphical point of view, every connected component $\SS$
corresponds in a non-unique way to a polymer $\g(\SS)$, i.e., a connected set of bonds.   It is helpful to color the bonds in $\g(\SS)$ black or gray, depending 
on whether the given bond belongs to an odd or even number of strings in $\SS$, and denote the set of bonds thus colored black by $\black(\SS)$. 
We call the collection of terms associated with a polymer $\g$ the activity of $\g$ and denote it by
$z(\g)$ and noting that $a^2\e_b^2 \equiv 1$
\be z(\g)=\sum_{\substack{\SS \text{ connected}: \\ \g(\SS)=\g}}\Big[\prod_{S\in\SS}\tanh(\tfrac12 \b\l v_S)\Big]\Big[\prod_{b \in \black(\SS)} a\e_b\Big]\;,\label{2.7a}\ee
in terms of which we can finally rewrite
\be
 \prod_{\{\xx,\yy\}} e^{\b\l\s_\xx v(\xx-\yy)\s_\yy}=
\Big[\prod_{\{\xx,\yy\}} \cosh^2 \!\big(\tfrac12 \b\l v(\xx-\yy)\big)\Big]\sum_{\G\subseteq\L} \f(\G) \prod_{\g\in\G} z(\g) \label{2.7b}
\ee
where the sum $\sum_{\G\subseteq\L}$ in the r.h.s.\ runs over sets of polymers, $\G=\{\g_1,\ldots,\g_n\}$,
to be called {\it polymer collections}, such that each polymer is contained in $\L=\L^a_M$, i.e. it is formed by bonds in $\BB^a_M$.
Moreover, the function $\f(\{\g_1,\ldots,\g_n\})$ implements the hard core condition, that is $\f$ is equal to 1 if none of the 
polymers overlap, and 0 otherwise; the term with $\G=\emptyset$ should be interpreted as 1. 
Note that the r.h.s.\ 
of Eq.~(\ref{2.7b}), like that of~\eqref{2.7a}, is multilinear in the bond variables (i.e., each bond variable 
$\e_b$ appears at most
once in every term in the sum). Hence, plugging Eq.(\ref{2.7b}) into Eq.(\ref{2.58}) and using 
Eq.(\ref{2.6a}), we find:
\begin{multline}
Z(\bsA)=\sum_{\boldsymbol\a}\frac{\t_{\boldsymbol\a}}2\,
(- 2a^{-2})^{M^2}  \Big[ \prod_{b \in \BB^a_M} \cosh ( \beta J+aA_b) \Big]
\Big[\prod_{\{\xx,\yy\}} \cosh^2 \!\big(\tfrac12 \b\l v(\xx-\yy)\big)\Big] \cdot\\
\cdot\int\DD\Phi \sum_{\G\subseteq\L} \f(\G) 
\Big[\prod_{\g\in\G} \tilde\z_G(\g)\Big]e^{S(\Phi,\bsA)}\;,\label{2.7c}\end{multline}
where 
\begin{figure}
\centering
\begin{tikzpicture}[line width=2pt]
\draw[lightgray,line width=0] (-0.5,-0.5) grid (4.5,4.5);
\draw (0,0) node [vertex] {} -| (2,3) node [vertex] {};
\node[below] at (1.5,0) {$S_1$};
\draw (0,0.2) node [vertex] {}  -| (1.8,1) |- (1,4) node [vertex] {};
\node[left] at (1.8,2.5) {$S_2$};
\draw (2.2,2) node [vertex] {} |- (4,4) node [vertex] {};
\node[below] at (3.5,4) {$S_3$};
\end{tikzpicture} 
$\to$
\begin{tikzpicture}[line width = 4pt]
\draw[lightgray,line width=0] (-0.5,-0.5) grid (4.5,4.5);
\draw[lightgray](0,0) -| (2,2);
\draw (2,2) -- (2,3);
\draw[lightgray] (2,3) -- (2,4);
\draw (1,4) -- (4,4);
\node[right] at (2,2.5) {$b_1$};
\foreach \x in {2,3,4} \node[above] at (\x-0.5,4) {$b_\x$};
\node[below] at (1.5,0) {$\g$};
\end{tikzpicture} 
\\
$\to \foreach \x in {1,2,3} { \tanh (\tfrac12 \b \l v_{S_\x} )} \times \foreach \x in {1,2,3,4} { \left( t_{b_\x} + (1-t_{b_\x}^2) aE_{b_\x} \right) }$ 
\caption{Example of a set of strings $\SS=\{S_1,S_2,S_3\}$ and the corresponding coloring of the associated polymer $\g$.  The contribution to $\tilde\z_G$ associated with $\SS$ is given below.}
\end{figure}
\be \tilde\z_G(\g)=\sum_{\substack{ \SS \text{ connected}:\\ 
\g(\SS)=\g}} \prod_{S\in\SS} \tanh( \tfrac12 \b\l v_S) \prod_{b \in \black(\SS)} \left( t_b + a(1-t_b^2) E_b \right) \;,\label{2.8}\ee
This is exactly the sort of expression which is the subject of the standard cluster expansion (for a 
presentation in a convenient form, see~\cite[Chap. 7]{GBG}),
which gives the identity
\be \sum_{\G\subseteq\L}\f(\G) \prod_{\g\in\G} \tilde\z_G (\g) 
= \exp\Big\{\sum_{\G\subseteq\L}\f^T(\G) \prod_{\g\in\G} \tilde\z_G (\g) \Big\} \label{re_exp}\ee
which is valid provided that the sum in the r.h.s.\ is absolutely convergent. In the r.h.s.\
the sum over $\G$ involves polymer collections that include overlapping and even repeated 
polymers (we let $\G(\g)$ be the multiplicity of $\g$ in $\G$) and 
 $\f^T(\{\g_1,\ldots,\g_n\})$ are {\it Mayer's coefficients},  
 which admit the following explicit representation.
Given $\g_1,\ldots,\g_n$, consider
the graph $\GG$ with $n$ nodes, labelled by $1,\ldots,n$, with edges
connecting all pairs $i,j$ such that $\g_i\cap \g_j\neq\emptyset$ ($\GG$ is sometimes called the 
connectivity graph of the collection of polymers $\g_1,\ldots,\g_n$). 
Then one has $\f^T(\emptyset)=0$, $\f^T(\{\g\})=1$ and, for $n>1$:
\be \f^T(\{\g_1,\ldots,\g_n\})= \frac1{\G!}\sum_{C\subseteq \GG}^* (-1)^{\hbox{\text number of
edges in}\ C} , \label{04.4}\ee
where $\G!=\prod_\g \G(\g)!$ and the sum runs over all the connected subgraphs $C$ of $G$ {\it that
visit all the $n$ points} $1,\ldots,n$. In particular, if $n>1$, then $\f^T(\{\g_1,\ldots,\g_n\})=0$ unless $\{\g_1,\ldots,\g_n\}$ is connected. 

Plugging Eq.\eqref{re_exp} into Eq.(\ref{2.7c}) gives
\bea 
Z(\bsA)&=&\sum_{\boldsymbol\a}\frac{\t_{\boldsymbol\a}}2\,
(- 2a^{-2})^{M^2}  \Big[ \prod_{b \in \BB^a_M} \cosh ( \beta J+aA_b) \Big]\label{2.65a}
 \cdot\\
&&\cdot\Big[\prod_{\{\xx,\yy\}} \cosh^2 \!\big(\tfrac12 \b\l v(\xx-\yy)\big)\Big]
\int\DD\Phi e^{S(\Phi,\bsA)+\tilde\VV(\Phi,\bsA)}\;,\nonumber\eea
with 
\be
\tilde \VV(\Phi,\bsA):= 
\sum_{\G\subseteq\L} \f^T(\G)\prod_{\g\in\G}\tilde\z_G(\g), 
\label{Vt_as_zeta}
\ee
provided that the r.h.s.\ of this equation is absolutely convergent, in the following sense: 
$\tilde\VV(\Phi,\bsA)$ is a polynomial in the Grassmann variables $\{\lis H_\xx,H_\xx,$ 
$\lis V_\xx,V_\xx\}$,
whose coefficients can be written as infinite sums over polymer configurations 
$\{\g_1,\ldots,\g_n\}$; by absolute convergence of $\tilde\VV$,
we mean that each of these infinite sums are absolutely convergent.
As far as the 
dependence of $\tilde\VV$ on $\bsA$ is concerned, we note that since $\tilde\z_G(\g)$ is defined through finite sums and 
products of hyperbolic tangents $t_b=\tanh(\b J+aA_b)$, the activities $\tilde\z_G(\g)$ are analytic as functions of each $A_b$. 
However, only the terms at most linear in each $A_b$ can 
contribute to the computation of $\media{\e_{b_1};\cdots;\e_{b_m}}^T_{\b,L}$,
with $b_1,\ldots,b_m$ all distinct. Therefore, to the purpose of computing these truncated expectations,
we can safely replace $\tilde\z_G(\g)$ by its multilinear part in $\{A_b\}_{b\in\BB^a_M}$, namely:
\be \tilde\z_G(\g)\to\z_G(\g):= \sum_{R \subseteq \g} \sum_{Y \subseteq \g}  \z(R,Y;\g) \prod_{b \in R} aE_b \prod_{b \in Y} aA_b\;,\label{2.67}\ee
with
\be
\z(R,Y;\g) := \sum_{\substack{ \SS \text{ connected}:\\ \g(\SS) = \g \\ \black(\SS) \supseteq R \cup Y}}   \,\prod_{S\in\SS} \tanh( \tfrac12 \b\l v_S) 
 \prod_{b \in \black(\SS)} \piecewise{
t, & b \notin R \cup Y \\
1- t^2, & b \in R \D Y \\
-2 t (1 - t^2), & b \in R \cap Y}
\label{eq:zDef}
\ee
where $(R,Y;\g)$ can be associated with a decorated contour
\begin{figure}
\centering
\begin{tikzpicture}[line width = 4pt]
\draw[lightgray,line width=0] (-0.5,-0.5) grid (4.5,4.5);
\draw[lightgray](0,0) -| (2,2);
\draw[orange] (2,2) -- (2,3);
\draw[lightgray] (2,3) -- (2,4);
\draw[red] (1,4) -- (2,4);
\draw[yellow] (2,4) -- (3,4);
\draw (3,4)--(4,4);
\node[right] at (2,2.5) {$b_1$};
\foreach \x in {2,3,4} \node[above] at (\x-0.5,4) {$b_\x$};
\end{tikzpicture}
\caption{A fully colored contour as used in evaluating $W$: $b_1$ is orange, $b_2$ is
red, $b_3$ is yellow and $b_4$ is black; the other unlabeled 
bonds are grey (color online). By definition, the activity of this colored contour is proportional 
to $\left( -2t(1-t^2) \right) aE_{b_1} aA_{b_1} \times (1-t^2)aE_{b_2}\times (1-t^2) aA_{b_3} \times t$. \label{fig:decor_gamma}}
\end{figure}
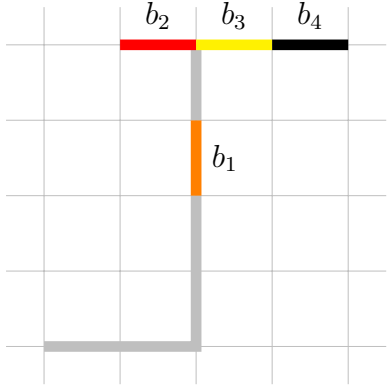
where the bonds in $R$ are drawn red, all bonds in $Y$ are drawn yellow, and all bonds in both are 
drawn orange (see Figure~\ref{fig:decor_gamma}). Similarly, we can replace the factor 
$ \Big[ \prod_{b \in \BB^a_M} \cosh ( \beta J+aA_b) \Big]e^{S(\Phi,\bsA)}$ in Eq.(\ref{2.65a}) by 
its multilinear part in $\{A_b\}_{b\in\BB^a_M}$. Performing these replacements corresponds to 
replacing the generating function in Eq.(\ref{2.65a}) by:
\be Z(\bsA) \to e^{at\sum_bA_b}\,\Xi(\bsA)\;,\label{gen_fun}\ee
where $\Xi(\bsA)$ is the same as in Eq.(\ref{2.51}), with
\bea
V_M(\l)+\VV(\Phi,\bsA)&=& 
\sum_{\G\subseteq\L}\f^T(\G)\prod_{\g\in\G}\z_G(\g)=\nonumber\\
&=&\sum_{R\subseteq\BB^a_M}\sum_{Y\subseteq\BB^a_M}W(R,Y)\Big[\prod_{b\in R}aE_b\Big]\Big[\prod_{b\in Y}aA_b\Big]\;, 
\label{V_as_zeta}
\eea
and $V_M(\l)=W(\emptyset,\emptyset)$. Here the coefficients $W(R,Y)$ are defined as
\be
W(R,Y) =\sum_{\substack{n\ge 0\\ {\boldsymbol\g}=(\g_1,\ldots,\g_n)}} \frac{\G({\boldsymbol\g})!}{n!}\,
 \f^T(\G({\boldsymbol\g}) )
\sum_{\substack{(R_1,\ldots,R_n)\in\PP_{R,{\boldsymbol\g}}\\(Y_1,\ldots,Y_n)\in
\PP_{Y,{\boldsymbol{\g}}}}}\,
\prod_{j=1}^n \z(R_j,Y_j;\g_j)\;, \label{w_expansion}
\ee
where ${\boldsymbol\g}=(\g_1,\ldots,\g_n)$ is an ordered $n$-tuple of polymers, $\G({\boldsymbol\g})$
is the corresponding unordered $n$-tuple and the combinatorial factor $\frac{\G({\boldsymbol\g})!}{n!}$
is used to pass from the summation over $\G$ to the one over ${\boldsymbol\g}$.
Moreover, $\PP_{R,{\boldsymbol\g}}$ is the 
set of ordered $n$-ples of disjoint sets $(R_1,\ldots,R_n)$, such that $R_i\subseteq \g_i$ and 
$\cup_{i=1}^nR_i=R$. Once again, these rewritings are valid provided that the sums entering the 
definition of $W(R,Y)$ are absolutely convergent. Note also that the prefactor $e^{at\sum_bA_b}$ 
in the r.h.s.\ of Eq.(\ref{gen_fun}) is irrelevant to the purpose of computing 
the truncated expectations $\media{\e_{b_1};\cdots;\e_{b_m}}^T_{\b,L}$ with $m\ge 2$, which explains
why we did not insert this prefactor in the statement of Proposition \ref{prop3}.

The absolute convergence of Eq.(\ref{w_expansion}) follows from an important result in the cluster 
expansion \cite[Proposition 7.1.1]{GBG}, which states (in part) that for any positive 
function on polymers which decays as
\be
f(\g) \le \n_0^{2|\g|} e^{-2\k_0 \d(\g)/a}\label{2.70}
\ee
for some $\k_0>0$ and $\n_0$ sufficiently small (here $|\g|$ is the number 
of bonds in $\g$), then
\be
\sup_{b} 
\sum_{\substack{\G=\{\g_1, \ldots,\g_n\} \\ \diam \left( \cup_j \g_j \right) \ge R \\ \bigcup_j \g_j \ni b}}
\left| \f^T(\G) \right| \:  \prod_{\g\in\G} f(\g) \le 2 \n_0 e^{-\frac{\k_0}{2} R/a}\;.\label{2.71}
\ee
Actually, the proof of \cite[Proposition 7.1.1]{GBG} implies a slightly more general bound; namely, given 
$B\subseteq\BB^a_M$,
\be
\sum_{\substack{\G=\{\g_1, \ldots,\g_n\} \\ \cup_j \g_j \supseteq B}}
\left| \f^T(\G) \right| \:  \prod_{\g\in\G} f(\g) \le C_1 \n_0^{c_1|B|} e^{-c_1\k_0 \d(B)/a}\;,
\label{2.72}\ee
for two suitable constants $C_1,c_1>0$.

To apply this, we first note that the generic contribution to the r.h.s.\ of Eq.\eqref{w_expansion}
vanishes unless the collection ${\boldsymbol\g}=(\g_1,\ldots,\g_n)$ is connected and touches all the bonds in
$B:=R\cup Y$. Therefore, 
\be
\begin{split} 
|W(R,Y)|\le  \sum_{\substack{n\ge 0\\ {\boldsymbol\g}=(\g_1,\ldots,\g_n)}} \frac{\G({\boldsymbol\g})!}{n!}\,
| \f^T(\{\g_1,\ldots,\g_n\}) | \,
\sum_{\substack{(R_1,\ldots,R_n)\in\PP_{R,{\boldsymbol\g}}\\(Y_1,\ldots,Y_n)\in
\PP_{Y,{\boldsymbol{\g}}}}}\,\prod_{j=1}^n| \z(R_j,Y_j;\g_j)|\;.\end{split}
\ee
where
\be \sum_{\substack{(R_1,\ldots,R_n)\in\PP_{R,{\boldsymbol\g}}\\(Y_1,\ldots,Y_n)\in
\PP_{Y,{\boldsymbol{\g}}}}}\,\prod_{j=1}^n| \z(R_j,Y_j;\g_j)|\le
\prod_{j=1}^n \Big(\sum_{\substack{R_j \subseteq \g_j\\Y_j \subseteq \g_j}} |\z(R_j,Y_j,\g_j)| \Big)
\le \prod_{j=1}^n f(\g_j)\;,\nonumber\ee
and
\be f(\g):= 4^{|\g|}
\max_{\substack{R \subseteq \g \\ Y \subseteq \g}} |\z(R,Y;\g)|\ee
plays the role of the function $f(\g)$ in the l.h.s.\ of Eq.(\ref{2.72}). In order to use the bound 
Eq.(\ref{2.72}), we need to prove Eq.(\ref{2.70}).
Using the definition Eq.~\eqref{eq:zDef} of $\z(R,Y,\g)$, it is clear that
\bea
f(\g) &\le&4^{|\g|}\sum_{\substack{ \SS \text{ connected}:\\ \g(\SS) = \g}}
   \prod_{S\in\SS}  \tfrac12 \b|\l v_S| \nonumber\\
&\le& 4^{|\g|}\sum_{m\ge \frac{\d(\g)}{aM_0}}\frac1{m!}\Big(\frac{\b|\l|}{2}\Big)^m\Big(\frac12\sum_{b\in\g}
\sum_{S\ni b}^*|v_S|\Big)^m 
   \;,\label{zbound}\eea
where the $*$ on the sum $\sum_{S\ni b}^*|v_S|$ indicates the constraint that $b$ is either the first or the last bond in $S$. By using the normalization 
$\frac12\sum_\xx|v(\xx)|=1$ (which means that $\frac12\sum_{S\ni b}^*|v_S| = 1$) and defining $m_0=m_0(\g)= \frac{\d(\g)}{aM_0}\equiv \frac{|\g|}{M_0}$, we get:
\bea f(\g)&\le&  4^{|\g|}\sum_{m\ge m_0}\frac1{m!}\Big(\frac{\b|\l|\,|\g|}2\Big)^m
\nonumber\\
&\le&4^{|\g|}\frac{|\g|^{m_0}}{m_0!}\Big(\frac{\b|\l|}2\Big)^{m_0}\sum_{m\ge 0}\frac1{m!}
\Big(\frac{\b|\l|\,|\g|}2\Big)^m\label{2.76}\\
&\le&(4 e)^{|\g|}\Big(\frac{\b|\l|}2\Big)^{m_0}e^{\frac12\b|\l|\,|\g|}\equiv
\n_0^{2|\g|}e^{-2\k_0\d(\g)/a}\;,\nonumber
\eea
which proves Eq.(\ref{2.70}) with $\n^2_0=4e^{1+\frac12\b|\l|}\big(\frac{\b|\l|}2)^{\frac1{2M_0}}$ and
$e^{-2\k_0}=
\big(\frac{\b|\l|}2\big)^{\frac1{2M_0}}$. This concludes the proof of the representation 
Eq.(\ref{2.51}). The bound Eq.(\ref{W_decay}) on the decay of the kernels of $\VV(\Phi,\bsA)$ is 
just a restatement of the decay bound on $W(R,Y)$ that we just derived (note that the factor 
$a^{-(2-n-m)}$ comes from the factors $a$ in front of the fields $E_b$ and $A_b$ in Eq.(\ref{V_as_zeta}) and from the definition of $\int d\xx$, which brings along a factor $a^2$; in fact, $\int d\xx=a^2\sum_{\xx\in \L^a_M}$).

All that remains is to prove the stated properties of $V_M(\l) \equiv W(\emptyset,\emptyset)$, where
\bea  W(\emptyset,\emptyset)&=&
\sum_{\G\subseteq\L}\f^T(\G) 
\prod_{\g\in\G} \z(\emptyset,\emptyset;\g) \label{w_empty}\\
&=&\sum_{b \in \BB_M^a} \Big[\sum_{\substack{\G\subseteq\L:\\
\supp\G\ni b}} \frac{\f^T(\G)}
{|\supp\G|} \prod_{\g\in\G} \z(\emptyset,\emptyset;\g)\Big] \;,\nonumber\eea
where $\supp\G=\cup_{\g\in\G}\g$ and $|\supp\G|$ is the number of bonds in $\supp\G$.
The sum in square brackets is independent of $b$, by translation invariance. Moreover, 
it is independent of $a$, as it follows by the definition of $\z(\emptyset,\emptyset;\g)$ and by a 
relabeling of the lattice spacing $a$.
Therefore, $W(\emptyset,\emptyset)$ can be bounded by 
$2M^2 s(\l)$, where $s(\l)$ is a bound on the expression in square brackets. A repetition 
of the argument used for $W(R,Y)$ above implies that $s(\l)$ is of order $\l$, which concludes the proof 
of Proposition \ref{prop3}.
\end{proof}

\subsection{Majorana form of the action}

As we have shown, the non-integrable Ising model under consideration can be expressed in the form of 
an interacting fermionic system, described by the action Eq.(\ref{2.51}), which consists of a leading 
term, $S_t(\Phi)+\sum_b\big[t+(1-t^2)E_b\big]A_b$, plus an {\it interaction}, which vanishes as $\l\to 0$ 
and, in this respect, is ``subdominant". 
Since in the following we want to treat this subdominant term as a perturbation in the vicinity of the 
critical point, it is convenient to use coordinates adapted to the critical modes of the leading term, as 
described in the following.

The Fourier transforms of the Grassmann variables are defined as:
\bea &&
\hat H_\kk := a^2 \sum_{\xx \in a \ZZZ^2_M} e^{i\kk \cdot \xx} H_\xx\;,\qquad \hat{\lis H}_\kk := a^2 \sum_{\xx \in a \ZZZ^2_M} e^{i\kk \cdot \xx} \lis H_\xx\;,\\
&&\hat V_\kk := a^2 \sum_{\xx \in a \ZZZ^2_M} e^{i\kk \cdot \xx} V_\xx\;,\qquad \hat{\lis V}_\kk := a^2 \sum_{\xx \in a \ZZZ^2_M} e^{i\kk \cdot \xx} \lis V_\xx\;,\eea
where, if $\boldsymbol\a=(\a_1,\a_2)$,  then $\kk\in \DD_M^{\boldsymbol\a}:=\{\frac{2\p}{L}(\nn+\frac12{\boldsymbol\a}): \nn\in\mathbb Z^2/ M\mathbb Z^2\}$. The inverse transformation reads:
\bea &&
H_\xx := \frac1{L^2} \sum_{\kk\in\DD_M^{\boldsymbol\a}} e^{-i\kk \cdot \xx} \hat H_\kk\;,\qquad 
\lis H_\xx := \frac1{L^2} \sum_{\kk\in\DD_M^{\boldsymbol\a}} e^{-i\kk \cdot \xx} {\hat{\lis H}}_\kk\;,\\
&&
V_\xx := \frac1{L^2} \sum_{\kk\in\DD_M^{\boldsymbol\a}} e^{-i\kk \cdot \xx} \hat V_\kk\;,\qquad 
\lis V_\xx := \frac1{L^2} \sum_{\kk\in\DD_M^{\boldsymbol\a}} e^{-i\kk \cdot \xx} {\hat{\lis V}}_\kk\;.\eea
If $\hat\Phi_\kk$ is the column vector with components 
${\hat{\lis H}}_\kk, \hat H_\kk, {\hat{\lis V}}_\kk, \hat V_\kk$, respectively, the leading quadratic part of the 
action, $S_t(\Phi)$, can be rewritten in Fourier space as:
\be S_t(\Phi)= \frac{1}{L^2}\sum_{\kk \in \DD_M^{\boldsymbol\a}}\hat\Phi^T_{-\kk}
C_\kk\hat\Phi_{\kk}\;,\ee
where
\be
C_\kk := \frac{a^{-1}}2 \begin{pmatrix}
0 & 1 + te^{-iak_1} & -1 & -1 \\
-1 - t e^{ia k_1}& 0 & 1 & -1 \\
1 & -1 & 0 & 1 + te^{-iak_2} \\
1 & 1 & -1 -  te^{iak_2}  & 0 
\end{pmatrix}\;,
\ee
the inverse of which has the meaning of free propagator of the Grassmann field $\Phi$.
Note that $C^{-1}_\kk$ is singular only at $\kk=\V0$ and $t=\sqrt{2}-1$ (criticality condition). In this 
case, $C_{\V0}$ has two vanishing eigenvalues and two purely imaginary eigenvalues $\pm i a^{-1}\sqrt2$, 
and the corresponding eigenmodes read:
\be
\begin{pmatrix}
\hat{\psi}_{\V0,+}' \\ \hat\psi'_{\V0,-} \\ \hat{\chi}_{\V0,+}' \\ \hat\chi_{\V0,-}'
\end{pmatrix}
= U \hat\Phi_\V0\;,\qquad U = \frac12
\begin{pmatrix}
e^{i\frac{\pi}4} & e^{-i\frac\pi4} & 1 & -i \\
e^{-i \frac\pi4} & e^{i\frac{\pi}4} & 1 & i \\
-e^{i\frac{\pi}4} & -e^{-i\frac\pi4} & 1 & -i \\
-e^{-i\frac\pi4} & -e^{i\frac{\pi}4} & 1 & i
\end{pmatrix}\;.\label{u1}
\ee
The natural variables at the critical point are the ``critical eigenmodes" defined by the unitary 
transformation $U$  in Eq.(\ref{u1}), 
namely 
\be
\begin{pmatrix}
\hat{\psi}_{\kk,+}' \\ \hat\psi_{\kk,-}' \\ \hat{\chi}_{\kk,+}' \\ \hat\chi_{\kk,-}'
\end{pmatrix}
= U \hat\Phi_\kk\;.\ee
For later convenience, we rescale these variables as (where $\o=\pm$)
\be \hat{\psi}_{\kk,\o}'=\frac{i\o}{\sqrt{\p t}}
\hat{\psi}_{\kk,\o}\;,\qquad 
\hat{\c}_{\kk,\o}'=\frac{i\o}{\sqrt{\p t}}
\hat{\c}_{\kk,\o}\;,\label{2.resc}\ee
so that, defining $\hat\psi_{\kk}$ (resp. $\hat\c_\kk$) 
as the column vector with components $\hat{\psi}_{\kk,+},\hat\psi_{\kk,-}$ 
(resp. $\hat{\c}_{\kk,+},\hat\c_{\kk,-}$), we can rewrite:
\be 
S_t (\Phi) = -\frac{1}{4\p L^2} \sum_{\kk \in \DD_M^{\boldsymbol\a}}\Big(
\hat\psi^T_{-\kk} C_\psi(\kk)\hat\psi_{\kk}+\hat\c^T_{-\kk}
C_\c(\kk)\hat\c_{\kk}\Big)+Q(\psi,\c)\;,\ee
where, if $\#=\psi,\c$:
\be
C_\#(\kk) = \begin{pmatrix}
a^{-1}(-i \sin a k_1 +\sin a k_2) & i \s_\#(\kk) \\
-i\s_\#(\kk)& a^{-1}(-i \sin a k_1-\sin a k_2)
\end{pmatrix}\label{2.90a}\ee
and
\bea&& \s_\psi(\kk)=\frac1{a}\Big(\cos ak_1+\cos a k_2-2\frac{\sqrt 2-1}{t}\Big)\;,\\
&&\s_\c(\kk)=\frac1{a}\Big(\cos ak_1+\cos a k_2+2\frac{\sqrt 2+1}{t}\Big)\;.\eea
Moreover,
\bea&&
Q(\psi,\c) = \frac{1}{4\p L^2}\sum_{\kk\in\DD_M^{\boldsymbol\a}}\big( \hat\psi^T_{-\kk} Q(\kk)\hat\c_{\kk}+
\hat\c^T_{-\kk} Q(\kk)\hat\psi_{\kk}\big)\;,\label{2.93a}\\
&& Q(\kk)=
\begin{pmatrix}
a^{-1}(-i\sin a k_1- \sin a k_2) & a^{-1}(i \cos ak_1 - i\cos ak_2 ) \\
a^{-1}(- i \cos ak_1 +i\cos ak_2 ) & a^{-1}(-i\sin ak_1+\sin a k_2 )\end{pmatrix}\;.\nonumber
\eea
In terms of this notation, we can rewrite Eq.(\ref{2.51}) as
\be \Xi(\bsA)
 = \sum_{{\boldsymbol\a}\in\{\pm\}^2}C_{M,{\boldsymbol\a}} \int  P_{\boldsymbol\a}(d\psi)
P_{\boldsymbol\a}(d\c)e^{Q(\psi,\c)+\BB(\psi,\c,\bsA)+\VV(\psi,\c,\bsA)}\;,\label{2.88}\ee
where:
\begin{itemize}
\item The normalization constant $C_{M,{\boldsymbol\a}}$ is defined as 
$$C_{M,{\boldsymbol\a}}=\frac{C_M}{2}(a^2M)^{4M^2} \t_{\boldsymbol\a}\,
 {\NN}_{\psi,{\boldsymbol\a}}{\NN}_{\c,{\boldsymbol\a}}\;,$$
 where: $C_M$ was defined in Eq.(\ref{2.52}); the factor $(a^2M)^{4M^2}$ takes into account the change of variable from the set of Grassmann
variables
$\{\lis H_{\xx},H_\xx,\lis V_\xx,V_\xx\}$ to $\{\hat\psi_{\kk,\o},\hat\c_{\kk,\o}\}$: in fact a computation 
shows that 
$$\int \DD \Phi \Big[\prod_{\kk \in \DD_M^{\boldsymbol\a}}  \hat{\lis H}_\kk \hat H_\kk \hat{\lis V}_\kk \hat
V_\kk\Big] = a^{8M^2} \big[{\rm det}_{\kk,\xx} (e^{i\kk\cdot\xx})\big]^4= (a^2M)^{4M^2}\;;$$
$\t_{\boldsymbol\a}$ was defined after Eq.(\ref{2.4tt});  ${\NN}_{\psi,{\boldsymbol\a}}$ and ${\NN}_{\c,{\boldsymbol\a}}$ are the normalization constants of the two Grassmann gaussian integrations 
$P_{\boldsymbol\a}(d\psi)$ and $P_{\boldsymbol\a}(d\c)$, see next item.
\item The Grassmann gaussian integrations 
$P_{\boldsymbol\a}(d\psi)$, $P_{\boldsymbol\a}(d\c)$ are defined as
\bea && 
P_{\boldsymbol\a}(d\psi):=\frac1{\NN_{\psi,{\boldsymbol\a}}}\Big[\prod_{\kk\in\DD_M^{\boldsymbol\a}}\prod_{\o=\pm}d\hat\psi_{\kk,\o}\Big] \exp\Big\{- \frac{1}{4\p L^2} \sum_{\kk \in 
\DD_M^{\boldsymbol\a}} \hat\psi^T_{-\kk} C_\psi(\kk)\hat\psi_{\kk}\Big\}\;,\nonumber\\
&&P_{\boldsymbol\a}(d\c):=\frac1{\NN_{\c,{\boldsymbol\a}}}\Big[\prod_{\kk\in\DD_M^{\boldsymbol\a}}\prod_{\o=\pm}d\hat\c_{\kk,\o}\Big] \exp\Big\{- \frac{1}{4\p L^2} \sum_{\kk \in 
\DD_M^{\boldsymbol\a}} \hat\c^T_{-\kk} C_\c(\kk)\hat\c_{\kk}\Big\}\;,\nonumber\eea
and $\NN_{\psi,{\boldsymbol\a}}, \NN_{\c,{\boldsymbol\a}}$ are two normalization constants,
fixed in such a way that $\int P_{\boldsymbol\a}(d\psi)=\int P_{\boldsymbol\a}(d\c)=1$.
\item The source term $\BB(\psi,\c,\bsA)$ is the rewriting of $(1-t^2)({\boldsymbol{E}}, \bsA)$ 
in terms of the new variables, namely 
\bea  && \BB(\psi,\c,\bsA)=-i\,\frac{1-t^2}{2 t}\int \frac{d\xx}{2\p}\cdot\label{2.en}\\
&&\Big[A_{\xx,1}
(\psi_{\xx,+}-i\psi_{\xx,-}-\c_{\xx,+}+i\c_{\xx,-})\t_1
(-i\psi_{\xx,+}+\psi_{\xx,-}+i\c_{\xx,+}-\c_{\xx,-})\nonumber\\
&&+A_{\xx,2}(\psi_{\xx,+}-\psi_{\xx,-}+\c_{\xx,+}-\c_{\xx,-})
\t_2(\psi_{\xx,+}+\psi_{\xx,-}+\c_{\xx,+}+\c_{\xx,-})\Big]\;,\nonumber
\eea
where $\t_j$ is the translation operator that shifts by one lattice step 
the argument of the field which it acts on: $\t_j\psi_{\xx,\o}=\psi_{\xx+a\hat{\bf e}_j,\o}$ and similarly 
for $\c$.
\item $\VV(\psi,\c,\bsA)$ is the rewriting of $\VV(\Phi,\bsA)$ in terms of the new variables. It is easy to check that 
its kernels satisfy the same decay estimates as those of $\VV(\Phi,\bsA)$, see Eq.(\ref{W_decay}) 
in Proposition \ref{prop3}. 
\end{itemize}
As mentioned in the introduction, we are concerned with a scaling limit such that 
we take the thermodynamic limit $M\to\infty$ first, keeping $\b=\b(a)\neq \b_c(\l)$, and then (simultaneously) 
$a\to 0$ and $\b\to\b_c(\l)$. In doing so, the resulting multi-point energy correlations are insensitive 
to the Grassmann boundary conditions, labeled by the 
four possible values of $\boldsymbol\a$; this is true both in the $\l=0$ and in the $\l\neq 0$ case, see
\cite[Appendix G]{M04}. Therefore, the multi-point energy correlation functions in the specific
scaling limit that we consider are the same as those computed from the following generating function:
\be \Xi_{-,-}(\bsA)= \int  P(d\psi)P(d\c)
e^{Q(\psi,\c)+\BB(\psi,\c,\bsA)+\VV(\psi,\c,\bsA)}\;, \label{2.90}\ee
where $P(d\psi)$ is a shorthand for $P_{-,-}(d\psi)$ and similarly for $P(d\c)$. For future reference, 
let us note that the propagator of the $\psi$ and $\c$ fields associated with the gaussian 
integrations $P(d\psi)$ and $P(d\c)$ are given by, if $D_a^\pm(\kk)=a^{-1}(i\sin ak_1\pm
\sin ak_2)$ and $|D_a(\kk)|^2=-D_a^+(\kk)D_a^-(\kk)$,
\bea  g^\psi_{\o,\o'}(\xx-\yy)&=&\int P(d\psi) \psi_{\xx,\o}\psi_{\yy,\o'}\\
&=&
\frac{2\p}{L^2}\sum_{\kk\in\DD_M}\frac{e^{-i\kk(\xx-\yy)}}{|D_a(\kk)|^2+[\s_\psi(\kk)]^2}
\begin{pmatrix} D_a^+(\kk)& {i\s_\psi(\kk)}\label{2.54}\\
-{i\s_\psi(\kk)}& D_a^-(\kk)\end{pmatrix}_{\!\!\o,\o'}\nonumber
\eea
and
\bea g^\c_{\o,\o'}(\xx-\yy)&=&\int P(d\c) \c_{\xx,\o}\c_{\yy,\o'}\label{2.gchi}\\
&=&
\frac{2\p}{L^2}\sum_{\kk\in\DD_M}\frac{e^{-i\kk(\xx-\yy)}}{|D_a(\kk)|^2+[\s_\c(\kk)]^2}
\begin{pmatrix} D_a^+(\kk)& {i\s_\c(\kk)}\\
-{i\s_\c(\kk)}& D_a^-(\kk)\end{pmatrix}_{\!\!\o,\o'}\nonumber
\eea
where $\DD_M$ is a shorthand for $\DD_M^{-,-}$. \\

{\bf Remark.} We define the {\it unperturbed scaling limit} as follows. 
Let $\b(a)$ be fixed in such a way that, if $t=t(a)=\tanh\big(\b(a)J\big)$ and $t_c(0)=\sqrt2-1$, 
\be\frac{t(a)-t_c(0)}{t(a)}=\frac{a\s^0(a)}2\;,\label{2.unsc}\ee
where $\s^0(a)\neq 0$ for all 
$a\neq 0$ and $\lim_{a\to 0} \s^0(a)=m^*$. The limit as $a\to 0$ with $t(a)$ fixed in this way 
will be referred to as the unperturbed scaling limit. 
The explicit form of the 
propagator of the $\psi$ field shows that in the unperturbed scaling limit it behaves as:
\be g^\psi(\xx)\to \mathfrak g^0(\xx):=\int \frac{d\kk}{2\p}\,\frac{e^{-i\kk\,\xx}}{
\kk^2+(m^*)^2}
\begin{pmatrix} ik_1+k_2 & im^*\\-im^*&ik_1-k_2\end{pmatrix}\;.\ee
Note that the limiting propagator $\mathfrak g^0(\xx)$ is normalized in such a way that in the massless
case, $m^*=0$, it reduces to Eq.(\ref{1.11l}).
In the same limit, the propagator of the $\c$ field $g^\c(\xx)$ tends to zero for every fixed 
$\xx\in\mathbb R^2$; moreover, the combination $a^{-1}\int d\xx\, g^\c(\xx)$ tends 
to $-c_\c\s_2$, where $c_\c=\lim_{a\to 0} a\s_\c(\V0)=\frac{2}{t_c(0)}(t_c(0)+\sqrt2+1)$ and 
$\s_2=\begin{pmatrix} 0&-i\\i&0\end{pmatrix}$ 
is the second Pauli matrix. In this sense, as $a$ tends to zero:
\be g^\c(\xx)\simeq -a c_\c \d(\xx) \s_2\;,\label{2.chisca}\ee
where $\d(\xx)$ is the Dirac delta function.
\vskip.2truecm
The quadratic coupling $Q(\psi,\c)$ in 
Eq.(\ref{2.90}) can be eliminated by the linear change of variables 
\be \hat \c_\kk\to
\hat\c_\kk +C_\c^{-1}(\kk)Q(\kk)\hat\psi_\kk\;,\label{2.lin}\ee
which leaves $\Xi_{-,-}(\bsA)$ invariant. Note that for small $\kk$, the kernel $C_\c^{-1}(\kk)Q(\kk)$ 
associated to this transformation is small, namely $C_\c^{-1}(\kk)Q(\kk)=O(a|\kk|)$; that is, from a 
dimensional point of view, the action of $C_\c^{-1}(\kk)Q(\kk)$ on $\hat\psi_\kk$ is the same as 
the action of the differential operator $a\partial_\xx$.
After the transformation Eq.(\ref{2.lin}) we can rewrite:
\be \Xi_{-,-}(\bsA)= \frac{\lis \NN_\psi}{\NN_\psi}\int  \lis P(d\psi)P(d\c)
e^{\lis \BB(\psi,\c,\bsA)+\lis \VV(\psi,\c,\bsA)}\;,\label{2.91}\ee
where $\lis \BB(\psi,\c,\bsA)$ and $\lis\VV(\psi,\c,\bsA)$ are the rewritings of 
$\BB(\psi,\c,\bsA)$ and $\VV(\psi,\c,\bsA)$, respectively, in terms of the new variables;
moreover, the gaussian integration $\lis P(d\psi)$ is defined as
\be \lis P(d\psi)=\frac1{\lis \NN_{\!\psi}}\Big[\prod_{\kk\in\DD_M}\prod_{\o=\pm}d\hat\psi_{\kk,\o}\Big]
\exp\Big\{ -\frac{1}{4\p L^2} \sum_{\kk \in 
\DD_M} \hat\psi^T_{-\kk} \lis C_\psi(\kk)\hat\psi_{\kk}\Big\}\;,\label{pdpsiN}\ee
where the normalization constant $\lis \NN_{\psi}$
is chosen in such a way that $\int \lis P(d\psi)=1$ and
\be\lis C_\psi(\kk)=C_\psi(\kk)-Q(\kk)C_\c^{-1}(\kk)Q(\kk).\ee
{\bf Remarks.}\begin{enumerate}
\item It is easy to check that the change of variables Eq.(\ref{2.lin}) does not affect the bounds on the 
kernels of the effective potential; i.e., the kernels of $\lis\BB$ and $\lis\VV$ satisfy the same bounds 
as the kernels in Proposition \ref{prop3}, see Eq.(\ref{W_decay}).
\item The correction $\lis C_\psi(\kk)-C_\psi(\kk)$ is small at small $\kk$, i.e., 
$$Q(\kk)C_\c^{-1}(\kk)Q(\kk)=O(a|\kk|^2)\;.$$
In particular, this means that the unperturbed scaling limit of the propagator $\lis g^\psi_{\o,\o'}(\xx)=\int
\lis P(d\psi)\psi_{\xx,\o}\psi_{0,\o'}$, in the sense of the remark after Eq.(\ref{2.gchi}), is the same as the one of $g^\psi(\xx)$, that is, it is equal to 
$\mathfrak g^0(\xx)$. 
\item The representation Eq.(\ref{2.91}) is valid also when $\l=0$, in 
which case $\lis\VV=0$ and the scaling limit of the 
energy-energy correlations can be computed explicitly. In fact, using the explicit forms of
$\BB(\psi,\c,\bsA)$ in Eq.(\ref{2.en}), of the linear transformation 
Eq.(\ref{2.lin}) and of the gaussian integrations $\lis P(d\psi)P(d\c)$, we find
that for all $m$-tuples of distinct points $\xx_1,\ldots,\xx_m$, $m\ge 2$ the unperturbed scaling 
limit (in the sense of the remark after Eq.(\ref{2.gchi})) of the energy correlations is:
\bea && \lim_{a\to 0}\lim_{L\to\infty}\media{\e_{\xx_1,j_1};\cdots;\e_{\xx_m,j_m}}^T_{\b(a),L}=\nonumber\\
&&= \lim_{a\to 0}\lim_{L\to\infty} a^{-2m}\frac{\dpr^m}{\dpr A_{\xx_1,j_1}\cdots\dpr A_{\xx_m,j_m}}
 \Xi_{-,-}(\bsA)\big|_{\bsA=\V0}=\label{2.101}\\
 &&=\Big(\frac{-i}\p\Big)^m\EE^T_{\mathfrak g^0}(\psi_{\xx_1,+}\psi_{\xx_1,-};\cdots;
\psi_{\xx_m,+}\psi_{\xx_m,-})\;,\nonumber\eea
where in the last line $\EE^T_{\mathfrak g^0}$ indicates the truncated expectation with respect to the 
gaussian fermionic integration $P_{\mathfrak g^0}(d\psi)$ with propagator $\mathfrak g^0$; 
i.e., given $n$ functions $X_1,\ldots,X_n$ of the Grassmann variables $\psi$: 
\be \EE_{\mathfrak g^0}^T(X_1;\cdots;X_n)=\frac{\dpr^n}{\dpr\l_1\cdots\dpr\l_n}\log\int P_{\mathfrak g^0}
(d\psi)e^{\l_1X_1+\cdots+\l_nX_n}\big|_{\l_i=0}\label{2.trun}\; ,\ee
and a similar definition is valid for a more general gaussian fermionic integration.
In the last line of Eq.(\ref{2.101}), one may think of $(-i/\p)\psi_{\xx,+}\psi_{\xx,-}$ as the scaling 
limit of the Grassmann operator coupled to $A_{\xx,j}$ in the source term $\lis\BB$, that is $E_{\xx,j}$
written in terms of $\psi,\c$. From its definition,
see Eq.(\ref{2.en}), it is apparent that, for every finite $a$, such Grassmann operator also includes terms 
of the form $\c_{\xx,+}\c_{\xx,-}$ or $\psi_{\xx,\o}a\dpr_\xx\psi_{\xx,\o}$: however, the
correlations among such bilinears at distinct points vanish in the scaling limit. 

The truncated expectation in the last line of Eq.(\ref{2.101}) can be graphically represented (and
explicitly computed) in terms of loop diagrams; correspondingly Eq.(\ref{2.101}) can be rewritten as
Eq.(\ref{1.maina}).
\end{enumerate}

\subsection{The perturbed scaling limit and the temperature counterterm}

The last step that is convenient to perform before setting up the multiscale analysis 
that we will use to compute Eq.(\ref{2.91}), is to properly fix the location of the singularity of 
$\lis P(d\psi)$. Note, in fact, that the propagator associated with $\lis P(d\psi)$ 
is singular at $t=t_c(0)=\sqrt2-1$ (and $\kk=\V0$), while we know that the location of the singularity 
changes in the presence of the interaction, moving to $t=t_c(\l)=\tanh(\b_c(\l)J)$, where $\b_c(\l)$ is the 
interacting inverse critical temperature computed in \cite{PS}, which will also be derived below. 
Therefore, it is convenient to rewrite the mass $\s_\psi({\bf 0})$ appearing in the propagator of the 
$\psi$ field as (recalling that $t_c(0)=\sqrt2-1$)
\be \s_\psi({\bf 0})=\frac2{a}\frac{t-t_c(0)}{t}=\s+4\p a^{-1}\n\;,\label{2.109}\ee
where the mass
\be \s=\s(a)=\frac{2}{a}\frac{t_c(0)}{t_c(\l)}\frac{t-t_c(\l)}{t}\label{2.mass}\ee
vanishes at the interacting critical point and is of order 
$1$ with respect to $a$ as $a\to 0$; we assume that $\s(a)\neq 0$ for every finite $a>0$ and 
in the scaling limit $\lim_{a\to 0}\s(a)=m^*$, with $m^*\in\mathbb R$. On the other hand,
the constant 
\be \n=\n(\l)=\frac{t_c(\l)-t_c(0)}{2\p t_c(\l)}\label{2.counter}\ee
should be thought of as a {\it counterterm}
that will be used below to fix the interacting critical temperature. The rewriting Eq.(\ref{2.109})
induces an analogous rewriting at $\kk\neq\V0$, i.e., $\s_\psi(\kk)=\s(\kk)+4\p a^{-1}\n$, with 
$\s(\kk)=a^{-1}(\cos ak_1+\cos a k_2-2)
+\s$. Correspondingly we decompose the inverse propagator of the $\psi$ field as
$\lis C_\psi(\kk)=C_\s(\kk)-4\p a^{-1}\n\s_2$, where $\s_2=\begin{pmatrix} 0&-i\\i&0\end{pmatrix}$ 
is the second Pauli matrix and 
\be C_\s(\kk)= \begin{pmatrix} -D^-_a(\kk)& i\s(\kk)\\-i\s(\kk)& -D^+_a(\kk)
\end{pmatrix}-Q(\kk)C_\c^{-1}(\kk)Q(\kk)\;,\label{2.109b}\ee
which induces the following representation for the generating function:
\be \Xi_{-,-}(\bsA)= \NN_\s \int P_{\s}(d\psi)P(d\c)
e^{\lis \BB(\psi,\c,\bsA)+\lis \VV(\psi,\c,\bsA)+a^{-1}\n\int d\xx\, \psi_{\xx}\s_2\psi_{\xx}}\;,\label{2.110}\ee
where $P_\s(d\psi)$ is the (normalized) gaussian integration associated to the matrix $C_\s$ 
and $\NN_\s$ is a suitable normalization constant. Eq.(\ref{2.110}) will be the starting point for 
the multiscale analysis discussed below. But before that, let us discuss a few relevant 
symmetry properties of the fermionic action.

\subsection{Symmetries in the Grassman representation}\label{sec2.sym}

The integration $\prod_{\kk,\o} d\hat\psi_{\kk,\o}d\hat\c_{\kk,\o}$, the quadratic contributions to 
the Grassmann action 
$ -(4\p L^2)^{-1}\sum_{\kk \in \DD_M^{\boldsymbol\a}} \hat\psi^T_{-\kk} C_\s(\kk)\hat\psi_{\kk}$ and
$ -(4\p L^2)^{-1}\sum_{\kk \in \DD_M^{\boldsymbol\a}} \hat\psi^T_{-\kk} C_\c(\kk)\hat\psi_{\kk}$, 
as well as the source and the interaction terms $\lis\BB(\psi,\c,\bsA)$, 
$\lis\VV(\psi,\c,\bsA)$ and $a^{-1}\n\int d\xx\, \psi_{\xx}\s_2\psi_{\xx}$ are each separately unchanged under any 
of the following substitutions (here we indicate $A_{\xx,j}$ by $A_{b}$, where $b$ is identified with the unordered pair $b=\{\xx,\xx+a\hat{\bf e}_j)\}$)
\begin{enumerate}
\item $ \psi_{\xx, \o}  \to i\o  \psi_{-\xx,\o}$,  $ \c_{\xx, \o}  \to i\o \c_{-\xx,\o}$, $A_{b}\to A_{-b}$ where, if $b=\{\xx,\xx'\}$, then $-b=\{-\xx,-\xx'\}$ (parity)
\item $\psi_{\xx,\o} \to \o e^{ i\o\frac{\pi}4}  \psi_{R\xx,-\o}$, $ \c_{\xx,\o} \to -\o e^{i\o\frac{\pi}{4}} \c_{R\xx,-\o}$, $A_{b} \to A_{Rb}$ with $R(x_1,x_2) = (-x_2,-x_1)$ 
and $Rb=R\{\xx,\xx'\}=\{R\xx,R\xx'\}$ (diagonal reflection)
\item $\psi_{\xx, \o}  \to i\psi_{F\xx,-\o}$,  $\c_{\xx, \o}  \to  i \c_{F\xx,-\o}$, $A_b \to A_{Fb}$ with $F(x_1,x_2)=(-x_1,x_2)$ and $Fb=F\{\xx,\xx'\}=\{F\xx,F\xx'\}$ (orthogonal reflection)
\item $\psi_{\xx, \o}  \to   \psi_{\xx,-\o}$,  $\c_{\xx, \o}  \to \c_{\xx,-\o}$, $c \to c^*$, $A_{b}\to A_b$ for all complex coefficients (complex conjugation)
\end{enumerate}
The invariance of the integration is only a matter of checking that the correct sign is produced, as indeed it is.  

To check the invariance of the quadratic terms, we consider the most general possible form which will be 
invariant under these transformations, as this will be helpful in the further analysis.  A general quadratic form in the $\psi$ fields is
\be
\sum_{\kk \in \DD_M} \hat\psi^T_{-\kk} C(\kk)\hat\psi_{\kk} =  
\sum_{\kk \in \DD_M} \hat\psi^T_{-\kk} 
\begin{pmatrix}
a (\kk) & b(\kk) \\
c(\kk) & d(\kk)
\end{pmatrix}
\hat\psi_{\kk}
\ee
with $C(\kk)=-C^T(-\kk)$, that is $a(\kk)=-a(-\kk)$, $d(\kk)=-d(-\kk)$ and 
$b(\kk)=-c(-\kk)$. This matrix transforms respectively as
\begin{enumerate}
\item 
\be
\begin{pmatrix}
a (\kk) & b(\kk) \\
c(\kk) & d(\kk)
\end{pmatrix}
\to 
\begin{pmatrix}
a (\kk) & -c(\kk) \\
-b(\kk) & d(\kk)
\end{pmatrix}
\ee

\item
\be
\begin{pmatrix}
a (\kk) & b(\kk) \\
c(\kk) & d(\kk)
\end{pmatrix}
\to 
\begin{pmatrix}
-id(R\kk) & -c(R\kk)\\
-b(R\kk) & ia(R\kk)
\end{pmatrix}
\ee

\item
\be
\begin{pmatrix}
a (\kk) & b(\kk) \\
c(\kk) & d(\kk)
\end{pmatrix}
\to 
\begin{pmatrix}
-d(F\kk) & -c(F\kk)\\
-b(F\kk) & -a(F\kk)
\end{pmatrix}
\ee

\item
\be
\begin{pmatrix}
a (\kk) & b(\kk) \\
c(\kk) & d(\kk)
\end{pmatrix}
\to 
\begin{pmatrix}
d^{*}(-\kk) & c^{*}(-\kk)\\
b^{*}(-\kk) & a^{*}(-\kk)
\end{pmatrix}
\ee
\end{enumerate}
Therefore, the quadratic form is invariant under these symmetries iff $a(\kk)=-a(-\kk)=-id(R\kk)=-d(F\kk)=-d^*(\kk)$ and $b(\kk)=b(-\kk)=-c(\kk)=b(R\kk)=b(F\kk)=-b^*(\kk)$. 
Exactly the same is true of a quadratic term in $\c$.  We note that $C_\s(\kk)$ and $C_\c(\kk)$ as 
expressed in Eqs.(\ref{2.109b}) and (\ref{2.90a}), as well as the counterterm 
$a^{-1}\n\int d\xx\,\psi_{\xx}\s_2\psi_{\xx}$, are indeed invariant under these transformation.  
In passing, let us note that any quadratic form compatible with these symmetries is given to first order in $\kk$ by 
\be
Z
\begin{pmatrix}
D_a^-(\kk) & -i\s \\
i\s & D_a^+(\kk)
\end{pmatrix}\label{2.form}
\ee
with $Z$ and $\s$ two \emph{real} parameters.  

To examine the effect on $\lis\VV$, we note that the energy density $E_b$ (which is the Grassmann
bilinear coupled to $A_b$ in the source term $\lis\BB(\psi,\c,\bsA)$) is covariant under the four 
symmetries above: 
$E_b\overset{(1)}{\to}E_{-b}$,
$E_b\overset{(2)}{\to}E_{Rb}$,
$E_b\overset{(3)}{\to}E_{Fb}$,
$E_b\overset{(4)}{\to}E_{b}$. This suffices to show that the source term $\lis\BB$ is invariant under 
these transformations; furthermore we note that $\lis\VV$ is given as a polynomial in the $E_b$ and $A_b$ with real 
coefficients, symmetric under parity, diagonal reflections and orthogonal reflections, as follows from the 
construction in Sec.\ref{sec2.1}, and therefore is also invariant. 

\section{Multiscale integration}\label{sec3}

We now want to compute $\Xi_{-,-}(\bsA)$ as expressed by Eq.(\ref{2.110}). The strategy
is to integrate the Grassmann functional integral step by step, in an inductive fashion. The outcome
is a multiscale expansion that has been described in great detail in several papers, see e.g.\ 
\cite{GM01, Gi, Ma} for some recent reviews. A self-contained presentation is also 
described below. From now on $C,C',c,c',\ldots$, indicate universal positive constants,
whose specific values may change from line to line.

\subsection{The integration of the $\c$ field}\label{3.chi}

The first step simply consists in integrating out the $\c$ field, after which we rewrite: 
\be \Xi_{-,-}(\bsA)= e^{|\L|\wt F_N+\SS^{(N)}(\bsA)
}\int P_\s(d\psi)e^{\wt \VV^{(N)}(\psi,\bsA)+
\BB^{(N)}(\psi,\bsA)+a^{-1}\n\int d\xx\,\psi_\xx\s_2\psi_\xx}\;,\label{3.1}\ee
where $\L$ is a shorthand for $\L^a_L$ and
\be e^{|\L|\wt F_N+\SS^{(N)}(\bsA)
+\wt \VV^{(N)}(\psi,\bsA)+\BB^{(N)}(\psi,\bsA)}
=\NN_\s\int P(d\c)e^{\lis\VV(\psi,\c,\bsA)+\lis\BB(\psi,\c,\bsA)}\;.\ee
Recall that the label $N$ indicates the scale of the lattice spacing, which is
equal to $a=\ell_02^{-N}$, with $\ell_0$ the unit macroscopic length (which will be set equal to 1 in the following).
The {\it effective potential} $\wt\VV^{(N)}$ 
on scale $N$ can be computed in terms of truncated expectations, namely
\bea&& |\L|\wt F_N-\log\NN_\s+\SS^{(N)}(\bsA)
+\wt\VV^{(N)}(\psi,\bsA)+\BB^{(N)}(\psi,\bsA)=\label{3.3}\\
&&+\sum_{s\ge 1}\frac1{s!}\EE^T_\c(
 \underbrace{\lis\VV(\psi,\c,\bsA)+\lis\BB(\psi,\c,\bsA);
\cdots;\lis\VV(\psi,\c,\bsA)+\lis\BB(\psi,\c,\bsA)}_{s\ {\rm times}})\nonumber\eea
where $\EE^T_\c$ is defined in way analogous to Eq.(\ref{2.trun}), with $P_{\mathfrak g^0}(d\psi)$ 
replaced by $P(d\c)$. The normalization constant $\wt F_N$ in the left hand side is defined 
in such a way that
$\SS^{(N)}(0)=
\wt\VV^{(N)}(0,\bsA)=0$, while $\BB^{(N)}(\psi,\bsA)$ collects the contributions 
that are quadratic in $\psi$ and linear in $\bsA$. Note that $\EE^T_\c$ is a multilinear operator of its 
arguments, so that each term 
in the r.h.s.\ can be computed by expanding $\lis\VV(\psi,\c,\bsA)+\lis\BB(\psi,\c,\bsA)$ into monomials in 
$\c$ and then by acting with $\EE^T_\c$ on each monomial separately. The second line 
of Eq.(\ref{3.3}) can be conveniently represented graphically as in Fig.\ref{fig6.1}.
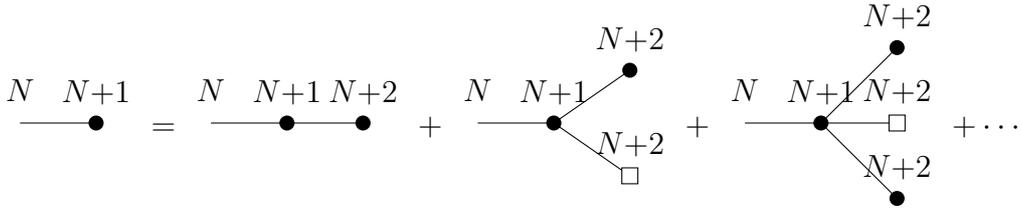
\begin{figure}[ht]
\centering
\begin{tikzpicture}[baseline=-0.4em]
\draw (0,0) node[label=above:$N$] {} -- (1,0) node[vertex,label=above:$N{+}1$] {};
\end{tikzpicture}
=
\begin{tikzpicture}[baseline=-0.4em]
\draw (0,0) node[label=above:$N$] {} -- (1,0) node[vertex,label=above:$N{+}1$] {} -- (2,0) node[vertex,label=above:$N{+}2$] {};
\end{tikzpicture}
+
\begin{tikzpicture}[baseline=-0.4em]
\draw (0,0) node[label=above:$N$] {} -- (1,0) node[vertex,label=above:$N{+}1$] {} -- (2,0.7) node[vertex,label=above:$N{+}2$] {};
\draw (1,0) -- (2,-0.7) node [specialEP,label=above:$N{+}2$] {};
\end{tikzpicture}
+
\begin{tikzpicture}[baseline=-0.4em]
\draw (0,0) node[label=above:$N$] {} -- (1,0) node[vertex,label=above:$N{+}1$] {} -- (2,1) node[vertex,label=above:$N{+}2$] {};
\draw (1,0) -- (2,0) node [specialEP,label=above:$N{+}2$] {};
\draw (1,0) -- (2,-1) node [vertex,label=above:$N{+}2$] {};
\end{tikzpicture}
$+\cdots$
\caption{The graphical representation of the second line of Eq.(\ref{3.3}).\label{fig6.1}}
\end{figure}
The tree in the l.h.s., consisting of a single horizontal branch, connecting the 
left node 
(called the {\it root} and associated to the {\it scale label} $N$)
with a black dot on scale $N+1$, represents the l.h.s.\ of Eq.(\ref{3.3}). In the r.h.s., 
the sum of all the terms with $s$ final points represents the term of order $s$ in 
the r.h.s.\ of Eq.(\ref{3.3}): a scale label 
$N$ is attached to the leftmost node (the root); a scale label 
$N+1$ is attached to the central node (corresponding to the action of $\EE^T_\c$, where $\c$ is thought 
of as the ``field on scale $N {+} 1$");
a scale label $N{+}2$ is attached to the $s$ rightmost nodes (``endpoints"), which can be 
either ``normal", in which case we draw them as black dots and associate them with $\lis\VV$, 
or ``special", in which case we draw them as open squares and associate them with $\lis\BB$.
We denote by $\TT^{(N)}_{N;n,m}$ the set of trees with $n$ normal and $m$ special endpoints.
Moreover, we denote the central node by $v_0$ and the endpoints 
by $v_1,\ldots,v_s$; there is a natural partial ordering on the trees from left to right, and 
we shall denote this ordering by $v_0<v_i$. The $\lis\VV$ and $\lis\BB$ associated with the endpoints 
consist of a sum of several different field 
monomials, which can be distinguished by assigning sets $P_{v_i}$
of field labels $f$. Each field label consist of a specification of the field type (either $A$, or $\psi$, or $\c$), 
and of the $\o$ indices of the Grassmann fields and the $j$ indices of the $A$ fields.
We shall also 
denote by $P_{v_0}$ the set of field labels associated to the $A$ and $\psi$ fields, which can 
be thought of as the fields that survive the integration $P(d\c)$. Finally, given a tree $\t\in\TT^{(N)}_{N;n,m}$ and 
the sets $P_{v_0},\ldots, P_{v_s}$, let ${\bf P}=\{P_{v_0},\ldots,P_{v_s}\}$ and let ${\cal P}_{\t}$ be the set spanned by ${\bf P}$, so that 
the r.h.s.\ of Eq.(\ref{3.3}) can be rewritten as
\bea &&\sum_{\substack{n,m\ge 0\\n+m\ge 1}}
\sum_{\t\in\TT^{(N)}_{N;n,m}}\VV^{(N)}(\t,\psi,\bsA)\;,\qquad {\cal V}^{(N)}(\t,\psi,\bsA)=\sum_{{\bf P}\in{\cal P}_{\t}}
{\cal V}^{(N)}(\t,{\bf P})\;,\nonumber\\
&&{\cal V}^{(N)}(\t,{\bf P})=\sum_{\o_{v_0}}
\int d\xx_{v_0}\psi_{P^\psi_{v_0}}A_{P^A_{v_0}}
K_{\t,{\bf P}}^{(N)}(\xx_{v_0})\;,\label{3.43a}\eea
where $\xx_{v}=\cup_{w\ge v}\cup_{f\in P_{w}}\{\xx(f)\}$, 
$\o_{v}=\cup_{w\ge v}\cup_{f\in P_{w}}\{\o(f)\}$, and, if $P_{v_0}^\psi$ and 
$P_{v_0}^A$ are the subsets of $P_{v_0}$ collecting the fields of type $\psi$ and $A$, respectively,
\be\psi_{P_{v_0}^\psi}=\prod_{f\in P_{v_0}^\psi}\psi_{\xx(f),\o(f)}\;,
\qquad A_{P_{v_0}^A}=\prod_{f\in P_{v_0}^\psi}A_{\xx(f),j(f)} \;.
\label{2.44s}\ee
Moreover, if $Q_{v_i}=P_{v_0}\cap P_{v_i}$, noting that $P_{v_i}\setminus Q_{v_i}=P_{v_i}^\c$,
\be K_{\t,{\bf P}}^{(N)}(\xx_{v_0})=\frac1{s!}
\prod_{i=1}^{s} K^{(N+1)}_{v_i}(\xx_{v_i})\; \;\EE^T_{\c}\big(\c_{P_{v_1}^\c};\cdots;\c_{P_{v_{s}}^\c}\big)\;,\label{2.45s}\ee
where $\c_{P_{v_i}^\c}$ has a definition
similar to Eq.(\ref{2.44s}) and 
$K^{(N+1)}_{v_i}(\xx_{v_i})$ is the kernel of the monomial labeled by $P_{v_i}$, whose decay 
properties follow from Proposition~\ref{prop3}. 

The action of the truncated expectation in the r.h.s.\ of Eq.(\ref{2.45s}) can be computed in terms of a tree interpolation formula, 
originally due to Battle, Brydges and Federbush \cite{BF,B,BrF} and re-derived in several 
review papers, see e.g.\ \cite{GM01,Gi}:
\be\EE^T_{\c}\big(\c_{P_{v_1}^\c};\cdots;\c_{P_{v_{s}}^\c}\big)=
\!\sum_{T\in{\bf T}}\!\a_T\prod_{\ell\in T}g^{(N+1)}_{\ell}
\!\int\! P_T({\bf t})\,\Pf G^{N+1,T}({\bf t})\;,\label{3.pfe}\ee
where:\begin{itemize}
\item the first sum runs over set of lines forming a {\it spanning tree} between
the ``boxes" or ``clusters" $v_1,\ldots,v_s$, i.e., $T$ is a set
of lines that becomes a tree if one identifies all the points
in the same clusters;
\item $\a_T$ is a sign (irrelevant for the subsequent bounds);
\item $g^{(N+1)}_\ell$ is a shorthand for $g^\c_{\o(\ell),\o'(\ell)}(\xx(\ell)-\xx'(\ell))$, where
$\o(\ell), \o'(\ell)$ and $\xx(\ell),\xx'(\ell)$ are the $\o$ and $\xx$ indices associated to 
the two ends of the line $\ell$, which should be thought  as being obtained from the pairing 
(contraction) of two fields $\c_{\xx(\ell),\o(\ell)}$ and $\c_{\xx'(\ell),\o'(\ell)}$;
\item if ${\bf t}=\{t_{i,i'}\in [0,1], 1\le i,i' \le s\}$, then $dP_{T}({\bf t})$
is a probability measure with support on a set of $\tt$ such that
$t_{i,i'}=\uu_i\cdot\uu_{i'}$ for some family of vectors $\uu_i\in \mathbb R^s$ of
unit norm;
\item if $2p=\sum_{i=1}^s|P_{v_i}^\c|$, then 
$G^{N+1,T}({\bf t})$ is an antisymmetric $(2p-2s+2)\times (2p-2s+2)$ matrix, whose
elements are given by $G^{N+1,T}_{f,f'}=t_{i(f),i(f')}g^{(N+1)}_{\ell(f,f')}$, where:
$f, f'\not\in\cup_{\ell\in T}\{f^1_\ell,f^2_\ell\}$ and $f^1_\ell,f^2_\ell$ are 
the two field labels associated to the two (entering and exiting) half-lines contracted into $\ell$;
$i(f)\in\{1,\ldots,s\}$ is s.t. $f\in P_{v_{i(f)}}$; $g^{(N+1)}_{\ell(f,f')}$ is the propagator associated to the line 
obtained by contracting the two half-lines with indices $f$ and $f'$;
\item  $\Pf G^{N+1,T}$ is the {\it Pfaffian} of $G^{N+1,T}$; given an antisymmetrix matrix
$G_{ij}=-G_{ji}$, $i,j=1,\ldots,2k$, its Pfaffian is defined as
\bea\Pf G&=&\frac{1}{2^k k!}\sum_{\p}(-1)^\p G_{\p(1)\p(2)}\cdots 
G_{\p(2k-1)\p(2k)}\nonumber\\
&=&\int d\c_1\cdots d\c_{2k}\,e^{-\frac12\sum_{i,j}\c_i 
G_{ij}\c_j }\;,\label{3.pf}\eea
where in the first line $\p$ is a permutation of $\{1,\ldots,2k\}$ and 
$(-1)^\p$ is its parity
while, in the second line, $\c_1,\ldots,\c_{2k}$ are Grassmanian variables.
A well known property is that $(\Pf G)^2=\det G$.
\end{itemize}

If $s=1$ the sum over $T$ is empty, but we can still
use the Eq.(\ref{3.pfe}) by interpreting the r.h.s.
as equal to $0$ if $P_{1}$ is empty and equal to $\Pf G^{N+1,T}({\bf 1})$ otherwise.
Note that if the Pfaffian is expanded by using Eq.(\ref{3.pf}), then Eq.(\ref{3.pfe}) reduces to the usual 
representation of the truncated expectation in terms of connected Feynman diagrams. 
The spanning trees in Eq.(\ref{3.pfe}) guarantee the minimal connection among the 
clusters of fields $v_1,\ldots,v_s$ and the Pfaffian can be thought of as a resummation of all the 
Feynman diagrams obtained by pairing (contracting) in all possible ways the fields $\c$ outside the 
spanning tree, with the rule that 
each contracted pair $(\c_{\xx,\o},\c_{\yy,\o'})$ is replaced by $g^\c_{\o,\o'}(\xx-\yy)$;
the interpolation in ${\bf t}$ is necessary in order to avoid an over-counting of the diagrams.

The reason why we prefer to use the Pfaffian expansion rather than the more usual expansion in 
connected Feynman diagrams is that the former is better behaved from a combinatorial point of 
view: using the fact that the number of spanning trees in the sum $\sum_T$ in the r.h.s.\ of 
Eq.(\ref{3.pfe}) is bounded by $s! C^{2p}$, where $2p=\sum_{i=1}^s|P_{v_i}^\c|$ (see, e.g., 
\cite[Appendix A3.3]{GM01} for a proof of this fact), we find that for fixed 
$P_{v_0}$ the contribution to the kernel of $\psi_{P_{v_0}^\psi}A_{P_{v_0}^A}$ 
coming from a fixed tree $\t\in\TT_{N;n,m}^{(N)}$ can be bounded as:
\bea && \sum_{P_{v_1},\ldots,P_{v_s}}\!\!\!\frac{1}{s!}\int d\xx_{v_0}\prod_{i=1}^s
|K^{(N+1)}_{v_i}(\xx_{v_i})|\sum_{T\in{\bf T}}\prod_{\ell\in T}|g^{(N+1)}_\ell|\,||\det G^{N+1,T}||^{1/2}_\io\quad \label{3.8s}\\
&&\le \,||g^{(N+1)}_\ell||_1^{s-1}
\sum_{P_{v_1},\ldots,P_{v_s}} \int d\xx_{v_0}\prod_{i=1}^s C^{|P_{v_i}|}
|K^{(N+1)}_{v_i}(\xx_{v_i})|\cdot||\det G^{N+1,T}||^{1/2}_\io\;,\nonumber\eea
where $||\det G^{N+1,T}||^{1/2}_\io$
indicates the square root of the $L^\io$ norm of the determinant w.r.t.\ both the position variables $\xx_{v_0}$
and the interpolation parameter ${\bf t}$ (the square root is due to the fact that 
$|\Pf G^{N+1,T}|=|\det G^{N+1,T}|^{1/2}$); moreover, in the second line, recalling that $g^{(N+1)}_\ell$
is a shorthand for $g^\c_{\o(\ell),\o'(\ell)}(\xx(\ell)-\xx'(\ell))$, 
\be ||g^{(N+1)}_\ell||_1=
\int d\xx ||g^\c(\xx)||\le C 2^{-N}\;,\ee
simply because  the 
propagator of the $\c$ field decays exponentially on scale $a=2^{-N}$, as it follows from 
its explicit expression Eq.(\ref{2.gchi}), see also Eq.(\ref{2.chisca}). More precisely, if $||\cdot||$ is the Hilbert-Schmidt norm, 
\be ||g^\c(\xx-\yy)||\le C2^{N}e^{-c2^{N}|\xx-\yy|}\;.\label{3.3a}\ee
In order to bound $\det G^T$, we use the {\it Gram-Hadamard inequality}, stating
that, if $M$ is a square matrix with elements $M_{ij}$ of the form
$M_{ij}=\media{A_i,B_j}$, where $A_i$, $B_j$ are vectors in a Hilbert space with
scalar product $\media{\cdot,\cdot}$, then
\be |\det M|\le \prod_i ||B_i||\cdot ||C_i||\;.\label{s5.6}\ee
where $||\cdot||$ is the norm induced by the scalar product. See \cite[Theorem A.1]{GM01}
for a proof of Eq.(\ref{s5.6}).
 
Let $\HHH=\RRR^s\otimes \HHH_0$, where $\HHH_0$ is the Hilbert space of the
functions ${\bf F} : \L\to \CCC^2$,
with scalar product $\media{{\bf F},{\bf G}}=\sum_{\o=\pm}\int d\zz\, F^*_\o(\zz)G_\o(\zz) $,
where $F_\o=[{\bf F}]_\o$, $G_\o=[{\bf G}]_\o$, $\o=\pm$, are the components of the vectors 
${\bf F}$ and ${\bf G}$. It is easy to verify that
\bea G^T_{f,f'}&=&t_{i(f),i(f')}\, g_{\o(f),\o(f')}(\xx(f)-\xx(f'))\nonumber\\
&=&\media{\uu_{i(f)}\otimes
{\bf B}_{\xx(f),\o(f)},
\uu_{i(f')}\otimes{\bf C}_{\xx(f'),\o(f')}}\;,\label{s5.8}\eea
where: $\uu_i\in \RRR^s$, $i=1,\ldots,s$, are vectors such that
$t_{i,i'}=\uu_i\cdot\uu_{i'}$; ${\bf B}_{\xx,\r}$ and ${\bf C}_{\xx,\r}$ have components:
\bea &&[{\bf B}_{\xx,\o}(\zz)]_\s=\frac{(2\p)^{1/2}}{L^2}\sum_{\kk} 
\frac{e^{-i\kk(\zz-\xx)}}{\big[|D_a(\kk)|^2+|\s_\c(\kk)|^2\big]^{1/4}}\d_{\o,s}\;,
\label{B.4}\\
&&[{\bf C}_{\xx,\o}(\zz)]_\s=\frac{(2\p)^{1/2}}{L^2}\sum_{\kk}
\,\frac{e^{-i\kk(\zz-\xx)}}
{\big[|D_a(\kk)|^2+|\s_\c(\kk)|^2\big]^{3/4}}\begin{pmatrix}D^+_a(\kk)
& i\s_\c(\kk)\\
-i\s_\c(\kk)& D^-_a(\kk)\end{pmatrix}_{\s,\o}\;,\nonumber\eea
so that
\be ||{\bf B}_{\xx,\o}||^2=||{\bf C}_{\xx,\r}||^2=\frac{2\p}{L^2}\sum_{\kk}
\frac{1}{\big[|D_a(\kk)|^2+|\s_\c(\kk)|^2\big]^{1/2}}
\le C 2^N\;,\label{B.5}\ee
for a suitable constant $C$. 
Using the Gram-Hadamard inequality, we find 
$||\det G^{N+1,T}||^{1/2}_\io\le ({\rm const.})^{\sum_{i=1}^s
|P_{v_i}^\c|}2^{N(\frac12\sum_{i=1}^s|P_{v_i}^\c|-s+1)}$; moreover,  using Proposition \ref{prop3},
we find that for each normal endpoint $v_i$:
%
%
\be \frac1{|\L|}\int d\xx_{v_i}
|K_{v_i}^{(N+1)}(\xx_{v_i})|\le(C|\l|)^{\max\{1,c|P_{v_i}|}\}2^{N(2-\frac12|P_{v_i}^\psi|-\frac12|P_{v_i}^\c|-|P_{v_i}^A|)}\;.\label{3.16s}\ee

Substituting these estimates into Eq.(\ref{3.8s})
we find that for fixed $P_{v_0}$ the contribution to 
the kernel of $\psi_{P_{v_0}}A_{P_{v_0}}$ 
from a given tree $\t\in\TT_{N;n,m}^{(N)}$ 
can be estimated by
\be C^{n+m}|\l|^{\max\{1,cn\}}2^{N(2-\frac12|P_{v_0}^\psi|-|P_{v_0}^A|)}\;,\label{3.19s}\ee
where we used the small factors $|\l|^{\max\{1,c|P_{v_i}^\c|\}}$ to sum over the field labels, see 
\cite[Appendix A6]{GM01}. Eq.(\ref{3.19s}) implies the analyticity in $\l$ of the kernels of $\wt \VV^{(N)}$ and of $\BB^{(N)}$,
as well as their exponential decay on scale $a=2^{-N}$. In particular, writing
\be \wt\VV^{(N)}(\psi,\bsA)=\sum_{n\ge 1,\ m\ge 0}^*\sum_{{\underline\o},
{\underline j}}\int d{\underline \xx}\,d{\underline\yy}\, \wt W^{(N)}_{2n,m;{\underline\o},{\underline j}}({\underline\xx};{\underline\yy})\Big[\prod_{i=1}^{2n}\psi_{\xx_i,\o_i}\Big]\,\Big[\prod_{i=1}^m A_{\yy_i,j_i}\Big]\;,
\label{3.5q0}\ee
where the $*$ on the sum indicates the constraint that $(2n,m)\neq (2,1)$, we have
\be  \frac1{|\L|}\int d\xx_1\cdots d\yy_m
|\wt W^{(h)}_{2n,m;{\underline\o},{\underline j}}(\xx_1,\ldots,\xx_{2n};\yy_1,\ldots,\yy_m)|\le C^{n+m}2^{N (2-2n-m)} |\l|^{c\,n}\label{3.5-0}\ee
A similar expansion and similar bounds are valid for $\SS^{(N)}(\bsA)$ as well.

\subsection{The iterative integration}

We are now left with the integration of the $\psi$ field. Recall that the propagator of the $\psi$ field is 
given by the inverse of $C_\s(\kk)$, see Eq.(\ref{2.109b}), which has a mass (i.e. an inverse decay rate)
proportional to $\s=\s(a)$, which can be arbitrarily small. In fact, recall that $\s(a)\neq 0$ 
$\forall a>0$, and 
$\lim_{a\to 0}\s(a)=m^*\in\mathbb R$. In particular, $m^*=0$ is an allowed value of the rescaled mass 
and it is our purpose to derive bounds that are uniform in $m^*$ for $m^*\to 0$ (massless limit).
In this respect, $\psi$ is essentially a massless field or, more precisely, it is non-uniformly massive.
Therefore, we cannot trivially integrate $\psi$ ``in one step", as we did for $\c$. A convenient procedure
is the following. We define a sequence of geometrically
decreasing momentum scales $2^h$, with $h=N,N-1,\ldots$ Correspondingly
we define a sequence of analytic functions $f_h(\kk)$ supported mostly around $|\kk|\sim 2^h$: 
for instance, we can choose $f_N(\kk)=1-\exp\{-2^{-2(N-1)}\kk^2\}$ and $f_h(\kk)=\exp\{-2^{-2h}\kk^2\}-\exp\{-2^{-2(h-1)}\kk^2\}$, $\forall h<N$, so that 
\be 1=\sum_{h\le N} f_h(\kk)\;.\label{3.7}\ee
The resolution of the identity Eq.(\ref{3.7}) induces a rewriting of the propagator of $\psi=:
\psi^{(\le N)}$ 
as a sum of propagators concentrated on smaller and smaller momentum scales and an iterative 
procedure to compute $\Xi_{-,-}(\bsA)$. At each step we decompose the
propagator into a sum of two propagators, the first approximately supported on
momenta $\sim 2^{h}$ (i.e.\ with a Fourier transform proportional to $f_h(\kk)$), $h\le 0$, the second approximately supported
on momenta smaller than $2^h$, $h\le N$. Correspondingly we rewrite the
Grassmann field as a sum of two independent fields: $\psi^{(\le
h)}=\psi^{(h)}+ \psi^{(\le h-1)}$ and we integrate out the field
$\psi^{(h)}$ in the same way as we did for $\c$. The result is that, 
for any $h\le N$, we can rewrite  Eq.(\ref{3.1})  as
\be \Xi_{-,-}(\bsA)=e^{|\L|F_h+\SS^{(h)}(\bsA)}\int P_{\c_h,Z_h,\s_h}(d\psi^{(\le h)})
e^{\VV^{(h)}(\psi^{(\le h)},\bsA)+
\BB^{(h)}(\psi^{(\le h)},\bsA)}\;,\label{3.8}\ee
where $F_h,Z_h,\s_h,\VV^{(h)},\BB^{(h)}$ will be defined recursively, 
$\c_h(\kk)=\sum_{k\le h}f_k(\kk)$ and $P_{\c_h,Z_h,\s_h}(d\psi^{(\le h)})$ is the 
gaussian integration with propagator (recall the definition $D_a^\pm=a^{-1}(i\sin ak_1\pm\sin ak_2)$)
\be g^{(\le h)}(\xx)=\frac{2\p}{Z_h}\frac1{L^2}\sum_{\kk\in\DD_M}\frac{e^{-i\kk\xx}\c_h(\kk)}{|D_a(\kk)|^2+\s_h^2}
\begin{pmatrix}D_a^+(\kk)&i\s_h\\ -i\s_h& D^-_a(\kk)\end{pmatrix}\;.\ee
If $h=N$, then: $Z_N=1$, $\s_N=\s(a)=\s$ and 
\be \VV^{(N)}(\psi,\bsA)=\wt\VV^{(N)}(\psi,\bsA)
-\frac1{4\p L^2}\!\!\sum_{\kk\in\DD_M}\!\!\psi^T_{-\kk}\Big[C_\s(\kk)+\begin{pmatrix}
D_a^-(\kk)& -i\s\\i\s&D_a^+(\kk)\end{pmatrix}\Big]\psi_\kk\ee
In the following steps, the {\it effective potential} $\VV^{(h)}$ and the effective source 
term $\BB^{(h)}$ will be shown to have the following structure: 
$\BB^{(h)}(\psi,\bsA)$ is quadratic in $\psi$ and linear in $\bsA$, while $\VV^{(h)}$ admits an 
expansion analogous to Eq.(\ref{3.5q0})
\be \VV^{(h)}(\psi,\bsA)=\sum_{\substack{n,m\ge 0\\n+m\ge 1}}^*\sum_{{\underline\o},
{\underline j}}\int d{\underline \xx}\,d{\underline\yy}\,  W^{(h)}_{2n,m;{\underline\o},{\underline j}}
({\underline\xx};{\underline\yy})\Big[\prod_{i=1}^{2n}\psi_{\xx_i,\o_i}\Big]\,\Big[\prod_{i=1}^m A_{\yy_i,j_i}\Big]\;,
\label{3.5q}\ee
where, as we will see below,
\be  \frac1{|\L|}\int d\xx_1\cdots d\yy_m
|W^{(h)}_{2n,m;{\underline\o},{\underline j}}(\xx_1,\ldots,\xx_{2n};\yy_1,\ldots,\yy_m)|\le C^{n+m}2^{h (2-2n-m)} |\l|^{c\,n}\label{3.5}\ee
The iteration continues until the scale $h=h_{\s}:=\lfloor \log_2 \s(a)\rfloor$ is reached. At that point,
the left-over propagator, $g^{(\le h_\s)}$ is massive on the ``right scale" (i.e. on the very same scale
$2^{h_\s}$), so that the associated degrees of freedom can be integrated in one step. The result is the 
desired generating function, from which we can finally compute the multi-point energy correlation 
functions.

\subsection{Localization and renormalization}\label{sec3.loc}

In order to inductively prove Eq.\pref{3.8}
we write
\be {\cal V}^{(h)}(\psi,\bsA) =\LL{\cal V}^{(h)}(\psi)+\RR{\cal V}^{(h)}(\psi,\bsA)\;,\label{3.loc}
\ee
where, if we think of the kernel $\hat W^{(h)}_{2,0;(\o_1,\o_2)}$ as a $2\times2$ matrix with 
matrix indices $\o_1,\o_2$, 
\bea &&
\LL{\cal V}^{(h)}(\psi)=\label{2.localize}\\
&&=\frac1{L^2}\sum_{\kk\in\DD_M}\hat \psi_{-\kk}^T
\big[(\PP_0+\PP_1)\hat W_{2,0}^{(h)}(\V0)+\dd_a(\kk)\cdot\dpr_\kk\PP_0\hat W_{2,0}^{(h)}(\V0)\big]
\hat\psi_{\kk}+\nonumber\\ && +\sum_{\o_1,\o_2,\o_3,\o_4}\int d\xx_1\cdots d\xx_4 
W_{4,0;\underline\o}(\xx_1,\xx_2,\xx_3,\xx_4)
 \psi^{(\le h)}_{\xx_1,\o_1}\psi^{(\le h)}_{\xx_1,\o_2}\psi^{(\le h)}_{\xx_1,\o_3}\psi^{(\le h)}_{\xx_1,\o_4}
\;,\nonumber\eea
Here $\PP_0$ (resp. $\PP_1$) is an operator that extracts the order $0$ (resp. order 1) in 
$\{\s_k\}_{k>h}$ from the kernel which it acts on, and $\dd_a(\kk):=a^{-1}(\sin ak_1,\sin ak_2)$.
\\

{\bf Remarks.}\begin{enumerate}
\item $\LL$ will be called the {\it localization operator}, which should be thought of as the linear operator 
extracting from the effective potential $\VV^{(h)}$ its local (singular) part, while 
$\RR$ will be called the {\it 
renormalization operator}, which is the linear  operator extracting from $\VV^{(h)}$ its regular part. 
Note that the action of $\RR$ on the kernels quadratic in $\psi$ is equivalent (via
the use of the remainder's formula in the Taylor's expansion) to the action of a second order 
differential operator of the type $\kk^2\dpr_\kk^2$ or $\sum_{k>h}\s_k^2\dpr_{\s_k}^2$ (or a suitable
combination of the two). Dimensionally, the differential operator $\kk^2\dpr_\kk^2$ (or, equivalently,
its $\xx$-space counterpart, which is of the form $\xx^2\dpr^2_\xx$)
behaves as 
$2^{2(h_1-h_2)}$, where $h_1$ is the scale of $\kk$ and $h_2$ the scale of the propagator that 
$\dpr_\kk$ acts on; moreover, the iterative integration is set up in such a way that by construction 
$h_1<h_2$, so that $2^{2(h_1-h_2)}$ is a dimensional gain, sufficient to regularize the quadratic 
kernels 
of the effective potential. Similarly, the action of the operator $\sum_{k>h}\s_k^2\dpr_{\s_k}^2$
on the kernels quadratic in $\psi$ is dimensionally equivalent to a multiplication by $\s_h^22^{-2h}
\simeq \big(\frac{\s_h}{\s_{h_\s}}\big)^2 2^{2(h_\s-h)}$; as we will see, $\s_h\simeq \s=\s(a)$ at all 
scales,
so that  the action of $\sum_{k>h}\s_k^2\dpr_{\s_k}^2$ is dimensionally equivalent to a multiplication
by the gain factor $2^{2(h_\s-h)}$ which is enough to regularize the quadratic kernels of the effective 
potential. A similar discussion is valid for the action of $\RR$ on the quartic kernels.
\item A key fact which makes the theory at hand treatable (and asymptotically free)
is that the quartic term in the second line is zero ``by the Pauli principle", i.e., simply by the Grassmann 
rule $\psi_{\xx,\o}^2=0$. In fact, note that at least two of the four $\o$ indices must be equal among each other. 
Therefore the integrand in the second line is identically zero. This property can be diagramatically interpreted by 
saying that the fermionic nature of the theory automatically renormalizes the 
four-field interaction, which is dimensionally marginal (see below) 
but effectively irrelevant thanks to the cancellation that we just mentioned.
\end{enumerate}

Similarly, we decompose the source term as 
\be \BB^{(h)}(\psi,\bsA) =\LL\BB^{(h)}(\psi,\bsA)+\RR\BB^{(h)}(\psi,\bsA)\;,\label{3.locbb}
\ee
with 
\be \LL\BB^{(h)}(\psi,\bsA)=\frac1{L^4}\sum_{\kk,\pp\in\DD_M}(\hat A_{\pp,1}+\hat A_{\pp,2})
\psi_{-\kk-\pp}^T
\PP_0\hat W_{2,1}^{(h)}(\V0)\hat\psi_{\kk}
\;,\label{3.10}\ee
where $\hat A_{\pp,j}=\int d\xx e^{i\pp\xx}A_{\xx,j}$. As we will see below, $\LL\BB^{(h)}$ is 
a marginal operator in the Renormalization Group sense. 

The symmetries of the theory, which are described in Sec.\ref{sec2.sym} and
are preserved by the iterative integration procedure, imply that the kernel in the second
line of Eq.(\ref{2.localize}) has a structure analogous to Eq.(\ref{2.form});
we also choose to separate the mass term (i.e. the constant in the off-diagonal elements of the 
matrix) in two parts, one proportional to the bare mass and one independent of it, namely:
\bea \LL\VV^{(h)}(\psi)&=&\frac1{L^2}\sum_\kk
\hat \psi_{-\kk}^T\Big[\frac1{4\p}
\begin{pmatrix} \z_hD_a^-(\kk) & -is_h \\
is_h & \z_hD_a^+(\kk)\end{pmatrix}+2^h\n_h\s_2\Big]\hat \psi_\kk\nonumber\\
&=:&\LL_0\VV^{(h)}(\psi)+2^h\n_hF_\n(\psi)\;, \label{3.11}\eea
where $F_\n(\psi):=\int d\xx\,\psi_\xx\s_2\psi_\xx$ and $\z_h,s_h,\n_h$ are suitable real constants, such that $s_h$ is linear in $\{\s_k\}_{k>h}$ and 
$\n_h$ is independent of $\{\s_k\}_{k>h}$. Note that the counterterm on scale $h=N$ is defined so that 
\be \n_N:=\n+2^{-N-1}{\rm Tr}\big[\s_2\PP_0\hat W^{(N)}_{2,0}(\V0)\big]\;.\ee
%
We stress once again that 
the local part of $\VV^{(h)}$ is purely quadratic in $\psi$: the quartic term, a priori present in 
$\LL\VV^{(h)}$ is zero thanks to the ``Pauli principle", in the sense of the  Remark 2 above.
Similarly, the local part of the source term can be written as
\be \LL\BB^{(h)}=\frac{Z^{(1)}_h}{2\p}\int d\xx (A_{\xx,1}+A_{\xx,2})\psi_{\xx}\s_2\psi_{\xx}\;.\ee
for a suitable real constant $Z_{h}^{(1)}$, with $Z_N^{(1)}=1$.

Once that the above definitions are given, we can describe our iterative
integration procedure for $h\le N$. We start from Eq.(\ref{3.8}), which is inductively assumed to be valid
at the $h$-th step, and we prove the validity of the representation for $h-1$. We rewrite 
Eq.(\ref{3.8}) as
\bea && e^{|\L|F_h+\SS^{(h)}(\bsA)}\!\int\! P_{\c_h,Z_h,\s_h}(d\psi^{(\le h)}) \, e^{\LL_0{\cal V}^{(h)}
(\psi^{(\le h)})+2^h\n_hF_\n(\psi^{(\le h)})}\cdot\nonumber\\
&&\hskip2.3truecm\cdot e^{\RR{\cal V}^{(h)}
(\psi^{(\le h)},\bsA)+\BB^{(h)}(\psi^{(\le h)},\bsA)}\label{3.12}\eea
Next we include $\LL_0\VV^{(h)}$ in the fermionic integration, so obtaining
\bea && e^{|\L|(F_h+e_h)+\SS^{(h)}(\bsA)}\!\int\! P_{\c_h,\lis Z_{h-1},\lis \s_{h-1}}(d\psi^{(\le h)}) \, 
e^{2^h\n_hF_\n(\psi^{(\le h)})}\cdot\nonumber\\
&&\hskip2.9truecm\cdot e^{\RR{\cal V}^{(h)}
(\psi^{(\le h)},\bsA)+\BB^{(h)}(\psi^{(\le h)},\bsA)}\label{3.13}\eea
where 
\be \lis Z_{h-1}(\kk)= Z_h +\z_h \c_h(\kk)\;,\qquad
\lis Z_{h-1}(\kk)\lis \s_{h-1}(\kk)= Z_h \s_h+s_h \c_h(\kk)\;,\label{3.14}\ee
and $e_h$ is a constant fixed in such a way that $\int P_{\c_h,\lis Z_{h-1},\lis \s_{h-1}}(d\psi^{(\le h)})=1$.
Now we can perform the integration of the $\psi^{(h)}$ field.
We rewrite the Grassmann field $\psi^{(\le h)}$ as a sum of two independent
Grassmann fields $\psi^{(\le h-1)}+\psi^{(h)}$ and correspondingly, if we let $Z_{h-1}:=\lis Z_{h-1}(\V0)$
and  $\s_{h-1}:=\lis \s_{h-1}(\V0)$, we rewrite Eq.(\ref{3.13}) as
\bea && e^{|\L|(F_h+e_h)+\SS^{(h)}(\bsA)}\!\int\! P_{\c_{h-1},Z_{h-1},\s_{h-1}}(d\psi^{(\le h-1)}) \int P_{\widetilde f_h,
Z_{h-1},\wt \s_{h-1}}(d\psi^{(h)}) \cdot\qquad\nonumber \\
&&\qquad \cdot e^{2^h\n_hF_\n(\psi^{(\le h-1)}+\psi^{(h)})+\RR{\cal V}^{(h)}
(\psi^{(\le h-1)}+\psi^{(h)},\bsA)+\BB^{(h)}(\psi^{(\le h-1)}+\psi^{(h)},\bsA)}\nonumber\eea
%
where
\bea && \widetilde f_h(\kk)=Z_{h-1}\Big[\frac{\c_h(\kk)}{\lis Z_{h-1}(\kk)}\frac{|D_a(\kk)|^2+\s^2_{h-1}}
{|D_a(\kk)|^2+\lis \s^2_{h-1}(\kk)}-\frac{\c_{h-1}(\kk)}{Z_{h-1}}
\Big]\;,\nonumber\\
&&\wt f_h(\kk) \wt \s_{h-1}(\kk)=\big[\wt f_h(\kk)+\c_{h-1}(\kk)\big]\lis \s_{h-1}(\kk)-
\c_{h-1}(\kk)\s_{h-1}\;.\nonumber\eea
The single scale propagator is
\be g^{(h)}(\xx-\yy)=\frac{2\p}{Z_{h-1}}\frac1{L^2}\sum_{\kk\in\DD_M}\frac{e^{-i\kk\xx}\,\wt f_h(\kk)}{|D_a(\kk)|^2+\wt \s_{h-1}^2(\kk)}
\begin{pmatrix}D_a^+(\kk)&i\wt \s_{h-1}(\kk)\\ -i\wt \s_{h-1}(\kk)& D^-_a(\kk)\end{pmatrix}\;.
\label{3.36yt}\ee
A key remark is that, if $|Z_h-1|\le ({\rm const}.)|\l|$ and $|\s_h-\s|\le ({\rm const}.)|\s\l|$ (as we shall 
inductively prove below) and for all scales larger than $h_\s$, the propagator $g^{(h)}$ satisfies a 
bound analogous to Eq.(\ref{3.3a}):
\be ||g^{(h)}(\xx-\yy)||\le C2^{h}e^{-c2^{h}|\xx-\yy|}\;.\label{3.3a2}
\ee
Therefore, we can integrate the field on scale $h$ in the same way as we did for $\c$, after which 
we are left with an integral involving the fields
$\psi^{(\le h-1)}$ and the new effective interaction, defined as
\bea && e^{|\L|\lis e_h+\lis\SS_{h-1}(\bsA)+{\cal V}^{(h-1)}
(\psi^{(\le h-1)},\bsA)+\BB^{(h-1)}(\psi^{(\le h-1)},\bsA)}=\\
&&\qquad = \int P_{\widetilde f_h,
Z_{h-1},\wt \s_{h-1}}(d\psi^{(h)})e^{\widehat \VV^{(h)}(\psi^{(\le h-1)}+\psi^{(h)},\bsA)}\;,\nonumber\eea
where 
\be \widehat \VV^{(h)}(\psi)=2^h\n_hF_\n(\psi)+\LL\BB^{(h)}(\psi,\bsA)+\RR{\cal V}^{(h)}
(\psi,\bsA)+\RR\BB^{(h)}(\psi,\bsA)\;,\nonumber\ee
and ${\cal V}^{(h-1)}(0,\bsA)=0$, so that $F_{h-1}=F_h+e_h+\lis e_h$ and
$\SS^{(h-1)}(\bsA)=\SS^{(h)}(\bsA)+\lis\SS_{h-1}(\bsA)$, 
while $\BB^{(h-1)}(\psi,\bsA)$
collects all the terms quadratic in $\psi$
and linear in $\bsA$. The potential ${\cal V}^{(h-1)}$ is of the form Eq.(\ref{3.5q}) as one sees 
by using the identity 
\bea && |\L|\lis e_h+\lis\SS_{h-1}(\bsA)+{\cal V}^{(h-1)}(\psi^{(\le h-1)},\bsA)+\BB^{(h-1)}(\psi^{(\le h-1)},\bsA)=\label{3.16}\\
&&=\sum_{n\ge 1}\frac{1}{n!}\EE^T_h( \underbrace{\widehat \VV^{(h)}(\psi^{(\le h-1)}+\psi^{(h)},\bsA);
\cdots;\widehat \VV^{(h)}(\psi^{(\le h-1)}+\psi^{(h)},\bsA)}_{n\ {\rm times}})
\;,\nonumber\eea
where $\EE^T_h$ is the truncated expectation w.r.t.\ the propagator $g^{(h)}$.

The iteration continues up to scale $h_\s$. At that point, we integrate in one step all the remaining degrees of 
freedom, by taking advantage of the fact that 
\be ||g^{(\le h_\s)}(\xx-\yy)||\le C2^{h_\s}e^{-c2^{h_\s}|\xx-\yy|}\;,\label{3.3abis}\ee
where by definition  $2^{h_\s}$ is of the same order as $\s=\s(a)\neq 0$. After the integration of 
$\psi^{(\le h_\s)}$ we are left with the desired generating function, $\log \Xi_{-,-}(\bsA)$.

Note that the above procedure allows us to write the
{\it effective constants} $(Z_h,\s_h,\n_h,Z^{(1)}_h)$ with $h\le N$,
in terms of $(Z_k,\s_k,\n_k,Z^{(1)}_k)$ with $h<k\le N$:
\bea && Z_{h-1}=Z_h+\b^{Z}_h\;,\qquad 
\frac{\s_{h-1}}{\s_h}=1+\b^{\s}_h\;,\label{pp0}\\
&&\n_{h-1}=2\n_h+\b^\n_h\;,\qquad  \frac{Z^{(1)}_{h-1}}{Z^{(1)}_h}=1+\b^{Z,1}_h\;,\label{pp} \eea
where $\b_h^\#=\b_h^\#\big((Z_h,\s_h,\n_h,Z^{(1)}_h),\ldots,
(Z_N,\s_N,\n_N,Z^{(1)}_N)\big)$ is the so--called {\it Beta function}.

The effective constants, sometimes called the {\it running coupling constants} measure the strength 
of the local part of the effective potential. The key point, to be proved below, is that, 
if we assume that the running coupling constants stay close to their bare values at scale $N$, then the expansion for the 
effective potential induced by the iterative procedure above, is well defined and {\it analytic} in the sequence of running 
coupling constants. Moreover, under the same assumptions, 
the beta function itself is well defined and analytic in the running coupling constants: this allows us to 
study the evolution of these effective constants under the dynamical system defined by $\b_h^\#$. 
We will see that, {\it uniformly} in $h$,
\be Z_h=1+O(\l)\;,\qquad Z_h^{(1)}=1+O(\l)\;,\qquad \n_h=O(\l)\;,\qquad 
\s_h=\s(1+O(\l))\;,\label{3.flow}\ee
where $\s=\s(a)$. The boundedness of the flow of the running coupling constants, Eq.(\ref{3.flow}),
is one of the key ingredients that allow us to prove the analyticity of the theory in the bare coupling 
constants, see below.

\subsection{Tree expansion for the effective potential} \label{sec.trees}

In order to prove the analyticity properties of the effective potential announced in the previous section, 
we first need to prove that the kernels of the effective potential, 
as obtained from the iterative procedure described above, are expressed by absolutely convergent 
series in the sequence of the running coupling constants, provided that these are assumed to be close 
to their bare values at scale $h=N$. Then we will show that this assumption is justified, by proving that 
their flow under the beta function remains bounded. In any case, as a first step towards the 
full control of the theory, we need a more explicit representation of the kernels of the effective potential 
and a systematic and efficient way to bound them. The idea is to systematically represent the action 
of $\EE^T_h$ in Eq.(\ref{3.16}) in a way analogous to the one described in Section \ref{3.chi} for the
integration of the $\c$ field. By iterating the graphical equation Fig.\ref{fig6.1}
we obtain a tree expansion for the effective potential, which has been first introduced by G. Gallavotti 
and F. Nicol\`o in \cite{GN} and since then it has been 
described in detail in several papers that make use of constructive renormalization group 
methods, see e.g.\ \cite{Ga} and the more recent reviews \cite{GM01,Gi,Ma}. 
The main features of this expansion (to be referred to as the expansion in Gallavotti-Nicol\`o trees, or 
in GN trees for short) are described below. 

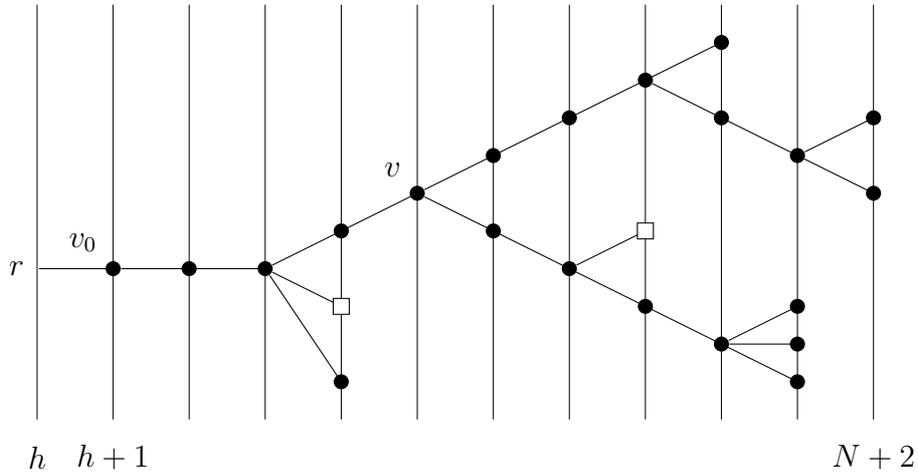
\begin{figure}[ht]
\centering
\begin{tikzpicture}
    \foreach \x in {0,...,11}
    {
        \draw[very thin] (\x ,2) -- (\x , 7.5);
    }
    	\draw (0,1.5) node {$h$};
    	\draw (1,1.5) node {$h+1$};
	\draw (11,1.5) node {$N+2$};
	\draw (-0,4) node[inner sep=0,label=180:$r$] (r) {};
	\draw (r) -- ++(1,0) node [vertex,label=130:$v_0$] (v0) {};
	\draw (v0) -- ++ (1,0) node[vertex] {} -- ++ (1,0) node[vertex] (v1) {};
	\draw (v1) -- ++ (1,-1.5) node[vertex] {};
	\draw (v1) -- ++ (1,-0.5) node [specialEP] {};
	\draw (v1) -- ++(1,0.5) node [vertex] {} -- ++ (1,0.5) node [vertex,label=130:$v$] (v) {};
	\draw (v) -- ++ (1,0.5) node [vertex] {} -- ++ (1,0.5) node [vertex] {} -- ++ (1,0.5) node [vertex] (v2) {} -- ++ (1,0.5) node [vertex] {};
		\draw (v2) -- ++ (1,-0.5) node [vertex] {} -- ++ (1,-0.5) node [vertex] (v3) {} -- ++ (1,-0.5) node [vertex] {};
		\draw (v3) -- ++ (1,0.5) node [vertex] {};
	\draw (v) -- ++ (1,-0.5) node [vertex] {} -- ++ (1,-0.5) node [vertex] (v4) {}-- ++ (1,-0.5) node [vertex] {}-- ++ (1,-0.5) node [vertex] (v5) {}-- ++ (1,-0.5) node [vertex] {};
		\draw (v4) -- +(1,0.5) node[specialEP] {};
		\draw (v5) -- +(1,0.5) node [vertex] {};
		\draw (v5) -- +(1,0) node [vertex] {};
\end{tikzpicture}
\caption{A tree $\t\in\TT^{(h)}_{N;n,m}$ with $n=7$ and $m=2$: the root is on scale $h$ and the  
endpoints are on scales $\le N+2$.\label{fig6.3}}
\end{figure}

\begin{enumerate}
\item Let us consider the family of all trees which can be constructed
by joining a point $r$, the {\it root}, with an ordered set of $n+m\ge 1$
points, the {\it endpoints} of the {\it unlabeled tree},
so that $r$ is not a branching point. The endpoints can be of two types, either normal or special,
the first drawn as black dots, the second as open squares, see Fig.\ref{fig6.3}; $n$ 
and $m$ indicate the number of normal and special endpoints, respectively.
The branching points will be called
the {\it non trivial vertices}.
The unlabeled trees are partially ordered from the root
to the endpoints in the natural way; we shall use the symbol $<$
to denote the partial order.
Two unlabeled trees are identified if they can be superposed by a suitable
continuous deformation, so that the endpoints with the same index coincide.
We shall also consider the {\it labelled trees} (to be called
simply trees in the following); they are defined by associating
some labels with the unlabelled trees, as explained in the
following items.
\item We associate a label $0\le h\le N$ with the root and we denote by
$\TT^{(h)}_{N;n,m}$ the corresponding set of labeled trees with $n$
normal and $m$ special endpoints. Moreover, we introduce a family of vertical lines,
labeled by an integer taking values in $[h,N+2]$, and we represent
any tree $\t\in\TT^{(h)}_{N;n,m}$ so that, if $v$ is an endpoint, it is contained 
in a vertical line with index $h+1< h_v\le N+2$, while
if it is a non trivial vertex, it is contained in a vertical line with index
$h<h_v\le N+1$, to be called the {\it scale} of $v$; the root $r$ is on
the line with index $h$.
In general, the tree will intersect the vertical lines in set of
points different from the root, the endpoints and the branching
points; these points will be called {\it trivial vertices}.
The set of the {\it
vertices} will be the union of the endpoints, of the trivial
vertices and of the non trivial vertices; note that the root is not a vertex.
Every vertex $v$ of a
tree will be associated to its scale label $h_v$, defined, as
above, as the label of the vertical line whom $v$ belongs to. Note
that, if $v_1$ and $v_2$ are two vertices and $v_1<v_2$, then
$h_{v_1}<h_{v_2}$.
\item There is only one vertex immediately following
the root, called $v_0$ and with scale label equal to $h+1$.
\item Given a vertex $v$ of $\t\in\TT^{(h)}_{N;n,m}$ that is not an endpoint,
we can consider the subtrees of $\t$ with root $v$, which correspond to the
connected components of the restriction of
$\t$ to the vertices $w\ge v$. If a subtree with root $v$ contains only
$v$ and one endpoint on scale $h_v+1$,
it will be called a {\it trivial subtree}.
\item With each normal (resp. special) endpoint $v$ on scale $N+2$ 
we associate a factor $\lis \VV(\psi^{(\le N)},\c,\bsA)$
(resp. $\lis\BB(\psi^{(\le N)},\c,\bsA)$); here $\c=:\psi^{(N+1)}$ 
should be thought of as the field on scale $N+1$. 
With the endpoints on scale $h_v\le N+1$ we associate
a factor  $2^{h_v-1}\n_{h_v-1}F_\n(\psi^{(\le h_v-1)})$ if the endpoint is normal or a factor 
$\LL\BB^{(h_v-1)}$ if the endpoint is special. The vertex $v'$ immediately preceding an endpoint 
$v$ on scale $h_v\le N+1$ is necessarily non trivial. Note that none of the endpoint on scale $\le N+1$ 
is associated with a quartic operator in the $\psi$ fields: this is due to the cancellation mentioned 
in Remark 2 at the beginning of Sec.\ref{sec3.loc}.
\end{enumerate}

In terms of these trees, the effective potential 
${\cal V}^{(h)}$ can be written as
\bea &&  |\L| \lis e_{h+1}+\lis\SS_{h}(\bsA)+{\cal V}^{(h)}(\psi^{(\le h)},\bsA) +\BB^{(h)}(\psi^{(\le h)},\bsA) =
\nonumber\\
&&\qquad =
\sum_{\substack{n,m\ge 0\\ n+m\ge 1}}\sum_{\t\in\TT^{(h)}_{N;n,m}}
\VV^{(h)}(\t,\psi^{(\le h)},\bsA)\;,\label{3.17}\eea
where, if $v_0$ is the first vertex of $\t$, if $\t_1,\ldots,\t_s$ ($s=s_{v_0}$)
are the subtrees of $\t$ with root $v_0$, and if $\EE^T_{h+1}$ is the truncated expectation 
associated to the propagator $g^{(h)}$,
\be {\cal V}^{(h)}(\t,\psi^{(\le h)},\bsA)=\frac{1}{s!} \EE^T_{h+1}
\big(\lis V^{(h+1)}(\t_1,\psi^{(\le h+1)},\bsA);\ldots; \lis V^{(h+1)}
(\t_{s},\psi^{(\le h+1)},\bsA)\big)\;,\label{3.18}\ee
and $\lis V^{(h+1)}(\t_i,\psi^{(\le h+1)},\bsA)$:
\begin{itemize}
\item is equal to $\RR\VV^{(h+1)}(\t_i,\psi^{(\le h+1)},\bsA))$ (where $\RR$ is the linear 
operator induced by the definitions Eqs.(\ref{3.loc}),(\ref{2.localize}),(\ref{3.locbb}),(\ref{3.10})) if $\t_i$ is non trivial;
\item is equal to $2^{h+1}\n_{h+1}F_\n(\psi^{(\le h+1)})$ if $\t_i$ is trivial, $h< N$ and the endpoint
of $\t_i$ is normal;
\item is equal to $\LL\BB^{(h+1)}(\psi^{(\le h+1)},\bsA)$ if $\t_i$ is trivial, $h< N$ and the endpoint
of $\t_i$ is normal;
\item is equal to $\lis \VV(\psi^{(\le N+1)},\bsA)$ (resp. $\lis \BB(\psi^{(\le N+1)},\bsA)$) 
if $\t_i$ is trivial, $h=N$ and the endpoint of $\t_i$ is normal (resp.\ special).
\end{itemize}
Note that, by the definition of $\lis V^{(h)}$, it is apparent that all the nodes that are not endpoints are 
associated with the action of the renormalization operator $\RR$. The local (singular) parts only appear 
in the contributions associated with the endpoints: in this sense, the singular parts ``do not 
accumulate". 
Using its inductive definition Eqs.(\ref{3.17})-(\ref{3.18}) and the Pfaffian representation 
for the truncated expectations, the right hand side of Eq.(\ref{3.17}) can be
put in a form similar to the one derived in Section \ref{3.chi} for the integration of the $\c$ field. 
To do that, we further expand $\VV^{(h)}(\t,\psi^{(\le h)},\bsA)$ by distinguishing the different 
contributions arising from the choices of the monomials in the factors $\lis\VV$ and $\lis\BB$
associated with the endpoints on scale $N+2$, as well as the scale at which each field in these 
monomials is contracted. To this purpose, we introduce a few more definitions, which generalize the 
definition of $P_v$ in Sec.\ref{3.chi}.

We introduce a {\it field label} $f$ to distinguish the field variables
appearing in the monomials associated with the endpoints;
the set of field labels associated with the endpoint $v$ will be called $I_v$; 
if $v$ is not an endpoint, we shall
call $I_v$ the set of field labels associated with the endpoints following
the vertex $v$. Note  that every field can be either of type $A$ or $\psi$: correspondingly, we denote by 
$I_v^A$ and $I_v^\psi$ the set of field labels of type $A$ and $\psi$, respectively, associated with $v$. 
Furthermore, we denote by $\xx(f)$ the
space-time point  of the field variable with label $f$; if $f\in I_v^A$, we denote by $j(f)$ the $j$ index of 
the external field with label $f$;  if $f\in I_v^\psi$, we denote by $\o(f)$ the $\o$ index of the 
Grassmann field with label $f$. 

We associate with any vertex $v$ of the tree a subset $P_v$ of $I_v$,
the {\it external fields} of $v$; we further denote by $P_v^A$ and $P_v^\psi$ the subsets 
of $P_v$ of fields of type $A$ and $\psi$, respectively (of course, $P_v^A\cap P_v^\psi=\emptyset$ and 
$P_v^A\cup P_v^\psi=P_v$). These subsets must satisfy various
constraints. First of all, if $v$ is not an endpoint and $v_1,\ldots,v_{s_v}$
are the $s_v\ge 1$ vertices immediately following it, then
$P_v \subseteq \cup_i
P_{v_i}$; if $v$ is an endpoint, $P_v=I_v$.
If $v$ is not an endpoint, we shall denote by $Q_{v_i}$ the
intersection of $P_v$ and $P_{v_i}$; this definition implies that $P_v=\cup_i
Q_{v_i}$. The union of the subsets $P_{v_i}\setminus Q_{v_i}$
is, by definition, the set of the {\it internal fields} of $v$,
and is non empty if $s_v>1$. Similar definitions are valid for $Q_v^\psi, Q_v^A$, etc. 
Note that $P_v^A=Q_v^A$ for all $v$, simply because $A$ is, by definition, an external field. 
Given $\t\in\TT^{(h)}_{N;n,m}$, there are many possible choices of the
subsets $P_v$, $v\in\t$, compatible with all the constraints. We
shall denote by ${\cal P}_\t$ the family of all these choices and by ${\bf P}$
the elements of ${\cal P}_\t$. 

Let us note that the resulting expansion for the effective potential, as compared to the expansion 
for $\wt\VV^{(N)}$ described in Sec. \ref{3.chi}, has an important difference related to the iterative action of the $\RR$ operator on the 
nodes of $\t$ that are not endpoints; as observed in the first Remark at the beginning of Sec.\ref{sec3.loc},
this is equivalent to the action of a suitable differential operator, whose precise form 
has been discussed in several papers on the subject, see e.g.\ \cite{BG,BGPS,BM01}; fortunately, in our 
context we do not need to describe the form of the interpolation operator exactly, but only some of its 
general properties, discussed in the following.

\subsubsection{The non-renormalized expansion}

Let us start with describing the basic dimensional bounds of the effective potential, {\it temporarily 
neglecting the action of the renormalization operator}. More precisely,  let us temporarily pretend that 
the action of $\RR$ on the nodes of $\t$ that are not endpoints, induced by definition 
Eq.(\ref{3.18}) and by the first item immediately following it, is replaced by the identity. 
Then the result of the iteration would lead to the 
following relation:
\be\VV_*^{(h)}(\t,\psi^{(\le h)},\bsA)=\sum_{
{\bf P}\in\PP_\t}\sum_{T\in{\bf T}}\int d\xx_{v_0}W^*_{\t,{\bf P},{\bf T}}
(\xx_{v_0})\psi^{(\le h)}_{P_{v_0}^\psi}A_{P_{v_0}^A}
\;,\label{5.49a}\ee
where $\xx_{v_0}$ is the set of integration variables
associated with $\t$ and $T=\bigcup_{v\ {\rm not}\ {\rm e.p.}} T_v$ is the union of the spanning trees 
associated with all the nodes that are not endpoints in $\t$. Moreover,  
$W^*_{\t,{\bf P},{\bf T}}$ is given by
\bea && W^*_{\t,{\bf P},{\bf T}}(\xx_{v_0})=\label{5.50s}\\
&&\ =\Big[\prod_{v\ {\rm e.p.}} K^{(h_v)}_{v}(\xx_{v})\Big]
\Big\{\prod_{v\ {\rm not}\ {\rm e.p.}}\frac1{s_v!} \int
dP_{T_v}({\bf t}_v) \Pf G^{h_v,T_v}({\bf t}_v)
\Big[\prod_{\ell\in T_v} g^{(h_v)}_\ell\Big]\Big\}\;,\nonumber\eea
which can be thought of as the multiscale version of the expansion derived in Section \ref{3.chi}.
Here $s_v$ indicates the number of nodes immediately following $v$ on $\t$.

Eq.(\ref{5.50s}) can be bounded in a way analogous to Eq.(\ref{3.8s}); roughly speaking 
the rationale is that each $g^{(h_v)}_\ell$ is replaced by its $L_1$ norm, which dimensionally 
is proportional to $2^{-h_v}$, and each $\Pf G^{h_v,T_v}({\bf t}_v)$ by the square root of its $L_\io$ norm, 
which dimensionally is proportional to $2^{h_v(\frac12\sum_{i=1}^{s_v}|P_{v_i}^\psi|-\frac12|P_v^\psi|-s_v+1)}$. The result is 
\bea && \frac1{|\L|}\int d\xx_{v_0} |W^*_{\t,{\bf P},T}(\xx_{v_0})|\le
C^{n+m}\sum_{{\bf P}}\Big[\prod_{v\,{\rm normal}\ {\rm e.p.}}(C|\l|)^{\max\{1,c|P_{v}^\psi|\}}\Big]\cdot\label{5.51s}\\
&&\cdot
\Big[\prod_{v\,{\rm not}\ {\rm e.p.}} \frac1{s_v!}2^{-h_v(s_v-1)}2^{h_v(\frac12\sum_{i=1}^{s_v}|P_{v_i}^\psi|-\frac12|P_v^\psi|-s_v+1)}\Big]
\,\Big[\prod_{v\ {\rm e.p.}}2^{h_{v'}(2-\frac12|P_v^\psi|-|P_{v}^A|)}\Big]\;.\nonumber\eea
where in the last factor $v'$ indicates the node immediately preceding $v$ on $\t$.
Note that the bound is valid provided the running coupling constants are close to 
their bare values.
The big product in the second line can be conveniently reorganized by using the identities
\bea && \hskip-.7truecm
\sum_{v\ {\rm not}\ {\rm e.p.}}h_v(s_v-1)=h(n+m-1)+\sum_{v{\rm not}\ {\rm e.p.}}(h_v-h_{v'})
(n(v)+m(v)-1)\;,\nonumber\\
&&\hskip-.7truecm  \sum_{v\ {\rm not}\ {\rm e.p.}}\hskip-.2truecm h_v\Big[\big(\sum_{i=1}^{s_v}
|P_{v_i}^\psi|\big)-|P_v^\psi|\Big]=h(|I_{v_0}^\psi|-|P^\psi_{v_0}|)+\hskip-.2truecm
\sum_{v\ {\rm not}\ {\rm e.p.} }\hskip-.2truecm
(h_v-h_{v'})(|I_v|-|P_v|)\;,\nonumber\eea
where $n(v)$ (resp. $m(v)$) is the number of normal (resp. special) endpoints following $v$ on $\t$.
In the second line, $|I_{v_0}^\psi|=\sum_{v\ {\rm e.p.}}|P_v^\psi|$, so that, plugging these identities into 
Eq.(\ref{5.51s}) and using also the fact that $\sum_{v\ {\rm e.p.}}|P_v^A|=|P_{v_0}^A|$, gives:
\bea && \frac1{|\L|}\int d\xx_{v_0} |W^*_{\t,{\bf P},T}(\xx_{v_0})|\le
C^{n+m}\Big[\prod_{v\,{\rm normal}\ {\rm e.p.}}(C|\l|)^{\max\{1,c|P_{v}^\psi|\}}\Big]\cdot\nonumber\\
&&\cdot 2^{h(2-\frac12|P_{v_0}^\psi|-
|P_{v_0}^A|)}\Big[\prod_{v\,{\rm not}\ {\rm e.p.}} 2^{(h_v-h_{v'})(2-\frac12|P_v^\psi|-|P_v^A|)}
\Big]\;.\label{5.51t}\eea
We call 
$d_v=-2+\frac{|P_v^\psi|}{2}-|P_v^A|$ 
the {\it scaling dimension}
of $v$, depending on the number of the external
fields of $v$. If $d_v<0$ for any $v$ one can
sum over $\t,{\bf P},T$ obtaining convergence
for $\l$ small enough; however 
$d_v\le 0$ when $(|P_{v}^\psi|,|P_v^A|)$ is equal either to $(2,0)$, in which case the scaling dimension 
is 1 (relevant operator), or to $(4,0)$ or to $(2,1)$, in which cases  the scaling dimension 
is 0 (marginal operator). The apparent divergences associated with the presence of 
nodes with these special fields configurations is cured by the action of the regularization operator $\RR$, as discussed in the following. 

\subsubsection{The renormalized expansion}

Of course, the expansion for $\VV^{(h)}$ described above does not lead to Eq.(\ref{5.49a}), but rather to a different expansion,
which is similar in many respects to Eq.(\ref{5.49a}), modulo the presence of a certain number of 
interpolation operators arising from the action of $\RR$ on the nodes of $\t$ that are not endpoints. 
The precise definition of these interpolation operators is rather technical, see e.g.\ Section 3 of 
\cite{BM01}. The outcome can be expressed as 
\bea && \RR\VV^{(h)}(\t,\psi^{(\le h)},\bsA)=\label{5.49as}\\
&&=\sum_{
{\bf P}\in\PP_\t}\sum_{T\in{\bf T}}\sum_{\b\in B_T}\int d\xx_{v_0}W_{\t,{\bf P},{\bf T},\b}
(\xx_{v_0})\Big[\prod_{f\in P_{v_0}^\psi}\hat \dpr^{q_\b(f)}_{j_\b(f)}\psi^{(\le h)}_{\xx_\b(f),\o(f)}\Big]A_{P_{v_0}^A}
\;,\nonumber\eea
where: $B_T$ is a set of indices which allows to distinguish the
different terms produced by the non trivial $\RR$ operations;
$\xx_\b(f)$ is a coordinate obtained by interpolating two points in $
\xx_{v_0}$, in a suitable way depending on $\b$; $q_\b(f)$ is a nonnegative 
integer $\le 2$; $j_\b(f)=1,2$ and $\hat\dpr^q_j$ is a suitable differential 
operator, dimensionally equivalent to $\dpr^q_j$ (see \cite{BM01} for a precise 
definition); $W_{\t,{\bf P},{\bf T},\b}$ is given by (defining $\SS_1=1-\PP_0$ and $\SS_2=1-\PP_0-\PP_1$, as in \cite{GM05}; 
recall that $\PP_0$ and $\PP_1$ were defined at the beginning of 
Sec.\ref{sec3.loc}):
\bea && W_{\t,{\bf P},{\bf T},\b}(\xx_{v_0})=
\Big[\prod_{v\ {\rm e.p.}}  d^{b_\b(v)}_{j_\b(v)}(\xx_{v,\b}))\PP_{I_\b(v)}^{C_\b(v)}
\SS_{i_\b(v)}^{c_\b(v)}K^{(h_v)}_{v}(\xx_{v})\Big]\cdot\nonumber\\
&&\cdot
\Big\{\prod_{v\ {\rm not}\ {\rm e.p.}}\frac1{s_v!} \int
dP_{T_v}({\bf t}_v) \PP_{I_\b(v)}^{C_\b(v)}
\SS_{i_\b(v)}^{c_\b(v)} \Pf G^{h_v,T_v}({\bf t}_v)\cdot\nonumber\\
&&\cdot
\Big[\prod_{\ell\in T_v} \hat\dpr^{q_\b(f^1_\ell)}_{j_\b(f^1_\ell)}
\hat\dpr^{q_\b(f^2_\ell)}_{j_\b(f^2_\ell)}[d^{b_\b(\ell)}_{j_\b(\ell)}(\xx_\ell-\xx'_\ell) 
\PP_{I_\b(\ell)}^{C_\b(\ell)}
\SS_{i_\b(\ell)}^{c_\b(\ell)}g^{(h_v)}_\ell\Big]\Big\}\label{5.55s}\eea
where: $d_{j_\b(\ell)}(\xx_\ell-\xx'_\ell)$
is a suitable interpolation operators, dimensionally equivalent to $|\xx_\ell-\xx'_\ell|$ (and similarly 
$d_{j_\b(v)}(\xx_{v,\b}))$, is dimensionally equivalent to the distance 
operator among two distinct points in $\xx_v$); the indices $b_\b(v)$, $b_\b(l)$, $q_\b(f^1_\ell)$ and $q_\b(f^2_\ell)$ are
nonnegative integers $\le 2$; $j_\b(v)$, $j_\b(f^1_\ell)$, $j_\b(f^2_\ell)$
and $j_\b(\ell)$ can be either 
$1$ or $2$; $i_\b(v)$ and $i_\b(\ell)$ can be either $1$ or $2$;
$I_\b(v)$ and $I_\b(l)$ can be either $0$ or $1$; $C_\b(v)$, $c_\b(v)$, $C_\b(l)$
and $c_\b(l)$ can be either $0,1$ and $\max\{C_\b(v)+c_\b(v),C_\b(l)+ c_\b(l\})\le 1$;
$G^{h_v,T_v}_\b({\bf t}_v)$ is obtained from $G^{h_v,T_v}({\bf t}_v)$
by substituting the element $t_{i(f),i(f')}g^{(h_v)}(f,f')$ with 
$t_{i(f),i(f')}\hat\dpr^{q_\b(f)}_{j_\b(f)}
\hat\dpr^{q_\b(f')}_{j_\b(f')}g^{(h_v)}(f,f')$.

It would be very difficult to give a precise description
of the various contributions of the sum over $B_T$: however, we only need to know that the 
number of ``zeros" $d_{j_\b(\ell)}(\xx_\ell-\xx'_\ell)$, the number of ``derivatives" $\hat\dpr_{j_\b(f)}$
and the scale of the propagators they act on are, by construction, 
such that all the potentially dangerous nodes $v$ that are not e.p.\ 
(i.e. the nodes $v$ that are not e.p.\ and with $(P_{v}^\psi,P_v^A)$  equal either to $(2,0)$, or $(4,0)$, or to $(2,1)$) gain a ``scale jump" $2^{-(h_v-h_{v'})z(P_v)}$, where 
\be z(P_v)=\begin{cases}
1 & {\rm if}\ (|P_v^\psi|,|P_v^A|)=(4,0)\;,\\
2 & {\rm if}\ (|P_v^\psi|,|P_v^A|)=(2,0)\;,\\
1 & {\rm if}\ (|P_v^\psi|,|P_v^A|)=(2,1)\;,\\
0 & {\rm otherwise}\end{cases}\label{5.57s}\ee
which is enough to cure the apparent divergences present in Eq.(\ref{5.51t}). For more details, 
see e.g.\ \cite{BM01,GM01,GM05}, see also \cite[Sec.2.2]{GMP10b} for a recent pedagogical description of this point. 

In conclusion, the kernels of the effective potential are bounded in 
a way completely analogous to Eq.(\ref{5.51t}), modulo the improved dimensional factors induced by the 
action of the $\RR$ operators:
\bea && \frac1{|\L|}\sum_{\b\in B_T}\int d\xx_{v_0} |W_{\t,{\bf P},T,\b}(\xx_{v_0})|\le
C^{n+m}\Big[\prod_{v\,{\rm normal}\ {\rm e.p.}}(C|\l|)^{\max\{1,c|P_{v}^\psi|\}}\Big]\cdot\nonumber\\
&&\cdot 2^{h(2-\frac12|P_{v_0}^\psi|-
|P_{v_0}^A|)}\Big[\prod_{v\,{\rm not}\ {\rm e.p.}} 2^{(h_v-h_{v'})(2-\frac12|P_v^\psi|-|P_v^A|-z(P_v))}
\Big]\;,\label{5.51tv}\eea
which is valid {\it provided that the running coupling constants stay close to their values on scale 
$h=N$}, for all scales between $h$ and $N$. Under this assumption, Eq.(\ref{5.51tv})  implies the analyticity of the 
kernels of $\VV^{(h)}$ and $\BB^{(h)}$ and the decay bounds Eq.(\ref{3.5}).

An immediate corollary of the bound Eq.(\ref{5.51tv}) is
that contributions from trees $\t\in\TT_{N;n,m}^{(h)}$ with a vertex $v$ on
scale $h_v=k>h$ admit an improved bound with respect to
Eq.(\ref{3.5}), with an extra dimensional factor $2^{\th(h-k)}$, $0<\th<1$,
which can be thought of as a dimensional gain with
respect to the ``basic'' dimensional bound in Eq.(\ref{3.5}). This
improved bound is usually referred to as the {\it short memory}
property (i.e., long trees are exponentially suppressed); it is due 
to the fact that the renormalized scaling dimensions $d_v=2-\frac12|P_v^\psi|-|P_v^A|-z(P_v)$
in Eq.(\ref{5.51tv})  are all negative, and can be obtained by taking a fraction of 
the factors $2^{(h_v-h_{v'})d_v}$ associated to the branches of the tree
$\t$ on the path connecting the vertex on scale $k$ to the one on scale $h$.

Under the same assumptions, the beta function itself, $\b_h^\#$, is analytic and dimensionally bounded 
by a constant independent of $h$. Moreover, the contributions to it from trees that have at least one node 
on scale $k>h$ is dimensional bounded proportionally to $2^{\th(h-k)}$, with $0<\th<1$.
It is remarkable that thanks to these bounds, the dynamical system 
induced by the beta function can be fully studied and shown to lead to a bounded 
flow of the running coupling constants, as discussed in the next section.

\subsection{The flow of the running coupling constants}

As announced above, the flow of the running coupling constants is controlled by the equations:
\bea && Z_{h-1}=Z_h+\b^{Z}_h\;,\qquad 
\frac{\s_{h-1}}{\s_h}=1+\b^{\s}_h\;,\nonumber\\
&&\n_{h-1}=2\n_h+\b^\n_h\;,\qquad  \frac{Z^{(1)}_{h-1}}{Z^{(1)}_h}=1+\b^{Z,1}_h\;,\label{ppbis} \eea
where $\b_h^\#=\b_h^\#\big((Z_h,\s_h,\n_h,Z^{(1)}_h),\ldots,
(Z_N,\s_N,\n_N,Z^{(1)}_N)\big)$ is an analytic function of its argument, with an analyticity 
domain bounded by: $|Z_k-1|+|Z_k^{(1)}-1|+|\n_k|+\big|\frac{\s_k}{\s}-1\big|\le \e_0$, for all 
$h\le k\le N$, where $\e_0$ is a suitable (small) positive constant. 

Let us start by studying the flow of $Z_h$ and $\n_h$. By construction,
their beta functions, $\b_h^Z$ and $\b_h^\n$, are independent of $Z^{(1)}_h$ and $\s_h$ and, in 
particular, they 
are independent of the infrared cutoff scale $h_\s$, for all $h>h_m$. Therefore, the two coupled 
equations for $Z_h$ and $\n_h$ can be naturally studied for all $h\le N$, without any infrared cutoff.
Note also that both $\b_h^Z$ and $\b_h^\n$ can be expressed as sums over GN trees with at least two endpoints, and with at least 
one endpoint on scale $N+2$, the reason being that the local part of the trees with only normal 
endpoints of scale $\le N+1$ is zero by the support properties of the single-scale propagators
that enter the definition of $\b_h^\#$.
Therefore, by the short memory property, $|\b_h^Z|,|\b_h^\n|\le C_\th|\l|2^{\th(h-N)}$.

In order to solve the two coupled equations for $Z_h$ and $\n_h$ we use a fixed point argument,
following the same strategy as 
\cite[Appendix A5]{GM05}. Let  $\mathfrak{M}_{N;K,\th}$ be the space 
of sequences $\underline\n=\{\n_h\}_{h\le N}$ such that $|\n_h|\le K|\l|2^{-\th(h-N)}$, 
$\forall h\le N$; we shall think $\mathfrak{M}_{N;K,\th}$
as a Banach space with norm $||\cdot||_\th$, where $||\underline\n||_{\th}=\sup_{k\le N}2^{\th(N-k)}$. 
Given $\underline\n\in\mathfrak{M}_{N;K,\th}$, we first solve the equation for $Z_h$, 
$Z_{h-1}=Z_h+\b^{Z}_h$, which leads to a bounded and exponentially 
convergent flow. In fact, using the fact that $|\b^{Z}_h|\le C|\l|2^{\th(h-N)}$, 
it is straightforward to check inductively that, if $Z_N=1$, 
\be |Z_h(\underline\n)-1|\le C|\l|\;,\qquad |Z_h(\underline\n)-Z_{-\io}(\underline\n)|\le C|\l|2^{\th h}\label{3.53a}\ee
uniformly in $\underline\n$ 
for $\underline\n\in\mathfrak{M}_{N;K,\th}$. Once that $Z_h=Z_h(\underline\n)$ is fixed as a function of 
$\underline\n$, the beta function for $\n_h$ can be thought of as function of $\underline\n$ only,
and we shall write $\b_h^\n(\underline\n)$. 
The goal is to fix $\n_N$ in a smart way, so that the flow generated by $\b_h^\n(\underline\n)$ with 
initial datum $\n_N$ generates a sequence in $\mathfrak{M}_{N;K,\th}$. We note that 
the solution to the flow equation induced by the beta function with initial condition 
$\lim_{h\to-\io}\n_h=0$ is a fixed point of the map  $\mathbf{T}:
\mathfrak{M}_{N;K,\th} \to \mathfrak{M}_{N;K,\th} $ defined by
\be(\mathbf{T} \underline \n)_h = -\sum_{j\le h} 2^{j-h-1} \b_j^\n(\underline\n).\label{3.map}\ee
The fact that, for $K$ sufficiently large, $\mathbf{T} $ is a map from $\mathfrak{M}_{N;K,\th}$ to itself 
is a simple consequence 
of the bound $|\b_h^\n|\le C|\l|2^{\th(h-N)}$. More interestingly, we can prove that $\mathbf{T} $
is a contraction map on $\mathfrak{M}_{N;K,\th}$. In fact, using the expansion in GN tress, the short 
memory property and the fact that the trees contributing to $\b_h^\n$ have at least one endpoint on 
scale $N+2$, we find that if $\underline\n,\underline\n'\in\mathfrak{M}_{N;K,\th}$, then
\be |\b_h^\n(\underline\n)-\b_h^\n(\underline\n')|\le C|\l|\sum_{k\ge h}|\n_k-\n_k'|2^{\th(h-N)}\le
C'|\l|2^{\th(h-N)}||\underline\n-\underline\n'||_\th\;,\ee
which implies, if plugged back into Eq.(\ref{3.map}), that  
\be ||\mathbf{T} \underline \n -  \mathbf{T} \underline \n ' ||_{\th} 
 \le c'' |\l| \| \underline \n -  \underline \n ' \|_{\th}\;,\ee
i.e.\ $\mathbf{T}$ is a contraction for $|\l|$ sufficiently 
 small.  Then the Banach fixed point theorem shows that $\mathbf{T}$ has a unique fixed point 
 $\underline{\n}^*$ in  $\mathfrak{M}_{N;K,\th}$, such that 
 \be \n_N=\n^*_N=-\sum_{j\le N} 2^{j-N-1} \b_j^\n(\underline\n^*)\;,\label{3.58s}\ee
 which is of order $\l$, as it should be. In fact, plugging Eq.(\ref{3.58s}) into Eq.(\ref{3.11})
 and using the relation of $\n$ with the perturbed critical temperature, Eq.(\ref{2.counter}),
 shows that the distance between the interacting and the free critical temperatures is of order $\l$,
 as expected. A close examination of the proof also shows that the resulting $\n$ is exactly independent 
 of $a$ (rather than being ``just" bounded by $C|\l|$, independently of $a$). Similarly, 
 the limiting value of $Z_h=Z_h(\underline\n^*)$ as $h\to-\io$, to be called simply $Z_{-\io}(\l)$, 
 is exactly independent of $a$. 
 
 Once that the flows of $Z_h$ and $\n_h$ have been controlled, the flows of $\s_h$ and $Z^{(1)}_h$
 can be studied easily, in the same fashion as the flow of $Z_h$ at $\underline\n\in\mathfrak{M}_{N;K,\th}$ 
fixed. First of all, note that by construction both $\b_h^\s$ and $\b_h^{Z,1}$ are independent of 
 $\s_h$ and, in particular, they
are independent of the infrared cutoff scale $h_\s$, for all $h>h_\s$. Therefore, also the two 
equations for $\frac{\s_{h-1}}{\s_h}$ and $\frac{Z_{h-1}^{(1)}}{Z_h^{(1)}}$ can be naturally studied 
for all $h\le N$, without any infrared cutoff. Moreover, $\b_h^\s$, in complete analogy with the beta 
functions for $Z_h$ and $\n_h$, can be expressed as sums over GN trees with at least two endpoints, 
and with at least 
one endpoint on scale $N+2$; similarly, $\b_h^{Z,1}$ is a sum over GN trees with exactly one special 
endpoint (on an arbitrary scale)
and at least one normal endpoint on scale $N+2$. 
Therefore, by the short memory property, $|\b_h^\s|,|\b_h^{Z,1}|\le C_\th|\l|2^{\th(h-N)}$, from which
we get the analog of Eq.(\ref{3.53a}) (recall that $\s_N=\s$ and $Z^{(1)}_N=1$):
\bea 
 &&|Z_h^{(1)}-1|\le C|\l|\;,\qquad |Z_h^{(1)}-Z_{-\io}^{(1)}|\le C|\l|2^{\th h}\;,\\
 && \big|\frac{\s_h}{\s}-1\big|\le C|\l|\;,\label{3.60uy}\qquad 
 \big|\frac{\s_h}{\s}-Z^*(\l)\big|\le C|\l|2^{\th h}\;,\label{3.61uy}\eea
where $Z^{(1)}_{-\io}(\l)$ and $Z^*(\l)$ are two analytic functions of $\l$, 
analytically close to 1. As already observed for $Z_{-\io}(\l)$, these two functions are independent 
of $a$. 

\section{The renormalized correlation functions}\label{sec:correlations}

We are left with computing the renormalized $m$-points energy correlation functions and proving the 
main results stated in Section \ref{sec1} and summarized in Theorems \ref{mainprop} and \ref{thm2}.
We closely follow \cite[Section 2.3]{BFM07}. 
Recall that the $m$-point correlation in the thermodynamic limit and at distinct points $\xx_1,\ldots,\xx_m$, 
can be obtained by performing the functional derivative of the generating function $\Xi_{-,-}(\bsA)$ 
constructed in the previous sections (here $\b=\b(a)$, fixed so that $2(t-t_c(0))/t=a \s+4\p\n$, 
where $t=\tanh(\b(a)J)$, $\n$ is the temperature counterterm fixed as explained in the previous 
section and $\s=\s(a)$ is an a priori prescribed mass, such that $\lim_{a\to 0}\s(a)=m^*\in\mathbb R$):
\be \media{\e_{\xx_1,j_1};\cdots;\e_{\xx_m,j_m}}^T_{\b}=\lim_{L\to\infty}
a^{-2m}\frac{\dpr^m}{\dpr A_{\xx_1,j_1}
\cdots\dpr A_{\xx_m,j_m}}\log \Xi_{-,-}({\boldsymbol A})\Big|_{{\bsA}={\bf 0}}\;.\label{4.1}\ee
Performing this functional derivative is just a way of extracting from the generating function the 
kernel associated with the monomial $A_{\xx_1,j_1}
\cdots A_{\xx_m,j_m}$ in $\log\Xi_{-,-}(\bsA)=\sum_{h=h_\s-1}^N\lis\SS_h(\bsA)$,
where $\lis \SS_h(\bsA)$ with $h<N$ is defined by Eq.(\ref{3.16}) and 
$\lis\SS_N(\bsA):=\SS^{(N)}(\bsA)$, see Eq.(\ref{3.3}). The function $\lis\SS_h(\bsA)$ can be written
in a way similar to Eq.(\ref{3.5q}):
\be \lis\SS_h(\bsA)=\sum_{m\ge 1}\sum_{j_1,\ldots,j_m}\int d\xx_1\cdots d\xx_m S^{(h)}_{m;{\underline j}}(\xx_1,\ldots,\xx_m)\prod_{i=1}^m A_{\xx_i,j_i}\;,
\label{3.5qk}\ee
so that
\be \media{\e_{\xx_1,j_1};\cdots;\e_{\xx_m,j_m}}^T_{\b}=m!\sum_{h=h_\s-1}^N S^{(h)}_{m;{\underline j}}(\xx_1,\ldots,\xx_m)\ee
and the thermodynamic limit $L\to\infty$  is understood. Note the combinatorial factor $m!$ arising from the action 
of the derivatives w.r.t.\ $A_{\xx_i,j_i}$ on $ \lis\SS_h(\bsA)$. 
As explained in the previous sections, 
$S^{(h)}_{m;{\underline j}}(\xx_1,\ldots,\xx_m)$ can be expressed by a sum over trees $\t\in\TT^{(h)}_{N;n,m_0}$
with $m_0\le m$ {\it special endpoints} and $n\ge 0$ normal end-points, and over 
a suitable set of field labels ${\bf P}\in\PP_\t$; $m_0<m$ 
corresponds to the case where there is at least one normal endpoint $v$ on scale $N+2$ 
such that $P_v^A\neq\emptyset$.  In formulae, we can write:
\bea &&  \media{\e_{\xx_1,j_1};\cdots;\e_{\xx_m,j_m}}^T_{\b}=\label{4.1a}\\
&&=m!\sum_{h=h_\s-1}^N \sum_{n\ge 0}\sum_{m_0=0}^{m}
\,\sum_{\t\in\TT^{(h)}_{N;n,m_0}}\,\sum_{\substack{{\bf P}\in\PP_\t:\\
|P_{v_0}|=|P_{v_0}^A|=m}}S_{\t,{\bf P}}(\xx_1,j_1;\cdots;\xx_m,j_m)\;,\nonumber\eea
where $S_{\t,{\bf P}}(\xx_1,j_1;\cdots;\xx_m,j_m)$ can be represented as in Eqs.(\ref{5.49as})-(\ref{5.55s}), and bounded in a way similar to Eq.(\ref{5.51tv}). 
Given $\t\in\TT^{(h)}_{N;n,m_0}$ and 
${\bf P}\in\PP_\t$, let $E^*_{\t,{\bf P}}$ be the set of endpoints $v$ in $\t$ 
such that $P_v^A\neq\emptyset$ (i.e. $E^*_{\t,{\bf P}}$ contains all the special endpoints of $\t$ plus
the normal endpoints $v$ on scale $N+2$ such that $|P_v^A|\ge 1$); let us denote by $\tau^*$
the minimal subtree of $\t$ connecting all the endpoints in $E^*_{\t,{\bf P}}$.
For each $v \in \tau$, if $m_v:=|P_v^A|$, let $s_v^*$ the number of vertices
immediately following $v$ with $m_v \ge 1$. Moreover, let
$V_{nt}(\tau^*)$ be the set of vertices in $\t^*$ with $s_v^*>1$, which are the branching points 
of $\t^*$. For future reference, we also define $v_0^*$ to be the leftmost vertex on $\t^*$ 
and $h_0^*$ its scale. See Fig.\ref{specialTreeFig}.

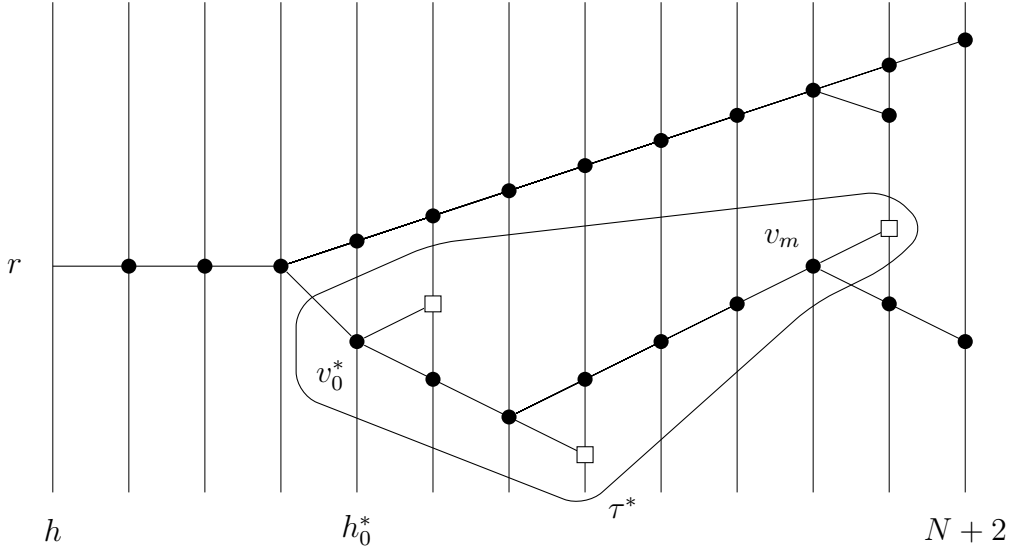
\begin{figure}
\centering
\begin{tikzpicture}
    \foreach \x in {0,...,12}
    {
        \draw[very thin] (\x ,1) -- (\x , 7.5);
    }
    \draw (0,0.5) node {$h$};
      \draw (-0.5,4) node {$r$};
    \draw (12,0.5) node {$N+2$};
    \draw (0,4) node (v0) {} -- (1,4) node  [vertex] {} -- (2,4) node [vertex] {} -- (3,4) node [vertex] (v1) {};
    \node [vertex] (v0star) at (4,3) [label=-95:$v_0^*$] {};
    \draw (v1) -- (v0star);
    \draw (v1) \foreach \x in {4,...,12} {-- (\x,3+\x/3) node [vertex] {}};
    \draw (10,19/3) -- (11,6) node [vertex] {};
    \draw (v0star) --  (5,3.5) node [specialEP] (x) {};
    \node at (4,0.5) {$h_0^*$};
    \draw (v0star) -- (5,2.5) node [vertex] {} -- (6,2) node [vertex] (v2) {} -- (7,1.5) node [specialEP] {};
    \draw (v2) \foreach \x in {7,8,9} {-- (\x,-1+\x/2) node [vertex] {}} -- (10,4) node (v3) [vertex,label=100:$v_m$] {};
    \draw (v3) -- (11,4.5) node [specialEP] {};
    \draw (v3) -- (11,3.5) node [vertex] {} -- (12,3) node [vertex] {};
    \draw[rounded corners=0.3cm] (3.2,2.3) -- (7,0.8) -- (10,3.5) -- (11,4) -- (11.5,4.5) -- (11,5) -- (5,4.3) --(3.2,3.5) -- cycle;
    \node at (7.5,0.8) {$\tau^*$};
\end{tikzpicture}
\caption{Example of a tree $\tau\in \TT^{(h)}_{N;n,m_0}$ appearing in the expansion 
for the $m$ points correlation function, with $n=3$ and $m_0=m=3$. The subtree $\tau^*$ 
associated with $\t$ and with a choice of ${\bf P}$ such that $|P_{\bar v_0}|=|P_{\bar v_0}^A|=3$ 
is highlighted.} \label{specialTreeFig}
\end{figure}

Given these definitions and recalling that 
$d_v(P_v)=2-\frac12|P_v^\psi|-|P_v^A|-z(P_v)$, we can write the bound for 
$S_{\t,{\bf P}}(\xx_1,j_1;\cdots;\xx_m,j_m)$ as
\bea && |S_{\t,{\bf P}}(\xx_1,j_1;\cdots;\xx_m,j_m)|\le
C^{n+m}2^{h(2-m)}\Big[\!\!\!\prod_{v \in V_{nt}(\tau^*)}\!\!\! \!\! 2^{2(s_v^*-1)h_v} e^{-c 2^{h_v}\delta_v} \Big]
\cdot\label{4.basic}\\
&&\cdot\Big[\prod_{v\,{\rm not}\,{\rm e.p.}} 2^{(h_v-h_{v'})d_v(P_v)}\Big]
  \prod_{v \textup{ e.p.} } \left\{ \begin{array}{ll} |\l|^{c|P_{v}^\psi|}, & 
 {\rm if}\ v\ {\rm is}\ {\rm normal}, \ h_v=N+2 
\\ |\l|2^{\th (h_v-N)}, & 
{\rm if}\ v\ {\rm is}\ {\rm normal}, \ h_v<N+2\\
  1,& {\rm if}\ v\ {\rm is}\ {\rm special}\end{array} \right.
\nonumber\eea
This estimate is very similar to the bound Eq.(\ref{5.51tv}) for the renormalized kernels of the effective potential,
the most important difference consisting in the product 
$\prod_{v \in V_{nt}(\tau^*)}  2^{2(s_v^*-1)h_v} e^{-c 2^{h_v}\delta_v}$, where 
$\d_v=\d_v(\xx_1,\ldots,\xx_m)$ is the tree distance of the set $\xx^*_v:=\cup_{f\in P_v^A}\{\xx(f)\}$,
i.e. the length of the shortest tree graph on $\mathbb R^2$ connecting the points of $\xx^*_v$.
In this product, the factors  $2^{2(s_v^*-1)h_v}$ take into account the dimensional gain coming from the 
fact that we are {\it not} integrating over the space labels $\xx_i$ of the external fields (the gain is meant in 
comparison with 
Eq.(\ref{5.51tv}) where, on the contrary, we integrated over all the field variables); moreover, the factors 
$e^{-c 2^{h_v}\delta_v}$ come from the decaying factors $e^{-c2^{h_w}|\xx(\ell)-\xx'(\ell)|}$ associated with 
the propagators $g_\ell^{h_w}$ on the portion of spanning trees inside the cluster $v$, that is 
from the lines $\ell\in T_w$, $w\ge v$, see \cite[Section 2.3]{BFM07} for a few more details about how to 
extract these factors starting from the bounds on the kernels of the effective potential. Moreover, 
note that in the second line we estimated  the contribution from 
each endpoint of type $\n$ by $({\rm const.})|\l|2^{\th (h-N)}$, consistently with the bounds on $\n_h$ derived in the previous section.

The bound Eq.(\ref{4.basic}) is one of the basic ingredients from which the main results of this paper follow. 
It makes apparent that all the terms with at least one endpoint on scale $N$ or with one endpoint of type 
$\n$ are strongly suppressed, in the sense that they are proportional to a short memory factor 
$2^{\th (h-N)}$, which go to zero in the limit $N\to\infty$ (which corresponds to the scaling limit 
$a\to 0$): these are the 
terms leading to the correction term $R^{(a)}(\xx_1,j_1;\cdots;\xx_m,j_m)$ in  Theorem \ref{thm2}, as 
explained in more detail below. The dominant terms are those coming from trees $\t\in \TT^{(h)}_{N;0,m}$, 
which are of order 1 with respect 
to $\l$ (because the size of the special endpoints is measured by $Z^{(1)}_h$, which is analytically close to 1),
and whose dimensional bound does not vanish in the $N\to\infty$ limit. In the following, we show that: (1) 
the contributions to the $m$ point function from trees $\t\in \TT^{(h)}_{N;0,m}$ can be written 
as the expression in the second line of Eq.(\ref{1.43w}) plus a remainder, bounded by the r.h.s.\ of Eq.(\ref{10b}); 
(2) the sum of the contributions
from all the trees with $n\ge 1$ or $m_0<m$ can be bounded by the r.h.s.\ of Eq.(\ref{10b}). This will conclude the 
proof of the main results stated in the introduction.

\subsection{The dominant contributions}

As discussed above, the dominant contribution to the $m$ point energy correlation function come from trees 
$\t\in\TT^{(h)}_{N;0,m}$. Among these, we further distinguish the contributions from trees in $\TT^{(h)}_{N;0,m}$
with at least one endpoint on scale $N+2$, which will be discussed at the end of this subsection, and those
with all the endpoints on scales $<N+2$, which can be written as 
\be K(\xx_1,\ldots,\xx_m):=m!\!\!\sum_{h=h_\s-1}^{N}\,\sum^*_{\t\in\TT^{(h)}_{N;0,m}}\,
\sum_{\substack{{\bf P}\in\PP_\t:\\
|P_{v_0}|=|P_{v_0}^A|=m}}\!\!\!S_{\t,{\bf P}}(\xx_1,j_1;\cdots;\xx_m,j_m)\;,\label{4.3}\ee
where the $*$ on the sum indicates the constraint that all the endpoints are on scale $<N+2$. By summing over 
trees, and by explicitly computing the tree values, $K(\xx_1,\ldots,\xx_m)$  can be written in a way very similar to Eq.(\ref{1.maina}): 
\bea&&
K(\xx_1,\ldots,\xx_m)= -\frac{1}{2m}\Big(\frac{i}\p\Big)^m\cdot\label{eq:qFree_corr}
 \\
&&\qquad \cdot\sum_{\p\in\Pi_{\{1,\ldots,m\}}}\sum_{\substack{\o_1,\ldots,\o_m\\ h_1,\ldots, h_m}}
\Big[\prod_{j=1}^{m}\o_jZ^{(1)}_{\min\{h_j,h_{j+1}\}}
{g}^{(h_j)}_{\o_j,-\o_{j+1}}(\xx_{\p(j)}-\xx_{\p(j+1)})\Big]\;,\nonumber\eea
where the sum over the $h_j$'s runs over $h_\s\le h_j\le N$, with the convention that $g^{(h_\s)}$ is the 
propagator of $\psi^{(\le h_\s)}$.
Equation~\eqref{eq:qFree_corr} differs from the corresponding expression for the free system in the renormalization 
of the special endpoints, $Z^{(1)}_{\min\{h_j,h_{j+1}\}}$, and in the propagators. 

\begin{figure}
\centering
\subfloat[]{
\begin{tikzpicture}[decoration=snake] 
\node[specialEP,label=below right:$\xx_1$] (x1) at (0:2) {};
\node[specialEP,label=below:$\xx_2$] (x2) at (60:2) {};
\node[specialEP,label=below:$\xx_3$] (x3) at (120:2) {};
\node[specialEP,label=below left:$\xx_4$] (x4) at (180:2) {};
\node[specialEP,label=above:$\xx_5$] (x5) at (240:2) {};
\node[specialEP,label=above:$\xx_6$] (x6) at (300:2) {};
\draw (x1) .. controls +(90:0.75) and+(330:0.75) .. (x2) node[sloped,midway, above] {$h_1$};
\draw (x1) .. controls +(270:0.75) and+(30:0.75).. (x6) node[sloped,midway, above] {$h_0^*$} ; 
\draw (x5) .. controls +(330:0.75)  and +(210:0.75) .. (x6) node[sloped,midway, above] {$h_3$};
\draw (x2) .. controls +(150:0.75) and +(30:0.75) .. (x3) node[sloped,midway, above] {$h+1$} ;
\draw (x3) .. controls +(210:0.75) and +(90:0.75) .. (x4) node[sloped,midway, above] {$h_2$};
\draw (x4) .. controls +(270:0.75) and +(150:0.75) .. (x5) node[sloped,midway, above] {$h_4$};
\draw[decorate] (x1) -- + (0:1);
\draw[decorate] (x2) -- +(60:1);
\draw[decorate] (x3) -- +(120:1);
\draw[decorate] (x4) -- + (180:1.5);
\draw[decorate] (x5) -- + (240:1.5);
\draw[decorate] (x6) -- + (300:1.5);
\draw[rounded corners=0.2cm] (0.5,1.8) --(1,2.2) -- (2.5,0) -- (2,-0.5) -- cycle;
\draw[rounded corners=0.2cm] (-1,-2) -- (-2,-1) -- (-2.3,0) -- (-2,0.4) -- (-0.5,-1.8) -- cycle;
\draw[rounded corners=0.2cm] (1,-2.2) -- (0,-2.4) -- (-2,-2) -- (-2.8,0) -- (-2,0.6) -- (1.2,-1) -- (1.5,-1.5)-- cycle;
\draw[rounded corners=0.2cm] (1.5,-0.5) -- (-0.5,2) -- (-1.5,2.3) -- (-2.5,1.5)-- (-3,0) -- (-2.4,-2.4) -- (1,-2.7) -- (2,-1) -- cycle;
\end{tikzpicture}
}
\\
\subfloat[]{
\begin{tikzpicture}
    \foreach \x in {0,...,12}
    {
        \draw[very thin] (\x , 1) -- (\x , 7);
    }
    \draw (0,0.5) node (h) {$h$};
	\draw (h) +(5,0) node {$h_1$};
	\draw (h)+(6,0) node {$h_2$};
	\draw (h)+(8,0) node {$h_3$};
	\draw (h)+(10,0) node {$h_4$};
 \draw (h) + (4,0) node {$h_0^*$};
     \draw (-0.5,4) node {$r$};
    \draw (h) + (12,0) node {$N+2$};
    \draw (0,4) node (v0) {} -- (1,4) node  [vertex] {} -- (2,4) node [vertex] {} -- (3,4) node [vertex] (v1) {};
    \node [vertex] (v0star) at (4,4) [label=-95:$v_0^*$] {};
    \draw (v1) -- (v0star);
    \draw (v0star) --  (5,5) node [vertex] (x) {};
    \draw (5,5) -- ++ (1,1) node [specialEP,label=0:$\xx_1$] {};
    \draw (5,5) -- ++ (1,-1) node [specialEP,label=0:$\xx_2$] {};
    \draw (v0star) -- (5,3) node [vertex] {} -- (6,2) node [vertex] (v2) {} -- (7,1.5) node [specialEP,label=0:$\xx_3$] {};
    \draw (v2) \foreach \x in {7,8,9} {-- (\x,-1+\x/2) node [vertex] {}} -- (10,4) node (v3) [vertex] {};
   \draw (8,3) -- ++(1,-0.5) node [specialEP,label=0:$\xx_6$] {};
    \draw (v3) -- (11,4.5) node [specialEP,label=0:$\xx_4$] {};
    \draw (v3) -- (11,3.5) node [specialEP,label=0:$\xx_5$] {};
\end{tikzpicture}
}
\caption{(a) A diagram representing one of the terms in the expression in Equation~\eqref{4.3} for $K(\xx_1,\ldots,\xx_6)$, 
with the scale labels of the propagators shown.  The vertices are grouped into clusters according to scales of the connecting 
propagators.  (b) A GN tree $\t$ associated with this term; note that the branching points of the trees correspond to the 
clusters of vertices shown above, and that this tree is uniquely specified apart from the choice of which endpoint is 
identified with which coordinate.}\label{qfreeTreeFig}
\end{figure}
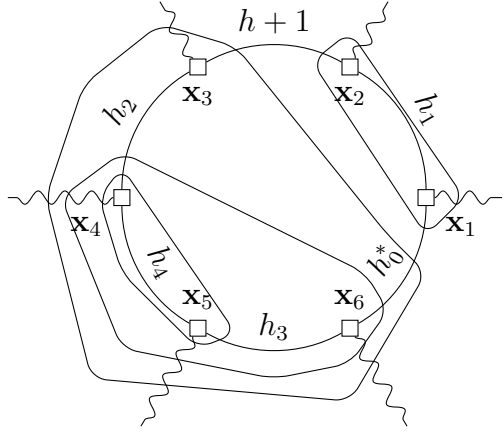
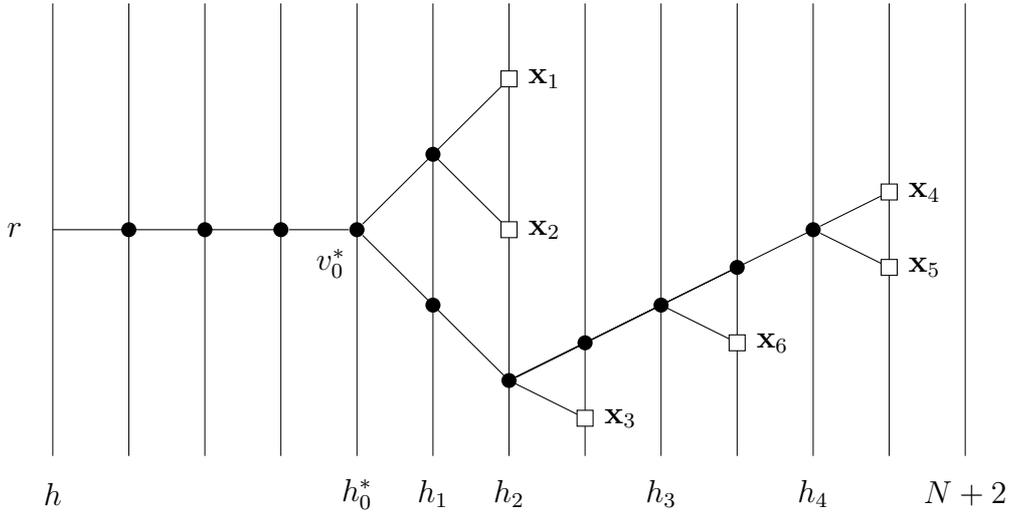

First of all, using Eq.(\ref{3.60uy}), 
we can rewrite $Z^{(1)}_{\min\{h_j,h_{j+1}\}}=Z^{(1)}_{-\io}(\l)+O(|\l|2^{\th h})$. Similarly, using 
the explicit expression for $g^{(h)}$, Eq.(\ref{3.36yt}), and the bounds 
Eqs.(\ref{3.53a}),(\ref{3.61uy}), we rewrite (after having taken the thermodynamic limit $L\to\infty$)
$g^{(h)}(\xx-\yy)={\mathfrak g}^{(h)}(\xx-\yy)+r_h(\xx-\yy)$ where, for all $h_\s<h\le N$,
\be \mathfrak g^{(h)}(\xx)=\frac{1}{Z_{-\io}(\l)}\int\limits_{[-\frac{\pi}{a},\frac{\pi}{a}]^2} 
\frac{d\kk}{2\p}\frac{e^{-i\kk\xx} f_h(\kk)}{|D_a(\kk)|^2+\big[Z^*(\l)\s\big]^2}
\begin{pmatrix}D_a^+(\kk)&iZ^*(\l) \s\\ -iZ^*(\l)\s& D_a^-(\kk)\end{pmatrix}\;,
\label{3.36ym}\ee
where $r_h$ is a correction bounded as
\be ||r_h(\xx-\yy)||\le C|\l|2^{\th(h-N)}2^{h}e^{-c 2^h|\xx-\yy|}\;.\label{4.7}\ee
If $h=h_\s$ we still write $g^{(h_\s)}(\xx)={\mathfrak g}^{(h_\s)}(\xx)+r_{h_\s}(\xx)$, with 
$r_{h_\s}$ bounded as in Eq.(\ref{4.7}) and ${\mathfrak g}^{(h_\s)}(\xx)$ defined by an expression 
analogous to Eq.(\ref{3.36ym}), with the only difference that $f_{h_\s}(\kk)$ is replaced by 
$\sum_{k\le h_\s}f_k(\kk)=e^{-2^{-2h_\s}\kk^2}$.

Correspondingly, we rewrite Eq.(\ref{eq:qFree_corr}) as the term obtained by replacing 
all the factors $Z^{(1)}_{\min\{h_i,h_{i+1}\}}$ by $Z^{(1)}_{-\io}(\l)$ and all the propagators $g^{(h)}$ by $\mathfrak g^{(h)}$ plus a remainder, 
which is the sum of all the terms involving at least one factor $Z^{(1)}_h-Z^{(1)}_{-\io}$, or one 
factor $r_h(\xx)$:
\be
K(\xx_1,\ldots,\xx_m)= K^{(0)}(\xx_1,\ldots,\xx_m)+K'(\xx_1,\ldots,\xx_m)\ee
where 
\bea && K^{(0)}(\xx_1,\ldots,\xx_m)=
-\frac{1}{2m}\Big(\frac{i}\p\Big)^m\cdot\label{eq:qFree_corr2}
 \\
&&\qquad \cdot\sum_{\p\in\Pi_{\{1,\ldots,m\}}}\sum_{\substack{\o_1,\ldots,\o_m\\ h_1,\ldots, h_m}}
\Big[\prod_{j=1}^{m}\o_jZ_{-\io}^{(1)}(\l)
{\mathfrak g}^{(h_j)}_{\o_j,-\o_{j+1}}(\xx_{\p(j)}-\xx_{\p(j+1)})\Big]\;.\nonumber\eea
Using the fact that $\sum_{h\le N}f_h(\kk)=1$ we see that Eq.(\ref{eq:qFree_corr2})
is equal to the expression in the second line of Eq.(\ref{1.43w}),
provided that $\bar Z(\l)$ in Eq.(\ref{1.42q})
is fixed to be $\bar Z(\l)=Z_{-\io}^{(1)}(\l)/Z_{-\io}(\l)$.
Moreover, using the bounds Eqs.(\ref{3.3a2}), (\ref{3.60uy}), and (\ref{4.7}) on $g^{(h)}, Z^{(1)}_h-Z^{(1)}_{-\io}$,
and $r_h$, respectively, 
the correction term $K'(\xx_1,\ldots,\xx_m)$ can be bounded as:
\be
|K'(\xx_1,\ldots,\xx_m)|\le C^m |\l|m! \sum_{h\le N} 2^h\!\!\sum_{\t\in \TT^{(h)}_{N;0,m}} \!\!
2^{\th (h_{v_m}-N)}
\Big[\!\!\prod_{v\in V_{nt}(\t)}\!\! 2^{(s_v-1) h_v} e^{-c 2^{h_v} \d_v}\Big] \;, \label{alpha_bound}\ee
where the sum over trees has been re-introduced in order to properly keep
track of the hierarchical scale structure induced by the choice of the scale labels of the propagators, as explained in 
the caption to Fig.\ref{qfreeTreeFig}.
Moreover $v_m$ is the rightmost non trivial vertex of $\t$ (e.g., $v_m={v_3}$ in the example of 
Fig.\ref{qfreeTreeFig}) and the set $V_{nt}(\t)$ is the set of non trivial vertices of $\t$. 
Note that $\sum_{v\in V_{nt}(\t)}2^{h_v}\d_v\ge 2^{h_0^*}D$, where $h_0^*:=h_{v_0^*}$ and
$D=D_{\underline\xx}$ is the diameter of $\underline\xx=\{\xx_1,\ldots,\xx_m\}$. Therefore, 
\bea
|K'(\xx_1,\ldots,\xx_m)|&\le& C^m |\l|m! \sum_{h\le N} 2^h\!\!\sum_{\t\in \TT^{(h)}_{N;0,m}} e^{-c'2^{h_0^*}D}
2^{\th (h_{v_m}-N)}\cdot\nonumber\\
&&\cdot
\Big[\!\!\prod_{v\in V_{nt}(\t)}\!\! 2^{(s_v-1) h_v} e^{-c' 2^{h_v} \d_v}\Big] \;, \label{alpha_bound_bis}\eea
where $c'$ can be chosen to be half of the constant $c$ in Eq.(\ref{alpha_bound}).
We then separate the sum over $\t$ into a sum over the various scale labels, and a sum over the remaining
structure of the tree, which is indexed by an unlabeled tree $t$ having the same endpoints as $\t$.
We denote by $\mathfrak{T}_{0,m}$ the set of unlabeled trees with $n=0$ normal endpoints and $m$ special 
endpoints. 
Then introducing the variable $ j = h-h_0^*$, ignoring all the constraints in the sum over scales other than 
$h< h_0^*$,
\bea && |K'(\xx_1,\ldots,\xx_m)|\le C^m |\l|m!  2^{-\th N}\sum_{t\in{\mathfrak T}_{0,m}} 
\sum_{j<0} 2^{j} \sum_{h_0^*=-\io}^{\io} 2^{s_{v_0^*} h_0^*} e^{-c'2^{h_0^*} (D+\d_{v_0^*})} \cdot\nonumber\\
&&\cdot\sum_{h_{v_m}=-\io}^{\io}
2^{h_{v_m}(s_{v_m}-1+\th)} e^{-c' 2^{h_{v_m}} \d_{v_m}}
\Big[\prod_{\substack{v  \in V_{nt}(\t):\\ v\neq v_0^*,v_m}}\Big(
\sum_{h_v=-\io}^\io 2^{(s_v-1) h_v} e^{-c'2^{h_v}\d_v} \Big)\Big]\;.
\nonumber\eea
Now, the sum over $j$ gives a constant and we perform the other sums as follows.
Letting $-h_{\d_v}$ be the integer part of $\log_2 {\d_v}$ so that $ 2^{-h_{\d_v}}\le \d_v \le 2^{(1-h_{\d_v})}$, 
then for any $\a_v > 0$,
\be\sum_{h_v=-\io}^\io 2^{\alpha_v h_v} e^{-c' 2^{h_v}\d_v} \le 2^{\a_v}\d_v^{-\a_v} 
\sum_{h_v=-\io}^\io 2^{\a_v (h_v-h_{\d_v})} e^{-c'2^{h_v-h_{\d_v}}}=:2^{\a_v}\d_v^{-\a_v} c_{a_v}\;,
\label{alpha_sum}\ee
where 
\be
c_\a= \sum_{h=-\io}^\io 2^{\a h} e^{-c' 2^{h} }\le \int_{-\io}^\io\!\!\! dx\, 2^{\a x} e^{-c' 2^{x} } + \max_y y^{\a} e^{-c'y} = 
(c')^{-\a}\Big[\frac{\G(\a)}{\log 2}  + \left( \frac{ \a}{e} \right)^{ \a}\Big]\ee
In particular, if $\a\ge 1$, then $c_\a\le C^\a \a^\a$ for a suitable $C>0$. Therefore,
letting $\d=\d_{\underline\xx}=\min_{1 \le i < j \le m} |\xx_i - \xx_j|$ and 
using also the fact that $\d_v\ge (s_v-1)\d$, 
\bea && |K'(\xx_1,\ldots,\xx_m)|\le (C')^m |\l|m!  2^{-\th N}\sum_{t\in{\mathfrak T}_{0,m}} 
\Big(\frac{s_{v_0^*}}{D+(s_{v_0^*}-1)\d}\Big)^{s_{v_0^*}} \cdot\nonumber\\
&&\cdot \Big(\frac{s_{v_m}-1+\th}{(s_{v_m}-1)\d}\Big)^{s_{v_m}-1+\th} 
\Big[\prod_{\substack{v  \in V_{nt}(t):\\ v\neq v_0^*,v_m}}\Big(
\frac{s_{v}-1}{(s_{v}-1)\d}\Big)^{s_{v}-1}\Big]\;.
\nonumber\eea
Recall that $s_{v_0^*}\ge 2$: under this condition it is easy to check that 
\be \Big(\frac{s_{v_0^*}}{D+(s_{v_0^*}-1)\d}\Big)^{s_{v_0^*}}\le\frac{4\d^{2-s_{v_0^*}}}{D^2}\;. \ee
Finally, recalling that $\sum_{v\in V_{nt}(t)}(s_v-1)=m-1$ and noting 
that the sum over $t$ simply gives the number of unlabeled trees with $m$ 
endpoints (which is no more than $4^m$), we get
\be|K'(\xx_1,\ldots,\xx_m)|\le (C'')^m |\l|m! \frac1{\d^m} \Big(\frac{a}{\d}\Big)^{\th}\Big(\frac{\d}{D}\Big)^2\;,
\label{4.17rg}
\ee
which is the desired estimate on the contributions to the $m$ points function
coming from the trees in $\TT^{(h)}_{N;0,m}$. 

The contributions to the $m$ point function coming from trees $\t\in\TT^{(h)}_{N;0,m}$ with at least 
one special endpoint on scale $N+2$ can be bounded by a similar expression, 
with the difference that the factor $|\l|$ in the r.h.s.\ of Eq.(\ref{4.17rg}) should be replaced by 1. We do not 
belabor the details of this computation, which is completely analogous to the one leading to Eq.(\ref{4.17rg}).
Let us just remark that these are the terms responsible of the fact that the r.h.s.\ of Eq.(\ref{10b}) is of order 1 w.r.t.\ 
$\l$. Of course, if desired, these contributions can be written explicitly (possibly modulo further corrections bounded 
as in Eq.(\ref{4.17rg})), by making use of the explicit expression of the source term $\lis\BB(\psi,\c,\bsA)$, which is 
obtained from Eq.(\ref{2.en}) by performing the linear change of variables Eq.(\ref{2.lin}), and of  
the explicit expression of the propagator of the $\c$ field, Eq.(\ref{2.gchi}). If we decided to isolate these terms from 
the correction $R^{(a)}$, then the remaining contributions would be bounded as in Eq.(\ref{4.17rg}), 
as it follows from Eq.(\ref{4.17rg}) itself and from the discussion in the following subsection.

\subsection{The subdominant contributions}

It remains to bound the trees with $n \ge 1$,
\bea &&K''(\xx_1,j_1;\cdots;\xx_m,j_m):=\\
&&\quad :=m!\sum_{h=h_\s-1}^N \sum_{\substack{n\ge 1\\ m_0\le m}}
\,\sum_{\t\in\TT^{(h)}_{N;n,m_0}}\,\sum_{\substack{{\bf P}\in\PP_\t:\\
|P_{v_0}|=|P_{v_0}^A|=m}}S_{\t,{\bf P}}(\xx_1,j_1;\cdots;\xx_m,j_m)\;,\nonumber\eea
where $S_{\t,{\bf P}}(\xx_1,j_1;\cdots;\xx_m,j_m)$ can be estimated as in Eq.(\ref{4.basic}). For simplicity,
we start by looking at the contributions to $K''$ coming from trees with $m=m_0$, to be called $K'''$:
\bea &&K'''(\xx_1,j_1;\cdots;\xx_m,j_m):=\\
&&\quad :=m!\sum_{h=h_\s-1}^N \sum_{n\ge 1}
\,\sum_{\t\in\TT^{(h)}_{N;n,m}}\,\sum_{\substack{{\bf P}\in\PP_\t:\\
|P_{v_0}|=|P_{v_0}^A|=m}}S_{\t,{\bf P}}(\xx_1,j_1;\cdots;\xx_m,j_m)\;,\nonumber\eea 
which is bounded as:
\bea &&|K'''(\xx_1,j_1;\cdots;\xx_m,j_m)|\le m!\sum_{h=h_\s-1}^N \sum_{n\ge 1}
\,\sum_{\t\in\TT^{(h)}_{N;n,m}}\sum_{\substack{{\bf P}\in\PP_\t:\\
|P_{v_0}|=|P_{v_0}^A|=m}} C^{n+m}\cdot\nonumber\\
&&\hskip3.truecm\cdot 2^{h(2-m)}2^{\th (h_+-N)}
\Big[\prod_{v\in E_n(\t) }(C|\l|)^{\max\{1,c|P_v^\psi|\}}\Big]\cdot\label{4.20g}\\
&&\hskip3.truecm\cdot
\Big[\!\!\!\prod_{v \in V_{nt}(\tau^*)}\!\!\! \!\! 2^{2(s_v^*-1)h_v} e^{-c 2^{h_v}\delta_v} \Big]
\Big[\prod_{v\,{\rm not}\,{\rm e.p.}} 2^{(h_v-h_{v'})d_v(P_v)}\Big]\;,\nonumber\eea
where $h_+$ is the scale of the rightmost normal endpoint and $E_n(\t)$ is the set of normal 
endpoints of $\t$. We now want to reduce this expression to a simplified form as close as possible to 
Eq.(\ref{alpha_bound}) and then bound the simplified expression by a strategy similar to the one used in the 
previous subsection. The first step consists in ``pruning'' the tree $\tau$ of the branches that are not in $\t^*$.
To this purpose, we make the following rearrangement. Let $\mathcal{T}^{*}_{N,h_0^*;m}$ be the set of labelled
trees with $m$ special endpoints, with no normal endpoints, and whose leftmost vertex $v_0^*$ is nontrivial 
and on scale $h_0^*$. If $\t^*\in\mathcal{T}^*_{N,h_0^*;m}$ and $h<h_0^*$, 
let $\mathcal{T}_N^{(h)}(\tau^*;n)\subseteq\TT^{(h)}_{N;n,m}$ be the set of trees $\tau$ with root scale $h$ and $n$
normal endpoints such that $\tau^*$ is the subtree of $\tau$ connecting its special endpoints. Fix any $\th \in (0,1)$,
$\e \in (0,1/2)$ and $\phi \in (\max\{0,1-\th-3\e\} , 1 - \th)$ and define, recalling that $m_v=|P_v^A|$,
\be \tilde d_v=\tilde {d}_v(P_v^A) := \left\{ 
\begin{array}{ll}
1+\e -m_v, &  m_v > 1 \\
-\th - \phi, & m_v = 1 \\
0, & m_v=0
\end{array}\right.\;. \ee
It is easy to check that, if $|P_v^\psi|\ge 2$, and recalling that $d_v(P_v)=\min\{1-\frac12|P_v^\psi|-m_v,-1\}$, then
\be d_v-\tilde d_v=d_v(P_v)-\tilde d_v(P_v^A)\le -\frac16(1-\th-\phi)|P_v^\psi|-\th \d_{m_v,0}\;.\label{4.22as}\ee
It is convenient to rewrite Eq.(\ref{4.20g}) as
\bea
&&|K'''(\xx_1,j_1;\cdots;\xx_m,j_m)|\le C^m m!\sum_{h=h_\s-1}^N 2^{h(2-m)}\sum_{h<h_0^*\le N+1}\,
\sum_{\tau^* \in \mathcal{T}^*_{N,h_0^*;m}}\cdot\nonumber\\
&&\cdot2^{(1+\e-m)(h_0^*-h)}\Big[
\prod_{v \in V_{nt}(\tau^*)}
2^{2(s_v^*-1)h_v} e^{-c 2^{h_v}\delta_v} \Big]
\Big[\prod_{v \in V(\tau^*)} 2^{\tilde{d}_v(h_v-h_{v'})} \Big]
\sum_{n\ge 1}C^{n}\cdot\nonumber\\
&&\cdot\hskip-.5truecm
\sum_{\t\in\TT^{(h)}_N(\t^*;n)}\hskip-.5truecm 2^{\th (h_+-N)}\hskip-.5truecm
\sum_{\substack{{\bf P}\in\PP_\t:\\
|P_{v_0}|=|P_{v_0}^A|=m}}\hskip-.5truecm
\Big[\prod_{v\in V(\t)} 2^{(d_v-\tilde d_v)(h_v-h_{v'})}\Big]
\Big[\prod_{v\in E_n(\t) }(C|\l|)^{\max\{1,c|P_v^\psi|\}}\Big]\;,\nonumber
\eea
where $V(\t)$ is the set of vertices of $\t$ that are not endpoints, and similarly for $V(\t^*)$. Using 
Eq.(\ref{4.22as}) we can perform the sum over $\t\in \TT^{(h)}_N(\t^*;n)$ and over ${\bf P}\in\PP_\t$, which gives
\bea && \sum_{\t\in\TT^{(h)}_N(\t^*;n)}\hskip-.4truecm 
2^{\th (h_+-N)}\hskip-.6truecm\sum_{\substack{{\bf P}\in\PP_\t:\\
|P_{v_0}|=|P_{v_0}^A|=m}}\hskip-.4truecm
\Big[\prod_{v\in V(\t)} 2^{(d_v-\tilde d_v)(h_v-h_{v'})}\Big]
\Big[\prod_{v\in E_n(\t) }(C|\l|)^{\max\{1,c|P_v^\psi|\}}\Big]\nonumber\\
&&\qquad \le C^{n+m}|\l|^n2^{\th (h_m - N)}\;,\label{4.23}\eea
where $h_m$ is the highest scale in $\tau^*$. Plugging Eq.(\ref{4.23}) back into the previous bound on $K'''$,
and summing over $n\ge 1$ and over $h<h_0^*$ gives:
\bea
&&|K'''(\xx_1,j_1;\cdots;\xx_m,j_m)|\le C^m|\l| m!\sum_{h_0^*\le N+1} 
2^{h_0^*(2-m)}
\sum_{\tau^* \in \mathcal{T}^*_{N,h_0^*;m}}2^{\th (h_m - N)}\cdot\nonumber\\
&&\cdot\Big[
\prod_{v \in V_{nt}(\tau^*)}
2^{2(s_v^*-1)h_v} e^{-c 2^{h_v}\delta_v} \Big]
\Big[\prod_{v \in V(\tau^*)} 2^{\tilde{d}_v(h_v-h_{v'})} \Big]\;.\label{4.24}
\eea
We now go from the sum over $\tau^*$ to an estimate in terms of a ``contracted tree", called $\t^{**}$,
which is obtained from $\t^*$ by removing the set $E$ of vertices which precede exactly one special endpoint, see 
Figure~\ref{EandTauFig}. 
\begin{figure}
\centering
\subfloat[ ]{
    \begin{tikzpicture}
        \foreach \x in {4,...,11}
        {
            \draw[very thin] (\x , 0) -- (\x , 6);
        }
        \node [vertex] (v0star) at (4,3) [label=-95:$v_0^*$] {};
        \draw (3,3) node {} -- (4,3) node {};
        \draw (v0star) --  (5,3.5) node [specialEP] (x) {};
        \draw (v0star) -- (5,2.5) node [vertex] {} -- (6,2) node [vertex] (v2) {} -- (7,1.5) node [specialEP] {};
        \draw (v2) \foreach \x in {7,8,9} {-- (\x,-1+\x/2) node [vertex] {}} -- (10,4) node (v3) [vertex] {};
        \draw (v3) -- (11,4.5) node [specialEP] {};
        \draw[rounded corners=0.3cm] (7,2.2) -- (10,3.7) -- (10.4,4) -- (10,4.3) -- (7,2.8) -- (6.6,2.5) -- cycle;
        \node at (10,-0.5) {$h_m$};
    \end{tikzpicture}
}
\hfill
\subfloat[ ]{
    \begin{tikzpicture}
        \foreach \x in {4,...,7}
        {
            \draw[very thin] (\x , 0) -- (\x , 6);
        }
        \node [vertex] (v0star) at (4,3) [label=-95:$v_0^*$] {};
        \draw (v0star) --  (5,3.5) node [specialEP] (x) {};
        \draw (3,3) node {} -- (4,3) node {};
        \draw (v0star) -- (5,2.5) node [vertex] {} -- (6,2) node [vertex] (v2) {} -- (7,1.5) node [specialEP] {};
        \draw (v2) -- (7,2.5) node [specialEP] {};
        \node at (6,-0.5) {$h_m^{**}$};
    \end{tikzpicture}
}
\caption{(a) The tree $\tau^*$ of Figure~\ref{specialTreeFig}, with the set $E$ highlighted.  (b) The corresponding 
``contracted tree" $\tau^{**}$.} \label{EandTauFig}
\end{figure}
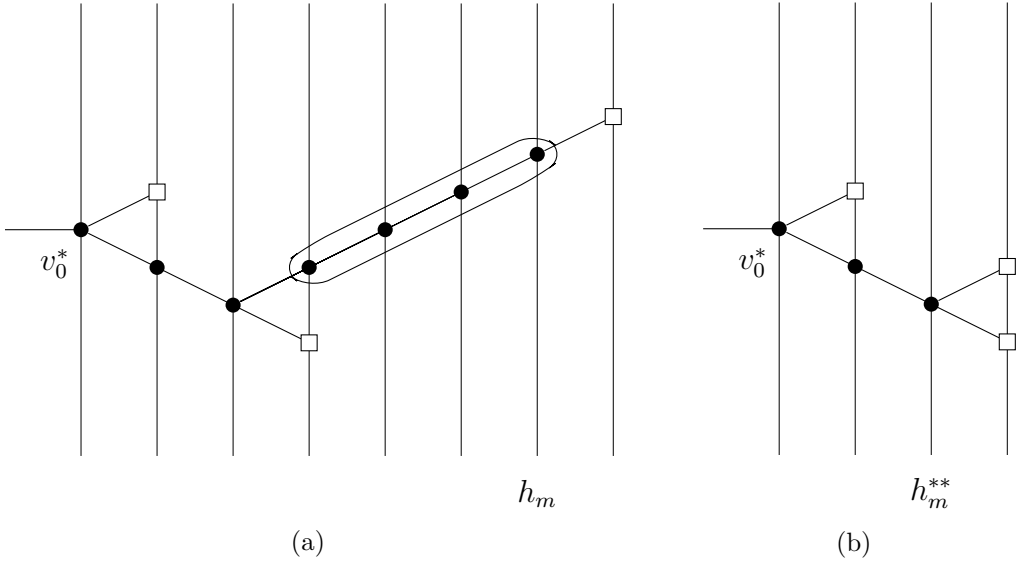
The sum over $\tau^*$ can then be expressed as the sum over $\tau^{**}$ (trees with only special endpoints, where 
each endpoint is attached to a branching point) and the sum over compatible $E$ (i.e. over the number of vertices 
inserted before each special endpoint); then Eq.(\ref{4.24}) becomes (calling $\TT^{**}_{N,h_0^*;m}$ the set of
contracted trees $\t^{**}$ with root on scale $h_0^*-1$ and $m$ endpoints)
\bea &&|K'''(\xx_1,j_1;\cdots;\xx_m,j_m)|\le C^m|\l| m!\sum_{h_0^*\le N+1} 
2^{h_0^*(2-m)}
\sum_{\tau^{**} \in \mathcal{T}^{**}_{N,h_0^*;m}}\cdot\nonumber\\
&&\qquad \cdot\Big[
\prod_{v \in V_{nt}(\tau^{**})}
2^{2(s_v^*-1)h_v} e^{-c 2^{h_v}\delta_v} \Big]
\Big[\prod_{v \in V(\tau^{**})} 2^{\tilde{d}_v(h_v-h_{v'})} \Big]\cdot\nonumber\\
&&\qquad \cdot\sum_{E\sim \t^{**}}
\Big[\prod_{v\in E}2^{-(\th +\phi)(h_v-h_{v'})}\Big]
2^{\th (h_m - N)}\;.\eea
We use the factors $2^{-\phi(h_v-h_{v'})}$ to perform the summation over $E$, and use the remaining factor to 
replace $h_m$ with the scale of the last branching point of $\tau^{**}$, which we denote by $h_m^{**}$, see 
Fig.\ref{EandTauFig}:
\be \sum_{E\sim \t^{**}}
\Big[\prod_{v\in E}2^{-(\th +\phi)(h_v-h_{v'})}\Big]
2^{\th (h_m - N)}\le C^m  2^{\th (h^{**}_m - N)}\;.\ee
This leaves us with the estimate:
\bea &&|K'''(\xx_1,j_1;\cdots;\xx_m,j_m)|\le C^m|\l| m!\sum_{h_0^*\le N+1} 
2^{h_0^*(2-m)}
\sum_{\tau^{**} \in \mathcal{T}^{**}_{N,h_0^*;m}}\cdot\nonumber\\
&&\cdot\Big[
\prod_{v \in V_{nt}(\tau^{**})}
2^{2(s_v^*-1)h_v} e^{-c 2^{h_v}\delta_v} \Big]
\Big[\prod_{v \in V(\tau^{**})} 2^{\tilde{d}_v(h_v-h_{v'})} \Big] 2^{\th (h^{**}_m - N)}\;,\label{4.27}
\eea
which is similar to Eq.(\ref{alpha_bound}). We would be tempted to proceed as we did after Eq.(\ref{alpha_bound}),
that is by ignoring all the constraints in the sum over the scale labels and reduce the r.h.s.\ of Eq.(\ref{4.27}) to a 
product of terms of the form $\sum_{h=-\io}^{\io}2^{\a_v h_v}e^{-c 2^h\d_v}$, which is fine provided that $\a_v>0$. 
However, here the exponent $\a_v$ associated with $v_0^*$ is equal to $2s_{v_0^*}^*-m$, which in general 
is negative. 

To rectify this we engage in a further rearrangement of the estimate, for which we introduce the following notation.  
Let $S_v^{*,2}$ denote, for $v \in V_{nt}(\tau^{**})$, the set of branching points following $v$ but not following 
any vertex $w\in V_{nt}(\t^{**})$ such that $w>v$
(see Figure~\ref{Svstar_fig}), 
$S_v^{*,1}$ the set of endpoints immediately following $v$, and $S_v^* = S_v^{*,1} \cup S_v^{*,2}$; also let $s_v^{*}=|S_v^{*}|$ 
and similarly for $s_v^{*,1}$ and $s_v^{*,2}$.  
\begin{figure}
\centering
\begin{tikzpicture}
    \foreach \x in {3,...,8}
    {
        \draw[very thin] (\x , 0) -- (\x , 7);
    }
    \draw (2,4) node 
     {} -- (3,4) node [vertex,label=-95:$v$] (v) {};
    \draw (v) -- (4,5) node [vertex] (v2) {} -- (5,6) node [specialEP] {};
    \draw (v2) -- (5,5) node [specialEP] {};
    \draw (v) -- (4,3) node [vertex] {} -- (5,2) node [vertex] (v1) {};
    \draw (v) -- (4,1) node [specialEP] {};
    \draw [rounded corners=0.3cm] (3.5,1) -- (4,1.5) -- (4.5,1) -- (4,0.5) -- cycle;
    \node at (3.5,0.5) {$S_v^{*,1}$};
    \draw (v1) -- (6,1) node [specialEP] {};
    \draw (v1) -- (6,3) node [vertex] {} -- (7,4) node [vertex] (v2) {} -- (8,5) node [specialEP] {};
    \draw (v2) -- (8,3) node [specialEP] {};
    \draw[rounded corners=0.3cm] (3.5,5) -- (4,5.5) -- (4.5,5) -- (5.5,2) -- (5,1.5) -- (4.5,2) -- (4.5,3.5) -- cycle;
    \node at (3.5,5.5) {$S^{*,2}_v$};
\end{tikzpicture}
\caption{Example of the sets $S_v^{*,1}$ and  $S_v^{*,2}$.} \label{Svstar_fig}
\end{figure}
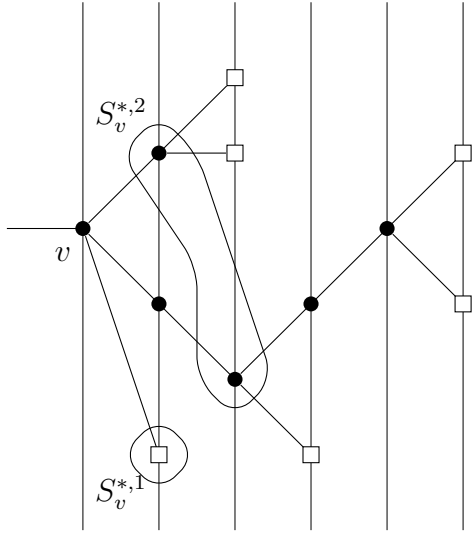
Since we have $\tilde d_v = 1+\e-m_v$ for all non-endpoint vertices, we have the following estimate  for each $v \in S_{v_0^*}^{*,2}$,
\be
\prod_{v_0^* < w \le v} 2^{\tilde d_w (h_w-h_{w'})} 
= 2^{(1+\e-m_v)(h_v-h_0^*)}
\ee
which since $\sum_{v \in S^{*,2}_{v_0^*}} (m_v -1-\e) = m-s_{v_0^*}^{*,1}-(1+\e) s_{v_0^*}^{*,2}=
m-(1+\e)s^*_{v_0^*}+\e s_{v_0^*}^{*,1}$ implies
\bea && 2^{(2s_{v_0^*}^*-m)h_0^*}\prod_{v \in S_{v_0^*}^{*,2}} \Big(2^{2(s_v^* -1)h_{v}} \prod_{v_0^* < w \le v} 
2^{\tilde d_w(h_w-h_{w'})} \Big) =\nonumber\\ &&= 2^{[(1-\e)s_{v_0^*}^*+\e s^{*,1}_{v_0^*}] h_0^*} 
\prod_{v \in S_{v_0^*}^{*,2}} 2^{(2s_v^*-m_{v}-1+\e)h_{v}}\eea
We now have a factor of $2^{(2s_v^*-m_v-1+\e)h_v}$ for each $v\in S_{v_0^*}^{*,2}$, and this can still 
produce a divergence.  For these vertices we have
\bea && 2^{(2s_{v}^*-m_v -1+\e)h_v}\prod_{u \in S_{v}^{*,2}} \Big(2^{2(s_u^* -1)h_{u}} 
\prod_{v < w \le u} 2^{\tilde{d}_w(h_w-h_{w'})} \Big) =\nonumber\\
&& = 2^{[(1-\e)(s_{v}^* -1) + \e s_v^{*,1}] h_v} \prod_{u \in S_{v}^{*,2}} 2^{(2s_u^*-m_u-1+\e)h_u}\eea
which we apply inductively to all branching vertices to obtain
\be|K'''(\xx_1,j_1;\cdots;\xx_m,j_m)|\le C^m|\l| m!\,2^{-\th N}\hskip-.4truecm\sum_{\substack{h_0^*\le N+1\\
 \tau^{**} \in \mathcal{T}^{**}_{N,h_0^*;m}}}\hskip-.2truecm\Big[\hskip-.2truecm\prod_{v\in V_{nt}(\t^{**})}
2^{\a_v h_v}e^{-c 2^{h_v}\d_v}\Big]\;,\label{4.31}\ee
with 
\be\a_v = \left\{
\begin{array}{ll}
(1-\e)s_{v}^*+\e s_v^{*,1}, & v=v_0^* \\
(1-\e)(s_{v}^* -1) + \e s_v^{*,1}+\th, & v=v_m^{**} \\
(1-\e)(s_{v}^* -1) + \e s_v^{*,1}, & \textup{otherwise}
\end{array}
\right.\ee
Note that we then have $\alpha_v \ge 1/2$ and $\sum \alpha_v = m+\theta$.
We can now sum this in the same way as explained after Eq.\eqref{alpha_bound}; noting that  
$\a_{v_0^*} \ge 2-2\e$, this gives
\be|K'''(\xx_1,j_1;\cdots;\xx_m,j_m)|\le C^m|\l| m!\,\frac1{\d^m}\Big(\frac{a}{\d}\Big)^\th \Big(\frac{\d}{D}\Big)^{2-2\e}\;,\label{4.34}\ee
which is valid for any $\th\in(0,1)$ and any $\e \in (0,1/2)$, but that the ``constants'' depend on $\th,\e$, so we cannot take $\th\to 1$ or $\e \to 0$.

The contributions to the $m$ points energy correlations from trees with $n\ge 1$ normal endpoints and 
$m_0<m$ special endpoints can be estimated via an analogous procedure and we are lead again to a bound 
like Eq.(\ref{4.34}). We do not belabor the details of this computation here. Therefore, the proofs of Theorems \ref{mainprop} and \ref{thm2} are complete. \\

{\bf Acknowledgements.} We acknowledge financial support from the
ERC Starting Grant CoMBoS-239694.


\end{document}